\newtheorem{property}{Property}
\newtheorem{theorem}{Theorem}
\newtheorem{lemma}{Lemma}
\newtheorem{condition}{Condition}
\newcommand{\bx}{\mathbf{x}}
\newcommand{\bp}{\mathbf{p}}
\newcommand{\bu}{\mathbf{u}}
\newcommand{\cX}{\mathcal{X}}
\newcommand{\cB}{\mathcal{B}}
\newcommand{\bP}{\mathbf{P}}
\newcommand{\cC}{\mathcal{C}}
\def\argmax{\operatorname*{argmax\,}}
\begin{document}

\title{Coresets for Triangulation}

\author{Qianggong~Zhang and~Tat-Jun~Chin
\IEEEcompsocitemizethanks{\IEEEcompsocthanksitem The authors are with the School of Computer Science, The University of Adelaide, Adelaide,
SA, 5000, Australia.\protect\\
E-mail: \{qianggong.zhang, tat-jun.chin\}@adelaide.edu.au
}
}

%
%

\markboth{Journal of \LaTeX\ Class Files,~Vol.~14, No.~8, August~2015}%
{Shell \MakeLowercase{\textit{et al.}}: Bare Demo of IEEEtran.cls for Computer Society Journals}
%




\IEEEtitleabstractindextext{%
\begin{abstract}
Multiple-view triangulation by $\ell_\infty$ minimisation has become established in computer vision. State-of-the-art $\ell_\infty$ triangulation algorithms exploit the quasiconvexity of the cost function to derive iterative update rules that deliver the global minimum. Such algorithms, however, can be computationally costly for large problem instances that contain many image measurements, e.g., from web-based photo sharing sites or long-term video recordings. In this paper, we prove that $\ell_\infty$ triangulation admits a \emph{coreset approximation} scheme, which seeks small representative subsets of the input data called \emph{coresets}. A coreset possesses the special property that the error of the $\ell_\infty$ solution on the coreset is within known bounds from the global minimum. We establish the necessary mathematical underpinnings of the coreset algorithm, specifically, by enacting the stopping criterion of the algorithm and proving that the resulting coreset gives the desired approximation accuracy. On large-scale triangulation problems, our method provides theoretically sound approximate solutions. Iterated until convergence, our coreset algorithm is also guaranteed to reach the true optimum. On practical datasets, we show that our technique can in fact attain the global minimiser much faster than current methods.
\end{abstract}

\begin{IEEEkeywords}
Coresets, approximation, generalised linear programming, multiple view geometry, triangulation.
\end{IEEEkeywords}}

\maketitle

\IEEEdisplaynontitleabstractindextext

%
\IEEEpeerreviewmaketitle

\IEEEraisesectionheading{\section{Introduction}\label{sec:introduction}}

\IEEEPARstart{W}{ith} the basic principles and algorithms of structure-from-motion well established, researchers have begun to consider large-scale reconstruction problems involving millions of input images. Arguably such large-scale problems, which arise from, e.g., photo sharing websites or long-term video observations in robotic exploration, are more common and practical. The significant problem sizes involved in such settings, however, compel practitioners to either use distributed computational architectures (e.g., GPU) to perform the required optimisation, or accept approximate solutions for the reconstruction.

This paper contains a \emph{theoretical contribution} under the second paradigm. We introduce a \emph{coreset approximation} scheme (more below) and prove its validity for multiple view 3D reconstruction, specifically for triangulation.

Triangulation is the task of estimating the 3D coordinates of a scene point from multiple 2D image observations of the point, given that the pose of the cameras are known~\cite{hartley1997triangulation}. The task is of fundamental importance to 3D vision, since it enables the recovery of the 3D structure of a scene. Whilst in theory structure and motion must be obtained simultaneously, there are many settings, such as large-scale reconstruction~\cite{snavely2008modeling,furukawa2010accurate} and SLAM~\cite{mur2015probabilistic}, where the camera poses are first estimated with a sparse set of 3D points, before a denser scene structure is produced by triangulating other points using the estimated camera poses.

An established approach for triangulation is by $\ell_\infty$ minimisation~\cite{hartley2004minimization}. Specifically, we seek the 3D coordinates that minimise the maximum reprojection error across all views. Unlike the sum of squared error function which contains multiple local minima, the maximum reprojection error function is quasiconvex and thus contains a single global minimum. Algorithms that take advantage of this property have been developed to solve such quasiconvex problems exactly~\cite{kahl2005multiple,ke2007quasiconvex,seo2007fast,olsson2007efficient,agarwal2008fast,dai2012novel,eriksson2014pseudoconvex,donne2015point}. In particular, Agarwal et al.~\cite{agarwal2008fast} showed that some of the most effective algorithms belong to the class of generalised fractional programming (GFP) methods~\cite{dinkelbach1967nonlinear,gugat1996fast}.

Although algorithms for $\ell_\infty$ triangulation have steadily improved, there is still room for improvement. In particular, on large-scale reconstruction problems or SLAM where there are usually a significant number of views per point (recall that the size of a triangulation problem is the number of 2D observations of a scene point), the computational cost of many of the algorithms can be considerable; we will demonstrate this in Section~\ref{sec:results}. A major reason is that the algorithms need to repeatedly solve convex programs to determine the update direction, which is of cubic complexity in worst case. It is thus of interest to investigate effective approximate algorithms.

\subsection{Contributions} 

As alluded above, our main contribution in this paper is theoretical. Specifically, we prove that the $\ell_\infty$ triangulation problem admits a \emph{coreset} approximation scheme~\cite{agarwal2005geometric,buadoiu2008optimal}. A coreset is a small representative subset of the data that approximates the overall distribution of the data. In the context of $\ell_\infty$ triangulation from $N$ views, our algorithm iteratively accumulates a coreset, such that the error from solving the problem on the coreset is bounded within a factor of $(1+\epsilon)$ from the theoretically achievable minimum. Given a desired $\epsilon$, we establish a stopping criterion for the algorithm such that the output coreset gives the required approximation accuracy. This provides a mathematically justified way to deal with large-scale problems where considering all available data may not be desirable or worthwhile.


Iterated until convergence, the coreset algorithm is guaranteed to attain the globally optimal solution. We experimentally demonstrate that the algorithm can in fact find the global minimiser much faster than many state-of-the-art $\ell_\infty$ triangulation methods. This superior performance was established on publicly available large scale 3D reconstruction datasets. From a practical standpoint, our algorithm thus provides a useful \emph{anytime} behaviour, i.e., the algorithm can simply be run until convergence, or until the time budget is exhausted. In  the latter case, we have a guaranteed bound of the approximation error w.r.t.~the optimum.

The existence of coresets for quasiconvex vision problems was speculated by Li~\cite{li2009efficient}. However, little progress has been made on this subject since. We provide a positive answer on one such problem. Our work is also one of the first to extend the idea of coresets in computational geometry~\cite{agarwal2005geometric,buadoiu2008optimal} to computer vision.
 
\section{Background}\label{sec:background}

Let $\{ \bP_i, \bu_i \}^{N}_{i=1}$ be a set of data for triangulation, consisting of camera matrices $\bP_i \in \mathbb{R}^{3\times 4}$ and observed image positions $\bu_i \in \mathbb{R}^{2}$ of the same scene point $\bx \in \mathbb{R}^3$. In this paper, by a ``datum" we mean a specific camera and image point $\{ \bP_i, \bu_i\}$. Let $\cX = \{1,\dots,N\}$ index the set of data. The $\ell_\infty$ technique estimates $\bx$ by minimising the maximum reprojection error
\begin{align}\label{equ:triang}
\min_{\bx} \max_{i \in \cX}~~&r(\bx \mid \bP_i,\bu_i),\\
\nonumber \text{subject to} \;\; &\bP^{3}_i \tilde{\bx} > 0~~\forall~i \in \cX.
\end{align}
where
\begin{align}\label{equ:reproj2norm}
r(\bx \mid \bP_i,\bu_i) =  \left\| \bu_i - \frac{\bP^{1:2}_i \tilde{\bx}}{\bP^{3}_i \tilde{\bx}} \right\|_2
\end{align}
is the reprojection error. Here, $\bP^{1:2}_i$ and $\bP^3_i$ respectively denote the first-two rows and third row of $\bP_i$, and $\tilde{\bx}$ is $\bx$ in homogeneous coordinates. The reprojection error is basically the Euclidean distance between the observed point $\bu_i$ and the projection of $\bx$ onto the $i$-th image plane. The cheirality constraints $\bP^{3}_i \tilde{\bx} > 0~\forall i \in \cX$ ensure that the estimated point lies in front of all the cameras. 

Problem~\eqref{equ:triang} belongs to a broader class of problems called generalised linear programs (GLP)~\cite{amenta1994helly}. Two properties of GLPs that will be useful later in this paper, are stated in the context of~\eqref{equ:triang} as follows.

\begin{property}[Monotonicity]\label{prop:mono}
	For any $\cC \subseteq \cX$,
	\begin{align}
	\min_\bx \max_{i \in \cC}~r(\bx \mid \bP_i,\bu_i) \le \min_\bx \max_{i \in \cX}~r(\bx \mid \bP_i,\bu_i)
	\end{align}
	given the appropriate cheirality contraints on both sides. \qed
\end{property}

\begin{property} [Support set]\label{prop:basis}
	Let $\bx^*$ and $\delta^*$ respectively be the minimiser and minimised objective value of~\eqref{equ:triang}. There exists a subset $\cB \subseteq \cX$ with $|\cB| \le 4$, such that for any $\cC$ that satisfies $\cB \subseteq \cC \subseteq \cX$, the following holds
	\begin{align}\label{equ:equiv}
	\begin{split}
	\delta^* &= \min_\bx \max_{i \in \cB}~r(\bx \mid \bP_i,\bu_i)\\
	&= \min_\bx \max_{i \in \cC}~r(\bx \mid \bP_i,\bu_i) = \min_\bx \max_{i \in \cX}~r(\bx \mid \bP_i,\bu_i)
	\end{split}
	\end{align}
	given the appropriate cheirality contraints. In fact, the three problems in~\eqref{equ:equiv} have the same minimiser $\bx^*$. Further,
	\begin{align}\label{equ:equiv2}
	r(\bx^* \mid \bP_i,\bu_i) = \delta^* \;\;\;\; \text{for any}~i \in \cB.
	\end{align}
	The subset $\cB$ is called the ``support set" of the problem.\qed
\end{property}

See~\cite{li2009efficient,amenta1994helly,sim2006removing} for details and proofs related to the above properties. Intuitively,~\eqref{equ:equiv2} states that, at the solution of~\eqref{equ:triang}, the minimised maximum error occurs at the support set $\cB$. Fig.~\ref{fig:triang} illustrates this property. Further,~\eqref{equ:equiv} states that solving~\eqref{equ:triang} amounts to solving the same problem on $\cB$. Many classical algorithms in computational geometry~\cite{seidel1991small,clarkson1995vegas,matouvsek1996subexponential} exploit this property to solve GLPs.

\begin{figure}[t]\centering
	\subfigure[]{\includegraphics[width=0.33\columnwidth]{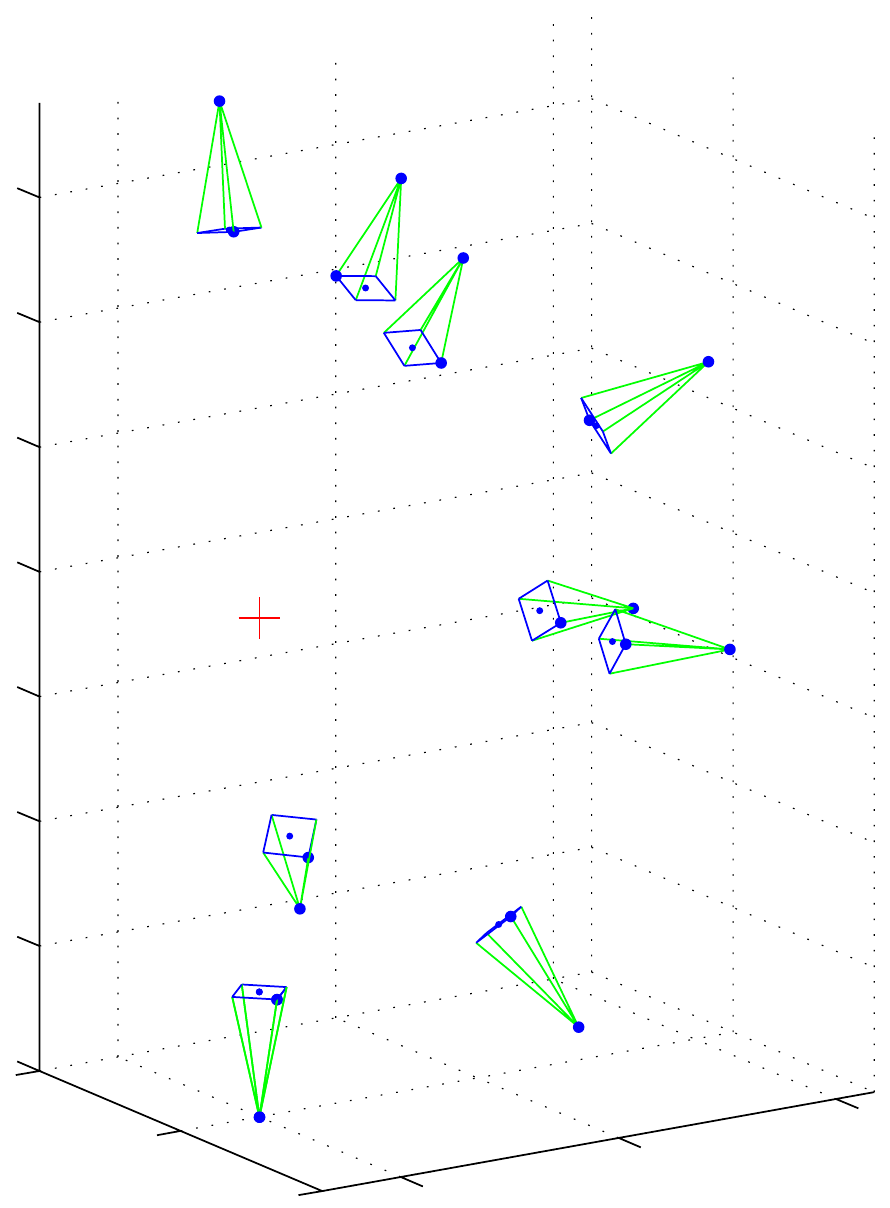}}
	\subfigure[]{\includegraphics[width=0.64\columnwidth]{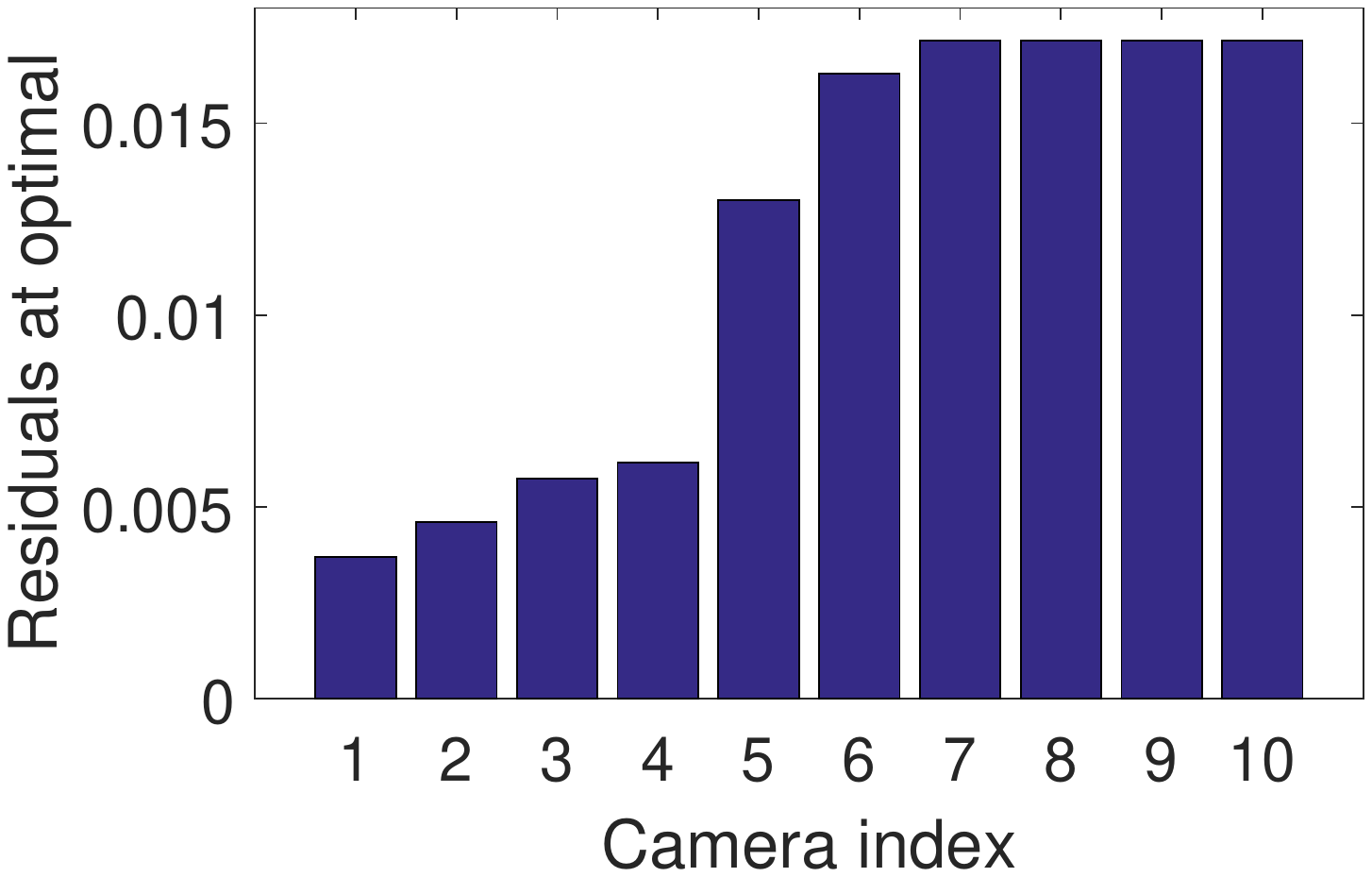}}
	\caption{Triangulating a point $\bx$ observed in 10 views. The red `+' is the $\ell_\infty$ solution $\bx^*$. Observe that there are four views/measurements with the same residual at $\bx^*$. The index of the support set is thus $\cB = \{7,8,9,10\}$.}
	\label{fig:triang}
\end{figure}


\section{Coreset Algorithm}\label{sec:coreset}

We first describe the coreset algorithm and focus on its operational behaviour, before embarking on a discussion of its convergence properties in Sec.~\ref{sec:conv} and the derivation of the coreset approximation bound in Sec.~\ref{sec:approx}.

\subsection{Main Operation}\label{sec:alg}

The coreset algorithm for $\ell_\infty$ triangulation is listed in Algorithm~\ref{alg:coreset}. The primary objective is to seek a representative subset $\cC_s \subseteq \cX$ of the data. This is accomplished by iteratively accumulating the data that should appear in the subset, where the datum that is selected for inclusion at each iteration is the most violating datum; see Step~\ref{step:maxerr}. The size of the subset, and equivalently the runtime of the algorithm, is controlled by the desired approximation error $\epsilon$. To achieve, for e.g., a $1\%$ approximation error, set $\epsilon = 0.01$.

\begin{algorithm}[ht]\centering
	\begin{algorithmic}[1]
		\REQUIRE Input data $\{ \bP_i, \bu_i \}^{N}_{i=1}$, approximation error $\epsilon$.
		\STATE Randomly permute the order of $\{ \bP_i, \bu_i \}^{N}_{i=1}$, and define $\cX = \{1,\dots,N\}$.
		\STATE $s \leftarrow 0$,~~$\gamma \leftarrow \infty$,~~$g \leftarrow 0$,~~$\cC_1 \leftarrow \{1,2,3,4\}$.
		\STATE $(\bx_1,\delta_1) \leftarrow$ Minimiser and minimised value of \eqref{equ:triang} on data indexed by $\cC_1$ \label{step:solver1}
		\STATE $t \leftarrow 2$
		\WHILE{$t \le \lceil2/\epsilon\rceil$}\label{step:startwhile}
		\STATE $q \leftarrow \argmax_{i \in \cX} r(\bx_{t-1} \mid \bP_i,\bu_i)$.\label{step:maxerr}
		\IF{$r(\bx_{t-1} \mid \bP_q, \bu_q) \le \delta_{t-1}$}\label{step:conv}
		\STATE /* Found global minimum */ \\ $s \leftarrow t-1$,~~$g \leftarrow 1$,~~exit while loop.\label{step:early}
		\ENDIF
		\IF{$r(\bx_{t-1} \mid \bP_q, \bu_q) < \gamma$}\label{step:best}
		\STATE /* Found a better coreset */ \\ $s \leftarrow t-1$,~~$\gamma \leftarrow r(\bx_{t-1} \mid \bP_q, \bu_q)$.
		\ENDIF
		\STATE $\cC_t \leftarrow \cC_{t-1} \cup \{ q \}$.\label{step:updateCt}
		\STATE $(\bx_t,\delta_t) \leftarrow$ Minimiser and minimised value of \eqref{equ:triang} on data indexed by $\cC_t$.\label{step:solver2}
		\STATE $t \leftarrow t+1.$ \label{step:tplus1}
		\ENDWHILE\label{step:endwhile}
		\IF{$g = 0$}\label{step:startrefine}
		\STATE $q \leftarrow \argmax_{i \in \cX} r(\bx_{\lceil 2/\epsilon \rceil} \mid \bP_i,\bu_i)$.
		\IF{$r(\bx_{\lceil 2/\epsilon \rceil} \mid \bP_q, \bu_q) < \gamma$}
		\STATE $s \leftarrow \lceil 2/\epsilon \rceil$.
		\ENDIF
		\ENDIF\label{step:endrefine}
		\RETURN $\cC_s$, $\bx_s$ and $\delta_s$.
	\end{algorithmic}
	\caption{Coreset algorithm for $\ell_\infty$ triangulation~\eqref{equ:triang}.}
	\label{alg:coreset}
\end{algorithm}

Observe that Algorithm~\ref{alg:coreset} is a \emph{meta-algorithm}, since it requires executing a solver for \eqref{equ:triang} on the data subset indexed by the current subset $\cC_t$ (see Steps~\ref{step:solver1} and~\ref{step:solver2}). Any of the previous $\ell_\infty$ triangulation algorithms~\cite{kahl2005multiple,ke2007quasiconvex,seo2007fast,olsson2007efficient,agarwal2008fast,dai2012novel} can be applied as the solver.

There are two terminating conditions for Algorithm~\ref{alg:coreset}:
\begin{enumerate}
	\vspace{0.5em}
	\item Iteration counter $t$ reaches $\lceil 2/\epsilon \rceil$.\\
	
	\vspace{-0.5em}
	In this case, the output $\cC_s$ indexes a coreset with the desired approximation accuracy $\epsilon$. Section~\ref{sec:approx} will establish the error bound for approximating \eqref{equ:triang} using the data indexed by $\cC_s$.
	
	\vspace{0.75em}
	
	\item The global minimiser has been found (Step~\ref{step:early}).\\

	\vspace{-0.5em}
	The satisfaction of the condition in Step~\ref{step:conv} implies that $\cC_{t-1}$ already contains the support set $\cB$, since the largest error across all $\cX$ is not larger than the value of \eqref{equ:triang} on the data indexed by $\cC_{t-1}$; see Property~\ref{prop:basis}.
	
	\vspace{0.75em}
\end{enumerate}

To aid intuition, a sample partial run of Algorithm~\ref{alg:coreset} is shown in Fig.~\ref{fig:samplerun}.

\begin{figure*}[ht]\centering
	\subfigure[$t = 1$ (initialisation)]{\includegraphics[width=0.20\textwidth]{./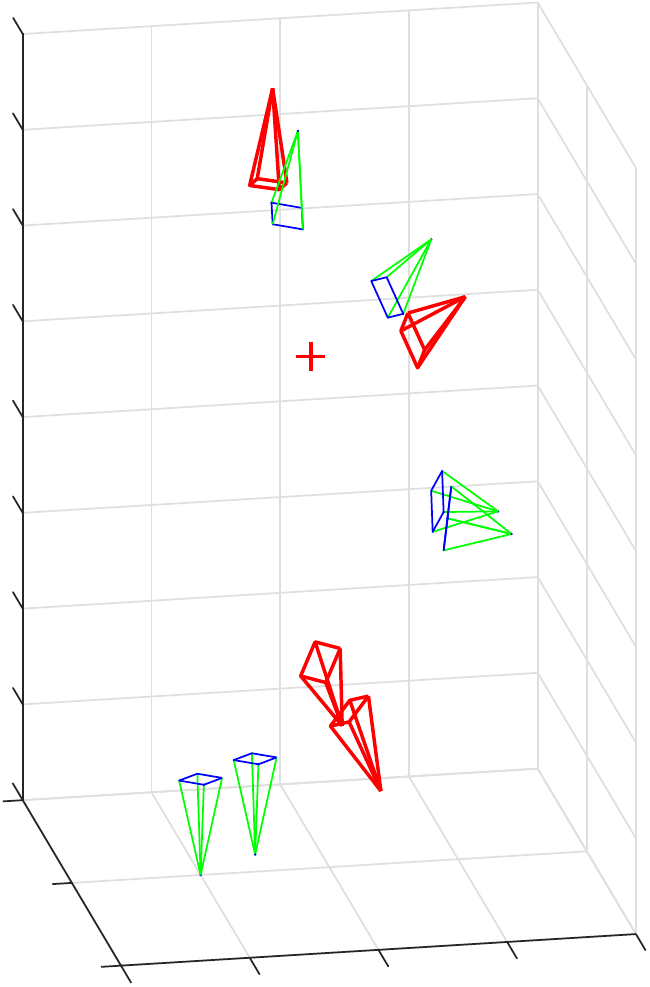}}\hfill
	\subfigure[$t = 2$]{\includegraphics[width=0.20\textwidth]{./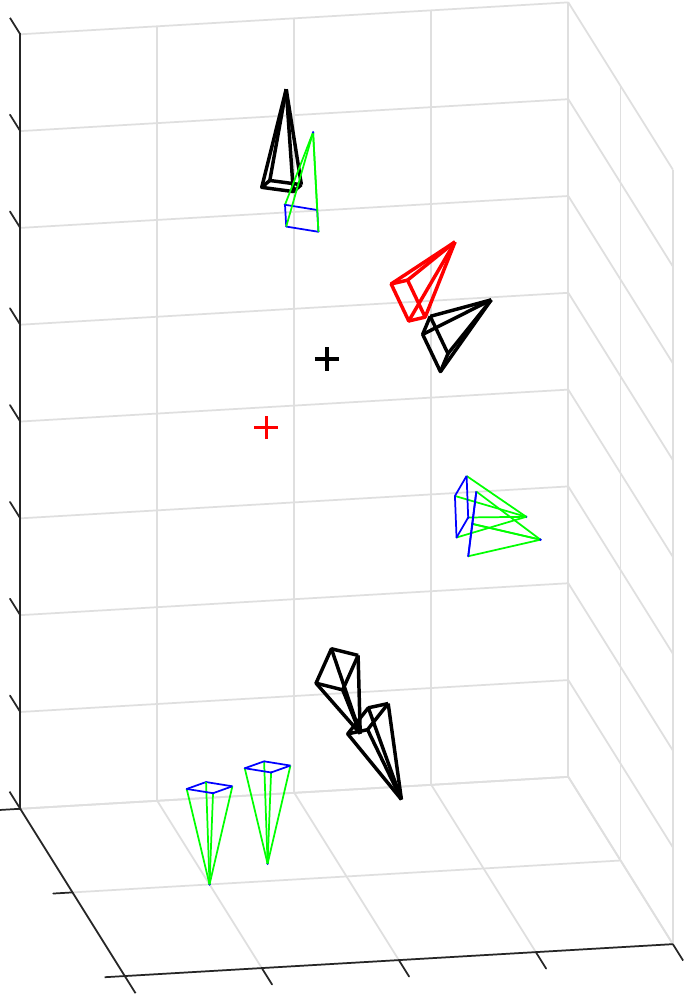}}\hfill
	\subfigure[$t = 3$]{\includegraphics[width=0.20\textwidth]{./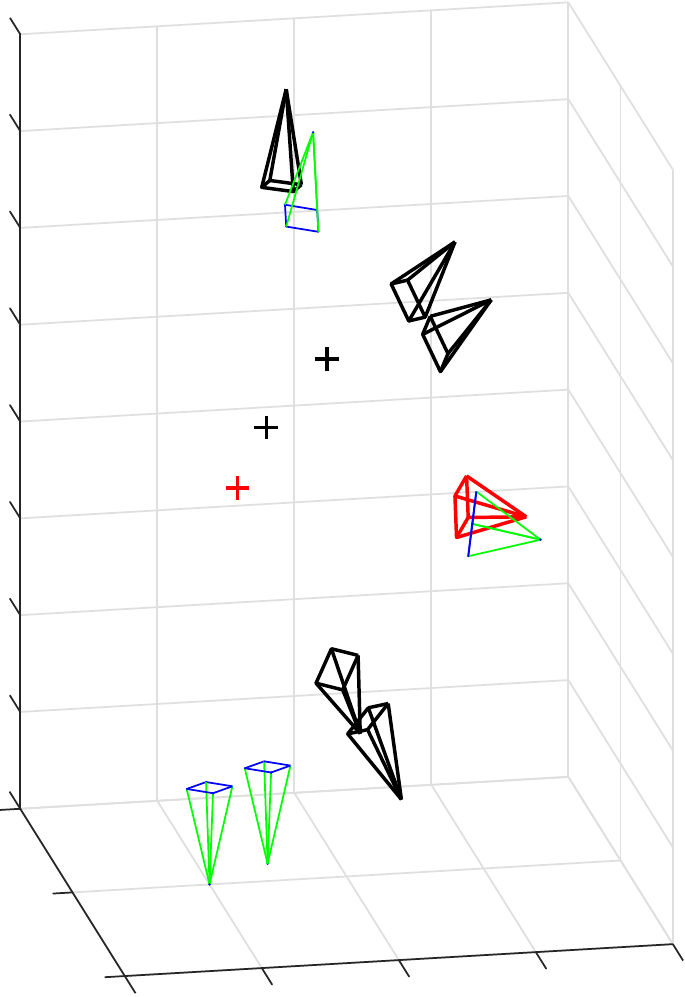}}\hfill
	\subfigure[$t = 4$]{\includegraphics[width=0.20\textwidth]{./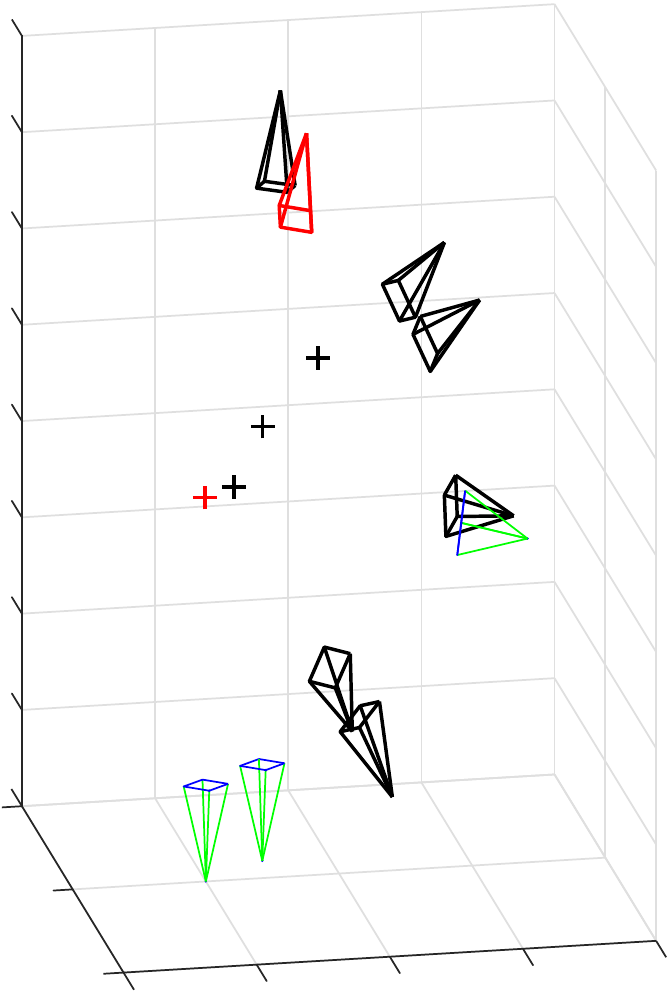}}\hfill
	\caption{A sample run of Algorithm \ref{alg:coreset} on the data displayed in Fig.~\ref{fig:triang}. (a) Four image measurements/camera viewpoints (in red) were selected to form the initial coreset $\cC_1$. The current solution $\bx_1$ is shown as a red cross. (b)--(d) Algorithm~\ref{alg:coreset} progressively inserts new data into the coreset. Data in the current coreset is shown in black, and the newly inserted datum (chosen according to Step~\ref{step:maxerr}) is shown in red. Similary, the previous solutions $\bx_s$ are shown in black, and the current estimate is shown in red. If terminated at $t = \lceil 2/\epsilon \rceil$, the estimate is a $\epsilon$-approximation of the true optimum. Iterated until convergence, the global optimum is achieved. For anytime behaviour, the error bound can be backtracked (see Sec.~\ref{sec:backtrack}) to obtain the approximation error of the last estimate at termination.}
	\label{fig:samplerun}
\end{figure*}

\subsection{Convergence to Global Minimum}\label{sec:conv}

If we are only interested in the global minimiser $\bx^*$, then $\epsilon$ should be set to $0$ (or a value small enough such that $\lceil 2/\epsilon \rceil \ge N - 3$). We prove that with this setting Algorithm~\ref{alg:coreset} will always find $\bx^*$ in a finite number of steps.

\begin{theorem}\label{thm:converge}
	If $\lceil 2/\epsilon \rceil \ge N - 3$, then Algorithm~\ref{alg:coreset} finds $\bx^*$ in finite time.
\end{theorem}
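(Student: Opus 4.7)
The plan is to track the coreset size $|\cC_t|$ through the iterations and exploit the fact that each non-terminating iteration strictly enlarges $\cC_t$, so the hypothesis $\lceil 2/\epsilon \rceil \ge N-3$ forces the algorithm either to detect optimality inside the loop or to reach $\cC_t = \cX$.

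First I would prove a strict-growth lemma: at any iteration $t$ for which Step~\ref{step:conv} does not fire, the index $q$ selected in Step~\ref{step:maxerr} lies outside $\cC_{t-1}$, so $|\cC_t| = |\cC_{t-1}| + 1$. Indeed, $\delta_{t-1} = \max_{i \in \cC_{t-1}} r(\bx_{t-1} \mid \bP_i,\bu_i)$, so any $i \in \cC_{t-1}$ satisfies $r(\bx_{t-1} \mid \bP_i,\bu_i) \le \delta_{t-1}$ and would have triggered Step~\ref{step:conv}. With $|\cC_1| = 4$, induction gives $|\cC_t| = t + 3$ for every iteration reached without firing Step~\ref{step:conv}.

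Next I would split into cases. In Case~1, Step~\ref{step:conv} fires at some iteration $t \le \lceil 2/\epsilon \rceil$: then $\max_{i \in \cX} r(\bx_{t-1} \mid \bP_i,\bu_i) \le \delta_{t-1}$, and combined with $\delta_{t-1} \le \delta^*$ from Property~\ref{prop:mono} and the unconditional lower bound $\max_{i \in \cX} r(\bx \mid \bP_i,\bu_i) \ge \delta^*$ for every $\bx$, this forces $\bx_{t-1}$ to attain the global minimum, and Step~\ref{step:early} returns it. In Case~2, Step~\ref{step:conv} never fires inside the loop; the strict-growth lemma then yields $|\cC_{N-3}| = N$, so $\cC_{N-3} = \cX$, and the hypothesis $\lceil 2/\epsilon \rceil \ge N-3$ ensures iteration $t = N-3$ is actually executed, giving $\bx_{N-3} = \bx^*$ and $\delta_{N-3} = \delta^*$ by definition of the GLP solver on the full data.

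It remains to verify that this $\bx^*$ is actually returned. If $\lceil 2/\epsilon \rceil > N-3$, iteration $t = N-2$ sees $r(\bx_{N-3} \mid \bP_q,\bu_q) = \delta^* = \delta_{N-3}$ and fires Step~\ref{step:conv}, returning $\bx^*$. The boundary case $\lceil 2/\epsilon \rceil = N-3$ is the main obstacle I expect: the loop exits naturally with $g = 0$, and the refinement block (Steps~\ref{step:startrefine}--\ref{step:endrefine}) must catch $\bx_{N-3}$. Here I would observe that in Case~2 every previous $\bx_{t-1}$ is strictly suboptimal (otherwise Step~\ref{step:conv} would have fired), so each $r(\bx_{t-1} \mid \bP_q,\bu_q) = \max_{i \in \cX} r(\bx_{t-1} \mid \bP_i,\bu_i) > \delta^*$, giving $\gamma > \delta^*$ on exit; comparing against $r(\bx_{N-3} \mid \bP_q,\bu_q) = \delta^*$ then updates $s \leftarrow N-3$ and returns $\bx_s = \bx^*$. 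Finiteness is immediate from the $\lceil 2/\epsilon \rceil$ cap on loop iterations.
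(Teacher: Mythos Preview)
Your proof is correct and follows essentially the same strategy as the paper's: the contrapositive of your strict-growth lemma is exactly the paper's observation that $q \in \cC_{t-1}$ forces Step~\ref{step:conv} to fire, and both arguments conclude by letting the coreset grow to $\cX$ in the worst case. Your treatment is in fact more thorough than the paper's, which stops at ``$\cC_{\lceil 2/\epsilon \rceil} = \cX$ and $\bx_{\lceil 2/\epsilon\rceil} = \bx^*$'' without verifying that this $\bx^*$ is actually returned; you correctly trace through the refinement block in the boundary case $\lceil 2/\epsilon\rceil = N-3$.

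One minor point: your claim that in Case~2 ``every previous $\bx_{t-1}$ is strictly suboptimal (otherwise Step~\ref{step:conv} would have fired)'' is not literally justified, since Step~\ref{step:conv} tests against $\delta_{t-1}$ rather than $\delta^*$, and in principle $\bx_{t-1}$ could coincide with $\bx^*$ while $\delta_{t-1} < \delta^*$. But this does not damage the conclusion: you always have $\gamma \ge \delta^*$, and if equality ever held then $s$ already indexes a global minimiser, so either way $\bx_s = \bx^*$ is returned.
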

\begin{proof}
	Let $q$ be obtained according to Step~\ref{step:maxerr}.
	\begin{itemize}[parsep=0.75pt,topsep=0.25em,itemsep=2pt]
		\item If $q \in \cC_{t-1}$, then, by how $\bx_{t-1}$ and $\delta_{t-1}$ were calculated in Step~\ref{step:solver2}, the condition in Step~\ref{step:conv} must be satisfied and $\bx_{t-1}$ is the global minimiser.
		\item If $q \notin \cC_{t-1}$ and the condition in Step~\ref{step:conv} is satisfied, then equation~\eqref{equ:equiv} is implied and $\bx_{t-1}$ is the global minimiser.
		\item If $q \notin \cC_{t-1}$ and the condition in Step~\ref{step:conv} is not satisfied, then Algorithm~\ref{alg:coreset} will insert $q$ into $\cC_{t-1}$. There are at most $N$ of such insertions (including the initial four insertions into $\cC_1$). If $\lceil 2/\epsilon \rceil \ge N - 3$, in the worst case all of $\cX$ will finally be inserted, and $\cC_{\lceil 2/\epsilon \rceil} = \cX$ and $\bx_{\lceil 2/\epsilon \rceil} = \bx^*$. 
	\end{itemize}
\end{proof}

Note that, whilst Algorithm~\ref{alg:coreset} needs to repeatedly call an $\ell_\infty$ solver, it only invokes the solver on a small subset $\cC_t$ of the data. Second, the way a new datum is selected (Step~\ref{step:maxerr}) to be inserted into $\cC_{t-1}$---basically by choosing the most violating datum w.r.t.~the current solution---enables $\cB$ to be found quickly. Section~\ref{sec:results} demonstrates that Algorithm~\ref{alg:coreset} can in fact find the global minimiser much more efficiently than invoking an $\ell_\infty$ solver~\cite{kahl2005multiple,ke2007quasiconvex,seo2007fast,olsson2007efficient,agarwal2008fast,dai2012novel} in ``batch mode" on the whole input data $\cX$.

Utilised as a global optimiser (i.e., set $\epsilon = 0$), Algorithm~\ref{alg:coreset} can be viewed as a \emph{Las Vegas} style randomised algorithm, since it always finds the correct result but a non-deterministic runtime. In addition, as indicated in the proof of Theorem~\ref{thm:converge}, in the worst case Algorithm~\ref{alg:coreset} takes $N$ iterations since it considers each measurement at most once.

\subsection{Coreset Approximation}\label{sec:approx}

Our primary contribution is to show that the subset $\cC_s$ output by Algorithm~\ref{alg:coreset} is a coreset of the $\ell_\infty$ triangulation problem~\eqref{equ:triang}. This is conveyed by the following theorem, which bounds the error of approximating~\eqref{equ:triang} using $\cC_s$.

\begin{theorem}\label{thm:coreset}
Let $\cC_s$, $\bx_s$ and $\delta_s$ be the output of Algorithm~\ref{alg:coreset}. Then
\begin{align}\label{equ:bound}
\max_{i\in \cX}~r(\bx_s \mid \bP_i, \bu_i) \le (1+\epsilon) \delta^*,
\end{align}
where $\delta^*$ is the minimised objective value for problem~\eqref{equ:triang} on the full data $\cX$.
\end{theorem}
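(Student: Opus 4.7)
The plan is proof by contradiction. Assume the output violates the claim, i.e., $\max_{i \in \cX} r(\bx_s \mid \bP_i, \bu_i) > (1+\epsilon) \delta^*$. Let $M_t := \max_{i \in \cX} r(\bx_{t-1} \mid \bP_i, \bu_i)$, which is exactly the quantity the algorithm evaluates at Step~\ref{step:conv}. By the bookkeeping at Step~\ref{step:best} and at Steps~\ref{step:startrefine}--\ref{step:endrefine}, the returned $\bx_s$ minimises $M_t$ over all iterates considered, so the negation forces $M_t > (1+\epsilon) \delta^* \ge (1+\epsilon)\delta_{t-1}$ for every $t$ in the loop. In particular the early-exit at Step~\ref{step:early} is never triggered, and the loop executes all $T := \lceil 2/\epsilon \rceil$ iterations, each inserting a genuinely new datum $q_t \notin \cC_{t-1}$.

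With that setup, the two structural facts from Sec.~\ref{sec:background} do the heavy lifting. Property~\ref{prop:mono} applied to the nested chain $\cC_1 \subset \cC_2 \subset \cdots \subset \cC_T \subseteq \cX$ gives $\delta_1 \le \delta_2 \le \cdots \le \delta_T \le \delta^*$. Property~\ref{prop:basis} gives, for each $t$, a support subset $\cB_t \subseteq \cC_t$ of cardinality at most four on which $r(\bx_t \mid \bP_i, \bu_i) = \delta_t$. The quantitative core of the proof is then a progress lemma: whenever the inserted point satisfies $r(\bx_{t-1} \mid \bP_{q_t}, \bu_{q_t}) > (1+\epsilon)\delta^*$, the reoptimised coreset value must jump by at least $\tfrac{\epsilon}{2} \delta^*$, i.e., $\delta_t - \delta_{t-1} \ge \tfrac{\epsilon}{2}\delta^*$. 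Telescoping across the $T$ insertions then forces $\delta_T \ge T \cdot \tfrac{\epsilon}{2} \delta^* \ge \delta^*$ with strictness as soon as $T \ge \lceil 2/\epsilon \rceil$, contradicting $\delta_T \le \delta^*$.

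The main obstacle is establishing the progress lemma itself. I would attack it by interpolating along the segment $[\bx_{t-1}, \bx_t]$ and exploiting quasiconvexity of each $r(\cdot \mid \bP_i, \bu_i)$: the move from $\bx_{t-1}$ to $\bx_t$ must cut $r(\cdot \mid \bP_{q_t}, \bu_{q_t})$ from above $(1+\epsilon)\delta^*$ down to at most $\delta_t$, while by Property~\ref{prop:basis} some support datum $i^\star \in \cB_t \cap \cC_{t-1}$ whose residual was at most $\delta_{t-1}$ at $\bx_{t-1}$ attains exactly $\delta_t$ at $\bx_t$. Balancing these two opposing constraints using the linear-fractional form in~\eqref{equ:reproj2norm}, together with a convex-combination argument in which $\bx^*$ certifies feasibility at level $\delta^*$, should yield the $\tfrac{\epsilon}{2}\delta^*$ increment and close the contradiction. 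Once the lemma is in hand, the bound~\eqref{equ:bound} follows immediately from the definition of $s$ and $\gamma$.
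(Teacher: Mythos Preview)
Your overall contradiction strategy and use of Properties~\ref{prop:mono} and~\ref{prop:basis} match the paper's. The gap is your progress lemma: the claim $\delta_t - \delta_{t-1} \ge \tfrac{\epsilon}{2}\delta^*$ is false, and no quasiconvexity/interpolation argument along $[\bx_{t-1},\bx_t]$ will produce it. To see why, note that nothing in the contradiction hypothesis prevents $\delta_{t-1}$ from already being within $\eta\delta^*$ of $\delta^*$ for arbitrarily small $\eta$ while $\max_{i\in\cX} r(\bx_{t-1}\mid\bP_i,\bu_i) > (1+\epsilon)\delta^*$; then $\delta_t \le \delta^*$ forces $\delta_t-\delta_{t-1}\le \eta\delta^*$, which can be made smaller than $\tfrac{\epsilon}{2}\delta^*$. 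The per-step increments must therefore shrink, and a constant-increment telescoping cannot close the argument.

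What the paper actually proves is a \emph{harmonic} recursion. Setting $\bar\delta=(1+\epsilon)\delta^*$ and $\lambda_t=\delta_t/\bar\delta$, it shows $\tfrac{1}{1-\lambda_t} > \tfrac{1}{1-\lambda_{t-1}} + \tfrac12$, hence $\lambda_t > 1 - \tfrac{2}{t+1}$, which at $t=\lceil 2/\epsilon\rceil+1$ contradicts $\lambda_t \le 1/(1+\epsilon)$. The engine behind this is a geometric fact you are missing (the paper's Lemma~\ref{lem:obtuse}): for any $\bx$, some datum $j$ in the support set of the \emph{previous} coreset makes the angle $\angle(\bu_j:f_{\bP_j}(\bx_{t-1}):f_{\bP_j}(\bx))$ obtuse, so by the cosine rule $r(\bx_t\mid\bP_j,\bu_j)^2 \ge \|f_{\bP_j}(\bx_t)-f_{\bP_j}(\bx_{t-1})\|_2^2 + \delta_{t-1}^2$. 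Combined with the triangle inequality on the newly inserted datum $q$ (which gives $r(\bx_t\mid\bP_q,\bu_q) > \bar\delta - \|f_{\bP_q}(\bx_t)-f_{\bP_q}(\bx_{t-1})\|_2$), one gets $\lambda_t\bar\delta \ge \max(\sqrt{\lambda_{t-1}^2\bar\delta^2 + k_j^2},\ \bar\delta - k_q)$, and minimising the max yields $\lambda_t \ge (1+\lambda_{t-1}^2)/2$. There is a further subtlety your sketch does not anticipate: the two image-plane displacements $k_j$ and $k_q$ live in different cameras and need not be comparable; the paper handles the case $k_j<k_q$ by modifying the algorithm (its Algorithm~\ref{alg:coreset_excerpt}) so that such steps do not advance the counter $t$. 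Without both the obtuse-angle lemma and this two-camera case split, the bound cannot be established.
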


Intuitively, the above theorem states that the error of approximating $\bx^*$ with $\bx_s$ (the latter was computed using the $\cC_s$ output by Algorithm \ref{alg:coreset}) is at most $(1+\epsilon)$-times of the smallest possible error. This provides a mathematically justified way of dealing with large scale problems. The rest of this subsection is devoted to proving Theorem~\ref{thm:coreset}.

First, we define the set of geometrical quantities in Fig.~\ref{fig:proj}. For an arbitrary camera matrix $\bP$ with measurement $\bu$, the reprojection error of a given $\bx$ is
\begin{align}
r(\bx \mid \bP,\bu) = \left\| \bu - \frac{\bP^{1:2} \tilde{\bx}}{\bP^{3} \tilde{\bx}} \right\|_2 = \left\| \bu - f_{\bP}(\bx) \right\|_2,
\end{align}
where $f_{\bP}(\bx)$ is the projection of $\bx$ onto the image. Given a set of data $\{ \bP_i, \bu_i \}^{N}_{i=1}$, let $\bx^*$ be the global minimiser of~\eqref{equ:triang}. Define a disc on the image plane with centre $\bu$ and radius $r(\bx^* \mid \bP,\bu)$; backprojecting this disc creates a solid elliptic cone $c_{\bP}(\bx^*)$. Define $h_\bP(\bx^*)$ as the tangent plane on the surface of $c_{\bP}(\bx^*)$ that contains $\bx^*$.

\begin{figure}[ht]\centering
	\includegraphics[width=0.35\textwidth]{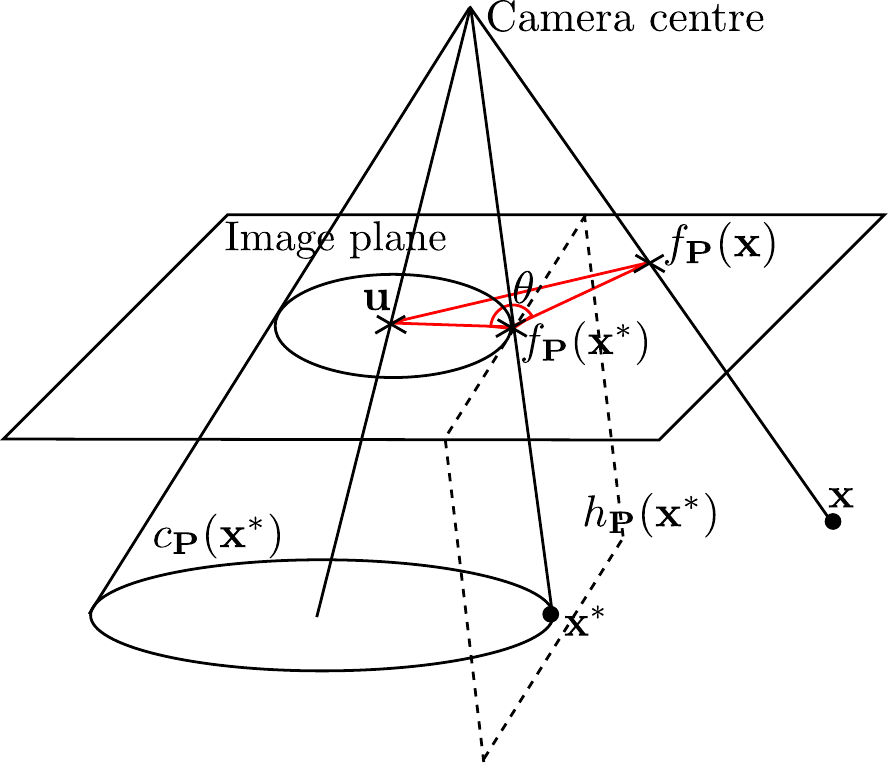}
	\caption{Definition of several geometrical quantities for $\ell_\infty$ triangulation.}
	\label{fig:proj}
\end{figure}

We now establish several intermediate results. In the following, we consider only $\bx \in \mathbb{R}^3$ that lies in front of the camera, i.e., $\bx$ is never on the same side of the image plane as the camera centre.

\begin{lemma}\label{lem:inside}
	$\bx$ is inside $c_{\bP}(\bx^*)$ iff $r(\bx \mid \bP,\bu) < r(\bx^* \mid \bP,\bu)$.
\end{lemma}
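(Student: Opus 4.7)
The plan is to reduce the statement to a definitional chase: the cone $c_{\bP}(\bx^*)$ is constructed by backprojecting a disc on the image plane, so membership in the cone is governed purely by where a point projects. The lemma should therefore fall out by composing the definition of the cone with the definition of the reprojection error, with the only subtlety being the in-front-of-camera hypothesis stated just before the lemma.

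Concretely, I would proceed as follows. First, write down explicitly what it means for a 3D point $\bx$ lying in front of the image plane to be inside $c_{\bP}(\bx^*)$: because the cone is the union of rays from the camera centre through points of the disc $D = \{\by \in \mathbb{R}^2 : \|\by - \bu\|_2 < r(\bx^* \mid \bP,\bu)\}$, and because the map that sends a 3D point in front of the camera to its image is exactly $\bx \mapsto f_{\bP}(\bx)$, the point $\bx$ is inside $c_{\bP}(\bx^*)$ if and only if $f_{\bP}(\bx) \in D$. (Here the cheirality assumption is essential to avoid the degenerate case where a ray through $\bx$ could also hit the cone's mirror image behind the camera.) Second, substitute the definition of $D$ to get that this is equivalent to $\|\bu - f_{\bP}(\bx)\|_2 < r(\bx^* \mid \bP,\bu)$. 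Third, recognise that the left-hand side is precisely $r(\bx \mid \bP,\bu)$ by definition of the reprojection error, giving the claim.

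The only step requiring care is the first one, namely verifying that for points in front of the camera, the 3D-to-image map and the backprojection of a planar disc are inverses of each other in the sense needed. This is routine projective geometry, but it is the one place where one must use the hypothesis that $\bx$ lies on the same side of the image plane as the disc (i.e., not behind the camera), since otherwise a point could project into $D$ without lying in the solid cone $c_{\bP}(\bx^*)$ as drawn in Fig.~\ref{fig:proj}. Everything else is unpacking definitions, so no further calculation is needed.
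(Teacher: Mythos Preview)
Your proposal is correct and is essentially the same approach as the paper's: the paper does not give a written proof at all, stating only that Lemmas~\ref{lem:inside}--\ref{lem:opposite} ``can be proven easily by inspecting Fig.~\ref{fig:proj}.'' Your argument is precisely the formalisation of that inspection---unpacking the definition of $c_{\bP}(\bx^*)$ as the backprojection of the disc of radius $r(\bx^*\mid\bP,\bu)$ about $\bu$, using cheirality to identify membership in the cone with membership of the projection in the disc, and then recognising the disc condition as the reprojection-error inequality.
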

\begin{lemma}\label{lem:acute}
	$\bx$ is on the same side of $h_\bP(\bx^*)$ as $\bu$ iff the angle $\theta$ formed by the three points $\bu:f_\bP(\bx^*):f_\bP(\bx)$ is acute, i.e., $\theta < 90^\circ$.
\end{lemma}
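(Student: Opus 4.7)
The plan is to reduce the three-dimensional side condition with respect to $h_\bP(\bx^*)$ to an equivalent two-dimensional side condition inside the image plane, where an elementary dot-product argument closes the biconditional.

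First I would establish that the tangent plane $h_\bP(\bx^*)$ passes through the camera centre $O$. Every ruling of the elliptic cone $c_\bP(\bx^*)$ emanates from the apex $O$, and the tangent plane at a smooth non-apex point of a cone must contain the whole ruling through that point, namely the line $O\,f_\bP(\bx^*)$ on which $\bx^*$ also lies. Intersecting $h_\bP(\bx^*)$ with the image plane then produces a line $\ell$ through $f_\bP(\bx^*)$; because $h_\bP(\bx^*)$ touches $c_\bP(\bx^*)$ along that ruling, $\ell$ is tangent at $f_\bP(\bx^*)$ to the boundary circle (centre $\bu$, radius $r(\bx^* \mid \bP,\bu)$) of the reprojection disc, and so $\ell$ is perpendicular to the segment $\bu\,f_\bP(\bx^*)$ at $f_\bP(\bx^*)$.

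Next I would carry out the 3D-to-2D reduction. Since $\bx$ lies strictly in front of the camera, the open ray from $O$ through $\bx$ meets the plane $h_\bP(\bx^*)$ (which contains $O$) only at $O$ itself, so $\bx$ and its projection $f_\bP(\bx)$ sit on a common side of $h_\bP(\bx^*)$. As both $\bu$ and $f_\bP(\bx)$ live in the image plane, the 3D condition ``$\bx$ on the same side of $h_\bP(\bx^*)$ as $\bu$'' is therefore equivalent to the 2D condition ``$f_\bP(\bx)$ on the same side of $\ell$ as $\bu$'' within the image plane.

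Finally I would close with a planar dot-product observation. Since $\ell$ passes through $f_\bP(\bx^*)$ with normal direction $\bu - f_\bP(\bx^*)$, a point $Q$ in the image plane is on the $\bu$-side of $\ell$ exactly when $\langle Q - f_\bP(\bx^*),\, \bu - f_\bP(\bx^*)\rangle > 0$. Setting $Q = f_\bP(\bx)$, the inner product equals $\|f_\bP(\bx) - f_\bP(\bx^*)\|\,\|\bu - f_\bP(\bx^*)\|\cos\theta$, so the inequality is equivalent to $\theta < 90^\circ$, giving the desired biconditional. The step I expect to be the main obstacle is the 3D-to-2D reduction, which rests on pinning down that $h_\bP(\bx^*)$ actually passes through the camera centre and that a front-of-camera point cannot cross a plane through the camera centre on its way to its image projection; once those two facts are secured, everything else is routine planar geometry.
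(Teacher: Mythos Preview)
Your argument is correct and is precisely the rigorous formalisation of what the paper leaves implicit: the paper does not give a separate proof of this lemma but simply states that it ``can be proven easily by inspecting Fig.~\ref{fig:proj}.'' Your three steps---showing that $h_\bP(\bx^*)$ contains the camera centre (hence meets the image plane in the tangent line $\ell$ to the reprojection circle at $f_\bP(\bx^*)$), reducing the 3D side test to the 2D side test with respect to $\ell$, and finishing with the dot-product characterisation of the half-plane---are exactly the geometric content encoded in that figure, so your approach and the paper's are the same in spirit, with yours supplying the details the paper omits.
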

\begin{lemma}\label{lem:opposite}
	$\bx$ is on the opposite side of $h_\bP(\bx^*)$ as $\bu$ iff the angle $\theta$ formed by the three points $\bu:f_\bP(\bx^*):f_\bP(\bx)$ is obtuse, i.e., $\theta > 90^\circ$.
\end{lemma}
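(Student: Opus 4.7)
My plan is to prove Lemma~\ref{lem:opposite} in close analogy with Lemma~\ref{lem:acute}, reversing the sign of the relevant inner product, and then to observe that together with the degenerate case $\theta = 90^\circ$ the two lemmas exhaust all possibilities.

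First I would characterise the tangent plane $h_\bP(\bx^*)$ geometrically. Because $c_\bP(\bx^*)$ is the back-projection of the disc $D$ of radius $r(\bx^*\mid\bP,\bu)$ centred at $\bu$, its lateral surface is ruled by the generators through the camera centre $C$ and the points of the boundary circle $\partial D$. Since $\bx^*$ lies on the generator through $f_\bP(\bx^*) \in \partial D$, the tangent plane $h_\bP(\bx^*)$ contains both (i) the entire generator through $C$, $\bx^*$ and $f_\bP(\bx^*)$, and (ii) the tangent direction to $\partial D$ at $f_\bP(\bx^*)$. As the tangent to a circle is perpendicular to its radius, in the image plane this tangent is the line $L$ through $f_\bP(\bx^*)$ orthogonal to $\bu - f_\bP(\bx^*)$. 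Hence $L$ is precisely the trace of $h_\bP(\bx^*)$ on the image plane.

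With this description the argument reduces to a two-dimensional statement. Since $h_\bP(\bx^*)$ contains $C$, the perspective projection $f_\bP$ — restricted to points strictly in front of the camera, as enforced by cheirality — sends the two open half-spaces of $h_\bP(\bx^*)$ bijectively onto the two open half-planes of $L$, preserving ``same side as $\bu$''. Therefore $\bx$ is on the opposite side of $h_\bP(\bx^*)$ from $\bu$ iff $f_\bP(\bx)$ is on the opposite side of $L$ from $\bu$. Since $L \perp (\bu - f_\bP(\bx^*))$, the latter is equivalent to
\[
(\bu - f_\bP(\bx^*)) \cdot (f_\bP(\bx) - f_\bP(\bx^*)) < 0,
\]
which is exactly the cosine condition $\cos\theta < 0$, i.e., $\theta > 90^\circ$.

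The main obstacle I anticipate is justifying rigorously that $f_\bP$ preserves side-containment across $h_\bP(\bx^*)$ without a sign reversal. The key point is that $C$ lies on $h_\bP(\bx^*)$ itself, so the central projection from $C$ gives a well-defined orientation-preserving bijection between the forward portions of the two open half-spaces and the two open half-planes of $L$; the cheirality constraint is precisely what rules out the backward portions, where perspective projection would flip the side. Once this is in place, the remainder is a routine sign computation. A shortcut, should one wish to avoid the 3D-to-2D argument entirely, is to observe that Lemma~\ref{lem:acute} already pins down ``same side $\Leftrightarrow \theta < 90^\circ$'' while $\theta = 90^\circ$ corresponds exactly to $\bx \in h_\bP(\bx^*)$, so Lemma~\ref{lem:opposite} follows by trichotomy on the angle.
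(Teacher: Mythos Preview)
Your proposal is correct and aligns with the paper's approach: the paper does not give a separate argument for Lemma~\ref{lem:opposite} but simply states that Lemmas~\ref{lem:inside}--\ref{lem:opposite} ``can be proven easily by inspecting Fig.~\ref{fig:proj}.'' Your write-up is a careful formalisation of exactly that geometric inspection (tangent plane contains the generator and the tangent line $L$ to $\partial D$, projection from $C$ preserves sides under cheirality), and your trichotomy shortcut via Lemma~\ref{lem:acute} is also sound.
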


The above three lemmata can be proven easily by inspecting Fig.~\ref{fig:proj}. As an extension of Lemma~\ref{lem:acute}, the following statement can be made.

\begin{lemma}\label{lem:acute2}
The angle $\angle(\bu:f_\bP(\bx^*):f_\bP(\bx))$ is acute iff there is a line segment
\begin{align}\label{equ:seg}
S = \{ \bx^\prime \mid \bx^\prime = \bx^* + \alpha(\bx - \bx^*), 0 \le \alpha <1 \}
\end{align}
(i.e., $S$ has a start point at $\bx^*$ and lies along vector $\bx - \bx^*$) such that any point $\bx^\prime$ on $S$ will give a strictly smaller reprojection error than $\bx^*$, i.e.,
\begin{align}
r(\bx^\prime \mid \bP,\bu) < r(\bx^* \mid \bP,\bu) \;\;\;\; \forall \; \bx^\prime \in S.  
\end{align}
\end{lemma}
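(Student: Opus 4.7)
The plan is to combine Lemma~\ref{lem:inside} (``$\bx$ is inside $c_\bP(\bx^*)$ iff the reprojection error is strictly smaller'') with Lemma~\ref{lem:acute} (``$\bx$ is on the $\bu$-side of $h_\bP(\bx^*)$ iff the angle is acute''), using the geometric fact that, since $h_\bP(\bx^*)$ passes through $\bx^*$ and is tangent there to the convex solid cone $c_\bP(\bx^*)$, in a neighbourhood of $\bx^*$ the open $\bu$-side of $h_\bP(\bx^*)$ is contained in the open interior of $c_\bP(\bx^*)$.

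For the forward direction, assume the angle is acute. By Lemma~\ref{lem:acute}, $\bx$ lies strictly on the $\bu$-side of $h_\bP(\bx^*)$; since $\bx^*$ itself lies on $h_\bP(\bx^*)$, every intermediate point $\bx' = \bx^* + \alpha(\bx - \bx^*)$ with $\alpha \in (0,1)$ lies strictly on the same side. Invoking the supporting-hyperplane/tangent property of $h_\bP(\bx^*)$ with respect to $c_\bP(\bx^*)$, there exists $\alpha_0 > 0$ such that $\bx'$ lies strictly inside $c_\bP(\bx^*)$ whenever $\alpha \in (0,\alpha_0)$. Lemma~\ref{lem:inside} then delivers $r(\bx' \mid \bP,\bu) < r(\bx^* \mid \bP,\bu)$ throughout this sub-segment, which is the segment $S$ sought by the lemma (possibly after rescaling the parameter).

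For the reverse direction, suppose such a segment $S$ exists. For each $\bx' \in S$ with $\alpha > 0$, the strict inequality combined with Lemma~\ref{lem:inside} places $\bx'$ in the interior of $c_\bP(\bx^*)$ and therefore strictly on the $\bu$-side of $h_\bP(\bx^*)$. Letting $\alpha \to 0^+$ identifies $\bx - \bx^*$ as pointing from $\bx^*$ strictly into the $\bu$-side half-space, so the endpoint $\bx$ (at $\alpha = 1$) also lies on the same side of $h_\bP(\bx^*)$ as $\bu$. Lemma~\ref{lem:acute} then gives acuteness of the angle.

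The main obstacle is the forward-direction step that converts the tangent-plane-side condition into the cone-interior condition, i.e., justifying that $\bx' = \bx^* + \alpha(\bx - \bx^*)$ actually enters the open interior of $c_\bP(\bx^*)$ for sufficiently small positive $\alpha$. This goes beyond the previous three lemmas and relies on $h_\bP(\bx^*)$ being a supporting hyperplane of the convex solid $c_\bP(\bx^*)$ tangent at the smooth boundary point $\bx^*$ (smooth because $\bx^*$ is in front of the camera and thus not the apex of the elliptic cone), so that any direction with positive inward-normal component with respect to $h_\bP(\bx^*)$ enters the interior immediately. Once this local inclusion is in hand, the remainder of the proof is direct bookkeeping with Lemmas~\ref{lem:inside} and~\ref{lem:acute}.
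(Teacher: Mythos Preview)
Your proposal is correct and follows essentially the same route as the paper: both directions are obtained by combining Lemma~\ref{lem:acute} (side of $h_\bP(\bx^*)$ $\Leftrightarrow$ acuteness) with Lemma~\ref{lem:inside} (interior of $c_\bP(\bx^*)$ $\Leftrightarrow$ smaller reprojection error), linked by the observation that the segment from $\bx^*$ toward $\bx$ enters the interior of the cone. The paper simply asserts that ``the line segment joining $\bx$ and $\bx^*$ must thus intersect the inside of $c_\bP(\bx^*)$'' by appeal to the figure, whereas you spell out the supporting-hyperplane/smooth-boundary justification; your version is thus a more careful rendering of the same argument rather than a different one.
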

\begin{proof}
If $\angle(\bu:f_\bP(\bx^*):f_\bP(\bx))$ is acute, then from Lemma~\ref{lem:acute}, $\bx$ must be on the same side of $h_\bP(\bx^*)$ as $\bu$. The line segment joining $\bx$ and $\bx^*$ must thus intersect the inside of $c_\bP(\bx^*)$; this intersection gives $S$. Since $S$ is inside $c_\bP(\bx^*)$, from Lemma~\ref{lem:inside} any $\bx^\prime \in S$ must give a strictly smaller reprojection error than $\bx^*$.
	
The reverse direction can be proven by realising that any $\bx^\prime$ which gives a strictly smaller reprojection error than $\bx^*$ must lie in $c_\bp(\bx^*)$. Any line segment that joins $\bx^*$ and $\bx$ with $\bx^\prime$ in the middle must lie on the same side of $h_\bP(\bx^*)$ as $\bu$. From Lemma~\ref{lem:acute}, $\angle(\bu:f_\bP(\bx^*):f_\bP(\bx))$ must be acute.
\end{proof}

Of central importance is the following result.

\begin{lemma}\label{lem:obtuse}
Let $\{ \bP_i, \bu_i \}^{N}_{i=1}$ be a set of data, $\bx^*$ be the global minimiser of~\eqref{equ:triang} on the data, and $\delta^*$ be the minimised value of~\eqref{equ:triang}. For an arbitrary $\bx \in \mathbb{R}^3$ in front of the camera, there exists a datum $\{\bP_j, \bu_j\}$ such that
\begin{align}
\angle(\bu_j:f_{\bP_j}(\bx^*):f_{\bP_j}(\bx)) > 90^\circ.
\end{align}
Via the cosine rule, the above inequality can be re-expressed as
\begin{align}\label{equ:ineq}
r(\bx \mid \bP_j,\bu_j)^2 \ge \left\| f_{\bP_j}(\bx) - f_{\bP_j}(\bx^*) \right\|^2_2 + r(\bx^* \mid \bP_j,\bu_j)^2.
\end{align}
\end{lemma}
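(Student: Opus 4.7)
The plan is to argue by contradiction. Suppose no such $j$ exists, so that $\theta_i := \angle(\bu_i : f_{\bP_i}(\bx^*) : f_{\bP_i}(\bx)) \le 90^\circ$ for every $i \in \cX$. I will exhibit a point $\bx' \neq \bx^*$ (in front of every camera) whose maximum reprojection error across $\cX$ is strictly less than $\delta^*$, contradicting the global optimality of $\bx^*$ in~\eqref{equ:triang}.

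Focus first on the generic situation in which every $\theta_i$ is strictly acute. The main tool is Lemma~\ref{lem:acute2}: applied to each camera $i$, it furnishes a segment $S_i = \{\bx^* + \alpha(\bx-\bx^*) : 0 < \alpha \le \alpha_i\}$ with $\alpha_i > 0$ on which $r(\cdot \mid \bP_i,\bu_i) < r(\bx^* \mid \bP_i,\bu_i)$. Since $\cX$ is finite, $\alpha^\circ := \tfrac{1}{2}\min_{i\in\cX}\alpha_i > 0$, and the point $\bx' = \bx^* + \alpha^\circ(\bx - \bx^*)$ lies in every $S_i$. Consequently $r(\bx'\mid\bP_i,\bu_i) < r(\bx^*\mid\bP_i,\bu_i) \le \delta^*$ for all $i$, and thus $\max_{i\in\cX} r(\bx'\mid\bP_i,\bu_i) < \delta^*$. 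Feasibility of $\bx'$ under the cheirality constraints is automatic for sufficiently small $\alpha^\circ$, since $\bx'$ can be made arbitrarily close to $\bx^*$.

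The chief obstacle is the degenerate case in which some $\theta_i = 90^\circ$, because Lemma~\ref{lem:acute2} no longer applies for that $i$ and motion from $\bx^*$ along $\bx - \bx^*$ is merely tangent to the cone $c_{\bP_i}(\bx^*)$, giving no first-order decrease of the $i$-th reprojection error. I would dispatch this by a limiting argument: perturb the target to $\bx^* + \tau(\bx-\bx^*) + \epsilon \bd$, where $\bd$ bends the direction strictly into the open intersection of the half-spaces containing the $\bu_i$ (non-empty under the standing hypothesis $\theta_i \le 90^\circ$ for all $i$); the perturbed directions have strictly acute angles for every camera, so the previous paragraph yields a contradiction that survives the limit $\epsilon \downarrow 0$.

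Finally, the stated inequality~\eqref{equ:ineq} follows immediately from the law of cosines on the planar triangle with vertices $\bu_j$, $f_{\bP_j}(\bx^*)$, $f_{\bP_j}(\bx)$: setting $a = r(\bx^*\mid\bP_j,\bu_j)$, $b = \|f_{\bP_j}(\bx) - f_{\bP_j}(\bx^*)\|_2$ and $c = r(\bx\mid\bP_j,\bu_j)$, one has $c^2 = a^2 + b^2 - 2ab\cos\theta_j$, and $\theta_j > 90^\circ$ forces $\cos\theta_j < 0$, so $c^2 \ge a^2 + b^2$ as claimed.
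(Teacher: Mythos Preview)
Your argument is essentially the paper's: assume for contradiction that every angle is acute, invoke Lemma~\ref{lem:acute2} for each camera to get a segment inside its cone, pass to the shortest, and conclude that the common point strictly lowers every reprojection error, contradicting optimality of $\bx^*$. The only structural difference is that the paper restricts attention to the support set $\cB$ of Property~\ref{prop:basis} rather than all of $\cX$; this is slightly cleaner (for $i\notin\cB$ one already has $r(\bx^*\mid\bP_i,\bu_i)<\delta^*$, so continuity alone handles those indices near $\bx^*$), but your version works just as well.

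Your perturbation sketch for the $\theta_i=90^\circ$ case, however, has a genuine gap: the open intersection of half-spaces you invoke need not be non-empty. If the $90^\circ$ indices include the full support set, then a direction $\bd$ with $\bn_i^T\bd>0$ for every such $i$ would strictly decrease every active residual from $\bx^*$---exactly what optimality of $\bx^*$ forbids. The paper does not treat this case either; its contradiction hypothesis is \emph{strict} acuteness, so it really only establishes an angle $\ge 90^\circ$. In fact the strict $>90^\circ$ can fail in degenerate configurations, but this is immaterial: the only downstream use is inequality~\eqref{equ:ineq}, which already holds at $\theta_j=90^\circ$ by the cosine rule. The clean fix is simply to claim $\ge 90^\circ$ and drop the perturbation argument.
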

\begin{proof}
From~\eqref{equ:equiv2}, at the solution $\bx^*$ there must exist a support set $\cB$ such that the data indexed by $\cB$ attain the minimised maximum residual $\delta^*$.
	
It is sufficient to consider $\cB$. We aim to contradict the following assumption:
\begin{align}\label{equ:ass1}
\exists\bx \;\;\; \text{s.t.} \;\;\; \angle(\bu_i:f_{\bP_i}(\bx^*):f_{\bP_i}(\bx)) < 90^\circ \;\;\;\; \forall \;\; i\in \cB.
\end{align}
Given an $\bx$ that satisfies~\eqref{equ:ass1}, Lemma~\ref{lem:acute2} states that for each $i \in \cB$, there is a line segment $S_i$ that lies completely inside $c_{\bP_i}(\bx^*)$.
	
Amongst all the segments $S_i$, $i \in \cB$, pick the shortest one and call it $\bar{S}$. The segment $\bar{S}$ must lie simultaneously in all of the cones $c_{\bP_i}(\bx^*)$, $i \in \cB$ (recall from~\eqref{equ:seg} that all $S_i$'s begin at $\bx^*$ and lie along vector $\bx - \bx^*$). Any $\bx^\prime \in \bar{S}$ must thus yield a strictly smaller reprojection error than $\bx^*$ for all $\{ \bP_i,\bu_i\}$, $i \in \cB$. This contradicts that $\cB$ is a support set, thus falsifying~\eqref{equ:ass1}.
	
The falsity of~\eqref{equ:ass1} implies that for an arbitrary $\bx$, there must be an $i \in \cB$ such that $\angle(\bu_i:f_{\bP_i}(\bx^*):f_{\bP_i}(\bx)) > 90^\circ$---set $j$ as that $i$. 
\end{proof}

Given the above results, we adapt B\v{a}doiu and Clarkson's derivation~\cite{buadoiu2008optimal} to yield the inequality~\eqref{equ:bound} for triangulation. Define
\begin{align}\label{equ:def}
\begin{split}
&\bar{\delta} := (1+\epsilon)\delta^*,\\
&\lambda_t := \delta_t/\bar{\delta},\\
&k^{\bP} := \left\| f_{\bP}(\bx_1) - f_{\bP}(\bx_2)  \right\|_2.
\end{split}
\end{align}
Note that $0 \le \lambda_t \le 1$. Further, since $\cC_t \subseteq \cX$, from Property~\ref{prop:mono}, $\delta_t \le \delta^*$, thus
\begin{align}\label{equ:lambda}
\lambda_t \le \delta^*/\bar{\delta} = 1/(1+\epsilon).
\end{align}

We aim to disprove the following assumption:
\begin{align}\label{equ:ass2}
\nexists t \ge 2 \;\; \text{such that} \;\; \max_{i \in \cX}~r(\bx_t \mid \bP_i,\bu_i) \le (1+\epsilon)\delta^*.
\end{align}
In words,~\eqref{equ:ass2} effectively states that none of the $\cC_t$ accumulated throughout the iterations in Algorithm~\ref{alg:coreset} gives a $(1+\epsilon)$ approximation to~\eqref{equ:triang}.

For any $t \ge 2$, Lemma~\ref{lem:obtuse} states that there exists a $j \in \cC_{t-1}$ such that
\begin{align}
\begin{split}
r(\bx_t &\mid \bP_j,\bu_j)^2 \\
&\ge \left\| f_{\bP_j}(\bx_t) - f_{\bP_j}(\bx_{t-1}) \right\|^2_2 + r(\bx_{t-1} \mid \bP_j,\bu_j)^2 \\
&=(k^{\bP_j}_t)^2 + \delta^2_{t-1}
\end{split}
\end{align}
(recall that $j$ indexes a datum in the support set of the data indexed by $\cC_{t-1}$, thus $r(\bx_{t-1} \mid \bP_j,\bu_j) = \delta_{t-1}$). Then
\begin{align}\label{equ:cineq1}
r(\bx_t \mid \bP_j,\bu_j) \ge \sqrt{\lambda^2_{t-1}\bar{\delta}^2 + (k^{\bP_j}_t)^2}.
\end{align}
For the $q$ chosen in iteration $t$ (Step~\ref{step:maxerr} in Algorithm~\ref{alg:coreset}), via triangle inequality
\begin{align}
\begin{split}
r(\bx_{t-1} &\mid \bP_q, \bu_q) \\
&\le r(\bx_{t} \mid \bP_q, \bu_q) + \left\| f_{\bP_q}(\bx_t) - f_{\bP_q}(\bx_{t-1}) \right\|_2,
\end{split}
\end{align}
implying that
\begin{align}
\begin{split}
r(\bx_{t} &\mid \bP_q, \bu_q)\\
&\ge r(\bx_{t-1} \mid \bP_q, \bu_q) - \left\| f_{\bP_q}(\bx_t) - f_{\bP_q}(\bx_{t-1}) \right\|_2\\
&= r(\bx_{t-1} \mid \bP_q, \bu_q) - k^{\bP_q}_t > \bar{\delta} - k^{\bP_q}_t.\label{equ:cineq2}
\end{split}
\end{align}
The last inequality follows from the assumption in~\eqref{equ:ass2} which states that none of the $\cC_t$ for $t \ge 2$ gives a $(1+\epsilon)$ approximation of~\eqref{equ:triang}.

Since both $j$ and $q$ are in $\cC_t$, by combining~\eqref{equ:cineq1} and~\eqref{equ:cineq2} we obtain
\begin{align}\label{equ:themax}
\begin{split}
\lambda_t \bar{\delta} = \delta_t &\ge \max\left(r(\bx_t \mid \bP_j,\bu_j),~r(\bx_{t} \mid \bP_q, \bu_q)\right)\\
&\ge \max\left( \sqrt{\lambda^2_{t-1}\bar{\delta}^2 + (k^{\bP_j}_t)^2},~~\bar{\delta} - k^{\bP_q}_t\right).
\end{split}
\end{align}

Recall the definition of $k^{\bP_j}_t$ and $k^{\bP_q}_t$:
\begin{align}
k^{\bP_j}_t &= \left\| f_{\bP_j}(\bx_t) - f_{\bP_j}(\bx_{t-1}) \right\|_2,\\
k^{\bP_q}_t &= \left\| f_{\bP_q}(\bx_t) - f_{\bP_q}(\bx_{t-1}) \right\|_2.
\end{align}
Geometrically, these quantities represent the 2D projection, respectively on cameras $j$ and $q$, of the 3D shift $\|\bx_t - \bx_{t-1}\|_2$ between the current and previous estimates.

At this juncture, the rest of the proof diverges based on the following conditions:
\begin{align}
k^{\bP_j}_t \ge k^{\bP_q}_t \;\;\;\; \text{or} \;\;\;\; k^{\bP_j}_t < k^{\bP_q}_t.
\end{align}

\begin{condition}\label{cond1}
$k^{\bP_j}_t \ge k^{\bP_q}_t$.
\end{condition}

Under Condition~\ref{cond1}, and following from~\eqref{equ:themax},
\begin{align}\label{equ:themax2}
\begin{split}
\lambda_t \bar{\delta} &\ge \max\left( \sqrt{\lambda^2_{t-1}\bar{\delta}^2 + (k^{\bP_j}_t)^2},~~\bar{\delta} - k^{\bP_q}_t\right)\\
&\ge \max\left( \sqrt{\lambda^2_{t-1}\bar{\delta}^2 + (k^{\bP_j}_t)^2},~~\bar{\delta} - k^{\bP_j}_t\right),
\end{split}
\end{align}
where the second inequality follows since $k_t^{\bP_j}$ and $k_t^{\bP_q}$ are both non-negative quantities. Interpreting the arguments in the second $\max$
\begin{align}
\sqrt{\lambda^2_{t-1}\bar{\delta}^2 + (k^{\bP_j}_t)^2} \;\;\; \text{and} \;\;\; \bar{\delta} - k^{\bP_j}_t
\end{align}
as two functions of $k^{\bP_j}_t$, observe that the first function increases with $k^{\bP_j}_t$ whilst the second decreases with $k^{\bP_j}_t$. Therefore, the RHS of~\eqref{equ:themax2} achieves its minimum when 
\begin{align}\label{equ:equal_max}
\sqrt{\lambda^2_{t-1}\bar{\delta}^2 + (k^{\bP_j}_t)^2} = \bar{\delta} - k^{\bP_j}_t.
\end{align}
Solving \eqref{equ:equal_max} for $k^{\bP_j}_t$ and replacing it in \eqref{equ:themax2}, we arrive at 
\begin{align}\label{equ:cineq3}
\lambda_t \bar{\delta} \ge \frac{1+\lambda^2_{t-1}}{2}\bar{\delta} \implies 1 - \lambda_t \le \frac{1-\lambda^2_{t-1}}{2}.
\end{align}
The second inequality in~\eqref{equ:cineq3} can ``inverted" as
\begin{align}
\frac{1}{1-\lambda_t} &\ge \frac{2}{(1-\lambda_{t-1})(1+\lambda_{t-1})}\\
&= \frac{1}{1-\lambda_{t-1}} + \frac{1}{1+\lambda_{t+1}} > \frac{1}{1-\lambda_{t-1}} + \frac{1}{2},
\end{align}
where the last step is due to $\lambda_{t-1} < 1$. By recursively expanding the above from $t, t-1, \dots, 2$, and recalling that $0 \le \lambda_1 \le 1$, we obtain
\begin{align}
\frac{1}{1-\lambda_t} > \frac{1}{1-\lambda_{1}}  + \frac{t-1}{2} > 1 + \frac{t-1}{2},
\end{align}
which implies
\begin{align}
\lambda_t > 1 - \frac{2}{1+t}.
\end{align}
For $t = \lceil 2/\epsilon \rceil + 1$ the last inequality reduces to
\begin{align}
\begin{split}
\lambda_{\lceil 2/\epsilon \rceil + 1} > 1 - \frac{2}{1 + (2/\epsilon+1)} = \frac{1}{1+\epsilon}
\end{split}
\end{align}
which contradicts~\eqref{equ:lambda}. Thus,~\eqref{equ:ass2} cannot be true. Whilst it may be disconcerting that we have chosen an iteration count $t = \lceil 2/\epsilon \rceil + 1$ that does not exist in Algorithm~\ref{alg:coreset}, for the purpose of a theoretical argument we can always arbitrarily extend the algorithm by one iteration.
	
Since~\eqref{equ:ass2} is false, there must be a $2 \le t \le \lceil 2/\epsilon \rceil$ (say $t^*$) that yields a $(1+\epsilon)$ approximation. The set index by $\cC_s$, which satisfies
\begin{align}
\max_{i \in \cX} r(\bx_s \mid \bP_i,\bu_i) \le \max_{i \in \cX} r(\bx_{t^*} \mid \bP_i,\bu_i) \le (1+\epsilon)\delta^*,
\end{align}
is thus a coreset. Assuming Condition~\ref{cond1} is always satisfied, therefore, the proof for Theorem~\ref{thm:coreset} is complete.

\begin{condition}\label{cond2}
$k^{\bP_j}_t < k^{\bP_q}_t$.
\end{condition}

The above derivations for Condition~\ref{cond1} unfortunately do not cover Condition~\ref{cond2}. In other words, if Condition~\ref{cond2} occurs during the iterations in Algorithm~\ref{alg:coreset}, we cannot guarantee that the output coreset $\cC_s$ satisfies Theorem~\ref{thm:coreset}.



Fortunately this deficiency can be rectified by a small tweak to Algorithm~\ref{alg:coreset}. Specifically, we replace Steps~\ref{step:updateCt} to \ref{step:tplus1} in the main algorithm with the slightly more elaborate steps in Algorithm~\ref{alg:coreset_excerpt}. Note that index $j$ in Algorithm~\ref{alg:coreset_excerpt} is appropriately chosen from $\cC_{t-1}$ according to Lemma~\ref{lem:obtuse}. Intuitively, the modification causes Algorithm~\ref{alg:coreset} to ``skip a step" whenever Condition~\ref{cond2} presents itself; namely, the most violating datum $q$ is still inserted into the coreset, but the iteration counter is not incremented.

\begin{algorithm}
\begin{algorithmic}[1]
\IF{$k^{\bP_j}_t \ge k^{\bP_q}_t$}
\STATE $\cC_t \leftarrow \cC_{t-1} \cup \{ q \}$.
\STATE $(\bx_t,\delta_t) \leftarrow$ Minimiser and minimised value of \eqref{equ:triang} on data indexed by $\cC_t$.
\STATE $t \leftarrow t+1.$
\ELSE
\STATE $\cC_{t-1} \leftarrow \cC_{t-1} \cup \{ q \}$.
\STATE $(\bx_{t-1},\delta_{t-1}) \leftarrow$ Minimiser and minimised value of \eqref{equ:triang} on data indexed by $\cC_{t-1}$.
\ENDIF
\end{algorithmic}
\caption{Pseudo-code to replace Steps~\ref{step:updateCt} to~\ref{step:tplus1} in Algorithm~\ref{alg:coreset} to accommodate Condition~\ref{cond2}.}
\label{alg:coreset_excerpt}
\end{algorithm}

With the modification above and by Property~\ref{prop:mono}, we have that two successive coresets $\cC_{t-1}$ and $\cC_t$ give
\begin{align}
\delta_{t-1} \le \delta_t,
\end{align}
since $\cC_{t-1} \subseteq \cC_t$. Thus, all the derivations starting from \eqref{equ:themax} will hold, and the output coreset $\cC_s$ from Algorithm~\ref{alg:coreset} with the above modification will always satisfy Theorem~\ref{thm:coreset}.

\subsection{Size of Output Coreset and Runtime Analysis}\label{sec:sizeofcoreset}

With the modification to account for Condition~\ref{cond2}, we must anticipate that in general 
\begin{align}
|\cC_{t}| - |\cC_{t-1}| \ge 1,
\end{align}
i.e., there could be occurrences of Condition~\ref{cond2} between any two successive increments to the iteration counter. This also implies that the size of the output coreset $\cC_s$ is
\begin{align}
\lceil 2/\epsilon \rceil + 3 + V,
\end{align}
where $V$ is the total number of occurrences of Condition~\ref{cond2} throughout Algorithm~\ref{alg:coreset} (recall that $\cC_1$ is initialised already with four of the available measurements).

To facilitate the analysis of the runtime, define $\alpha$ as the probability of Condition~\ref{cond2} occurring at a particular iteration of the main loop (thus, $V = \alpha\lceil 2/\epsilon \rceil / (1-\alpha)$). Therefore, the number of times the main loop is traversed is
\begin{align}\label{equ:alpha}
\lceil 2/\epsilon \rceil/(1-\alpha).
\end{align}
We argue that the value of $\alpha$ is dependent mainly on the distribution of the cameras and the structure of the scene, rather than on the number of measurements $N$ itself (see evidence in Sec.~\ref{sec:synth}). Under this assumption, the number of effective iterations of Algorithm~\ref{alg:coreset}, and hence the size of the output coreset, depends only on $\epsilon$.

Of course, the actual runtime of Algorithm~\ref{alg:coreset} depends closely on the routine used to solve~\eqref{equ:triang} at each iteration. Many previous studies have shown that~\eqref{equ:triang} can be solved efficiently~\cite{kahl2005multiple,olsson2007efficient,agarwal2008fast}, more so since the solver need only be invoked on a small subset $\cC_t$ at each iteration in Algorithm~\ref{alg:coreset}. Sec.~\ref{sec:results} will investigate the actual runtimes of Algorithm~\ref{alg:coreset} on real image datasets for 3D reconstruction.

\subsection{Error Backtracking for Anytime Operation}\label{sec:backtrack}

By anytime mode, we mean the allowance to stop Algorithm~\ref{alg:coreset} prematurely (i.e., before any of the terminating conditions are met) with the ability to bound the approximation error of the current best estimate $\bx_s$ w.r.t.~$\bx^\ast$.

If Algorithm~\ref{alg:coreset} is run up to $t = \lceil 2/\epsilon \rceil$ (which implies that global convergence has not occurred before that), then the bound~\eqref{equ:bound} holds. To facilitate analysis with non-integral $t$, we can equivalently state that if Algorithm~\ref{alg:coreset} is run until \emph{at least} $t = 2/\epsilon$, then~\eqref{equ:bound} holds. Inverting the relationship between $t$ and $\epsilon$, we can restate the bound as
\begin{align}\label{equ:boundloose}
\max_{i\in \cX}~r(\bx_s \mid \bP_i, \bu_i) \le (1+2/t) \delta^*,
\end{align}
i.e., if Algorithm~\ref{alg:coreset} is run up until an arbitrary $t$, we can expect a $(2/t)$-factor approximation to~\eqref{equ:triang} (note that $\bx_s$ in this case is the best estimate up to iteration $t$).

Care must be taken to spell out "running up until an arbitrary $t$". By this, we mean running the main loop (Steps~\ref{step:startwhile} to~\ref{step:endwhile}) in its entirely (including the modification summarised in Algorithm~\ref{alg:coreset_excerpt}) under that particular $t$ value, \emph{and then} conducting the post-hoc refinement (Steps~\ref{step:startrefine} to~\ref{step:endrefine}) such that the estimate $\bx_t$ from the last coreset $\cC_t$ has a chance to be used to update the incumbent $\bx_s$. If this last update is not attempted, then~\eqref{equ:boundloose} is not guaranteed to hold.


\section{Validation on Synthetic Data}\label{sec:synth}

Here we validate our theoretical results above on synthetically generated data for triangulation.





\subsection{Data Generation}

We synthesised four types of camera pose distributions:
\begin{itemize}
\item Type A: Camera positions were on a straight line. 
\item Type B: Camera positions were randomly distributed.
\item Type C: Camera positions were on a circle.
\item Type D: Stereo cameras with fixed baselines and positions randomly distributed.
\end{itemize}
For Types B and D, the camera orientations were randomly generated. For Types A and C, angular noise was added to the orientation/rotation matrices. In all cases, the cameras could observe the 3D point to respect cheirality. See Fig.~\ref{fig:camera_distrib} for sample data instances generated. Type A simulates a robotic exploration scenario where the robot views a scene from a directed trajectory~\cite{mur2015probabilistic}, Type B simulates large-scale 3D reconstruction from crowd-sourced images~\cite{snavely2008modeling}, Type C simulates the usage of a rotating platform for 3D modelling~\cite{web:3dscan}, and Type D simulates large-scale 3D reconstruction using stereo cameras~\cite{alcantarilla2013large}.

\begin{figure*}[ht]
	\centering     
	\subfigure[Type A]{\label{fig:random}\includegraphics[height=0.22\textwidth]{./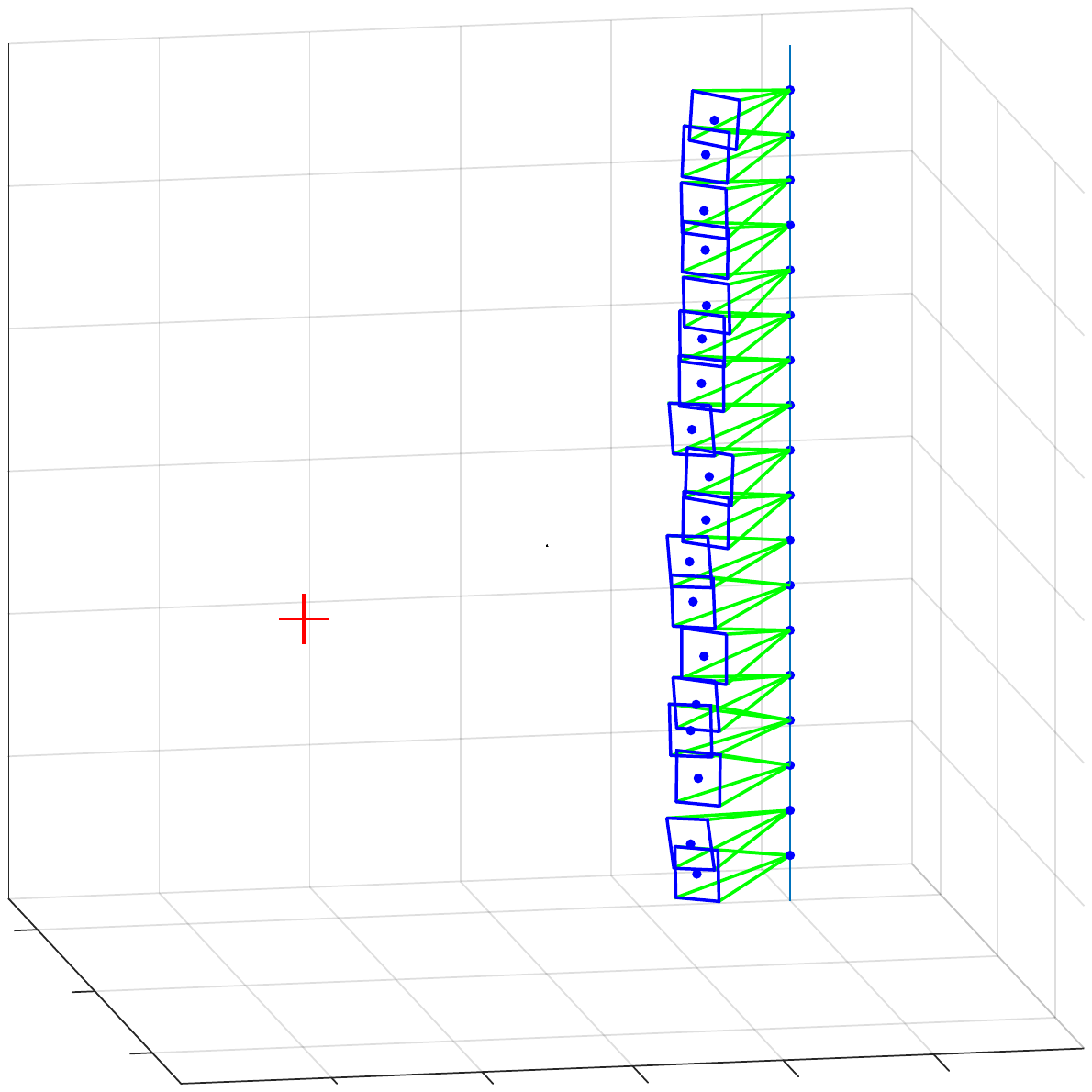}}\hspace{1em}
	\subfigure[Type B]{\label{fig:line}\includegraphics[height=0.22\textwidth]{./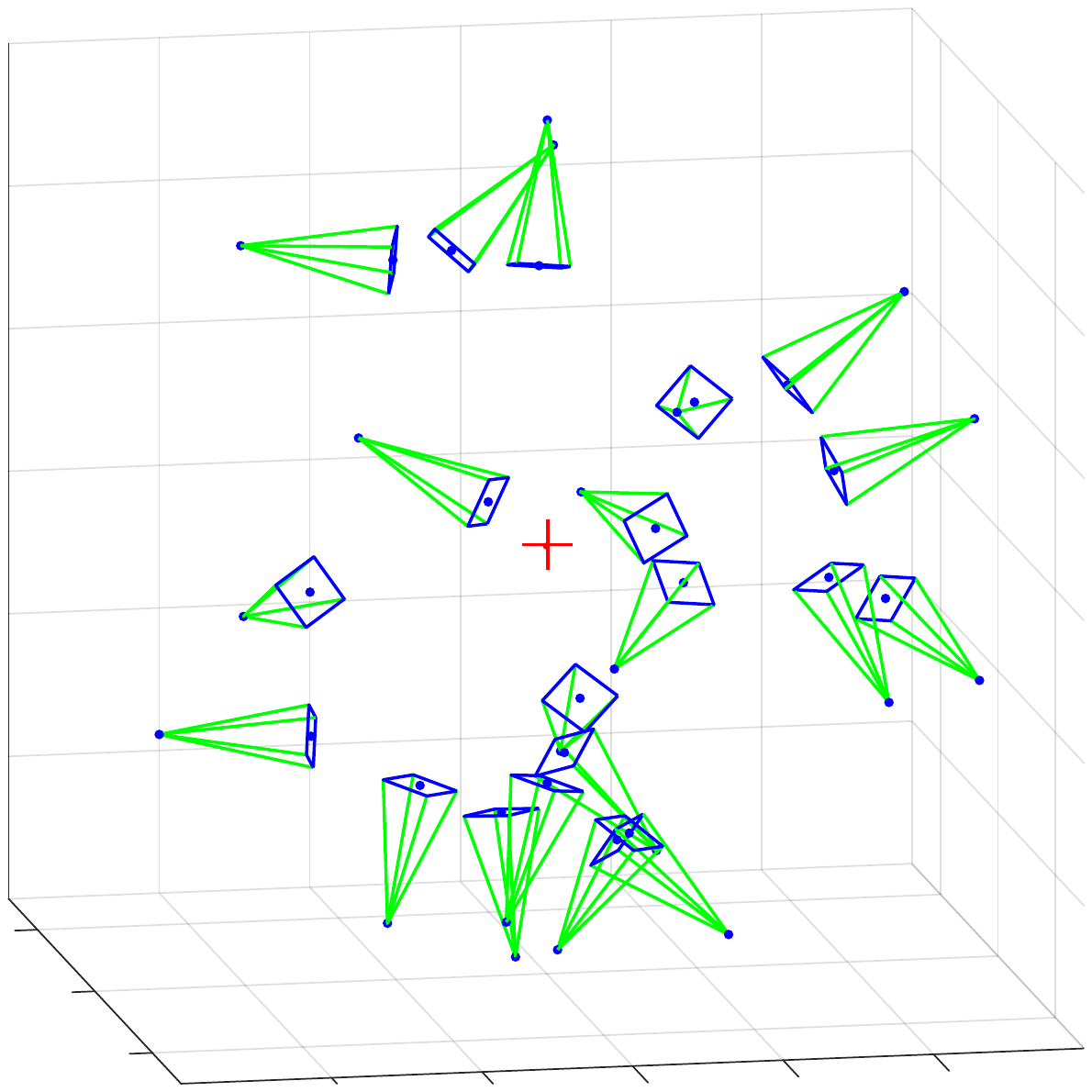}}\hspace{1em}
	\subfigure[Type C]{\label{fig:circle}\includegraphics[height=0.22\textwidth]{./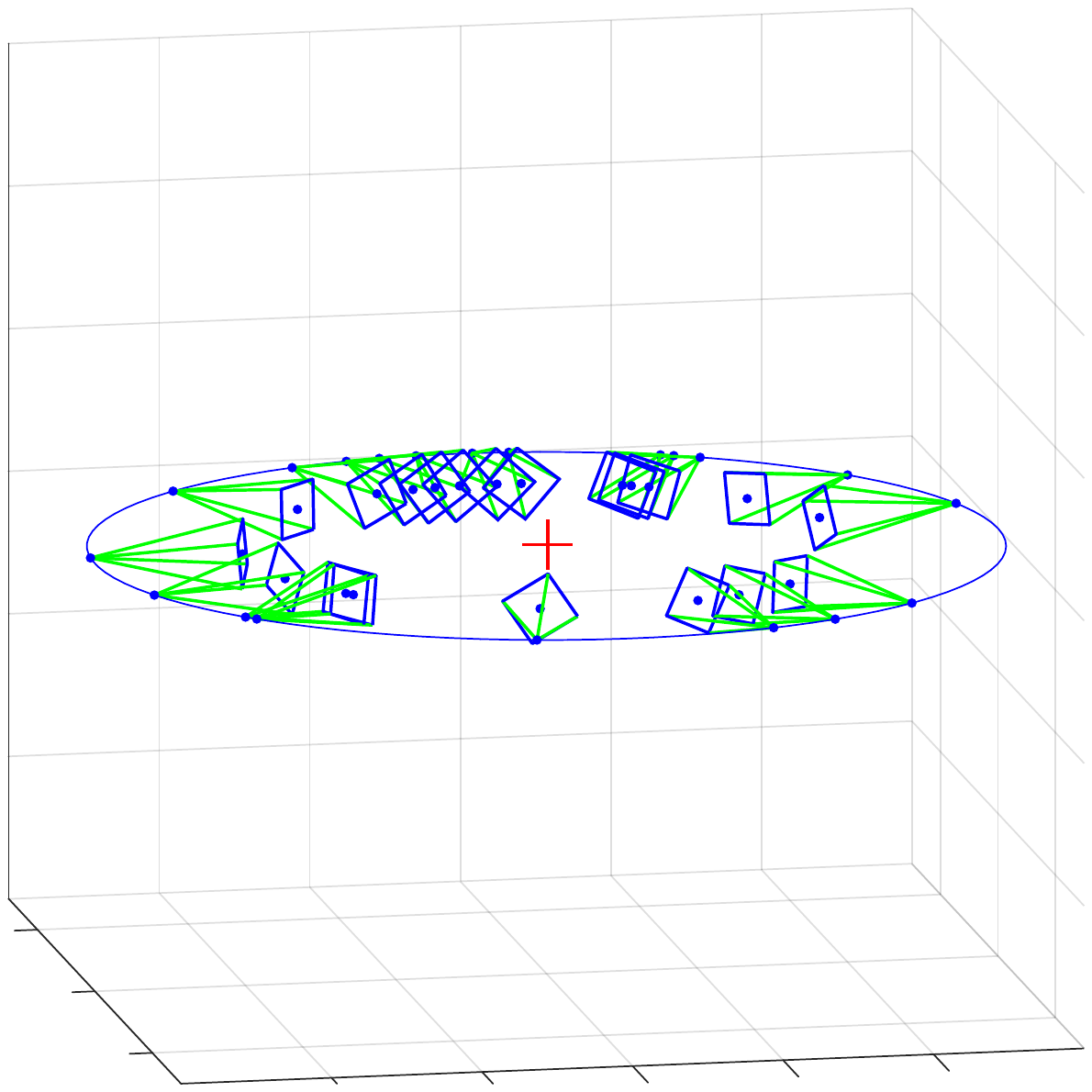}}\hspace{1em}
	\subfigure[Type D]{\label{fig:centred}\includegraphics[height=0.22\textwidth]{./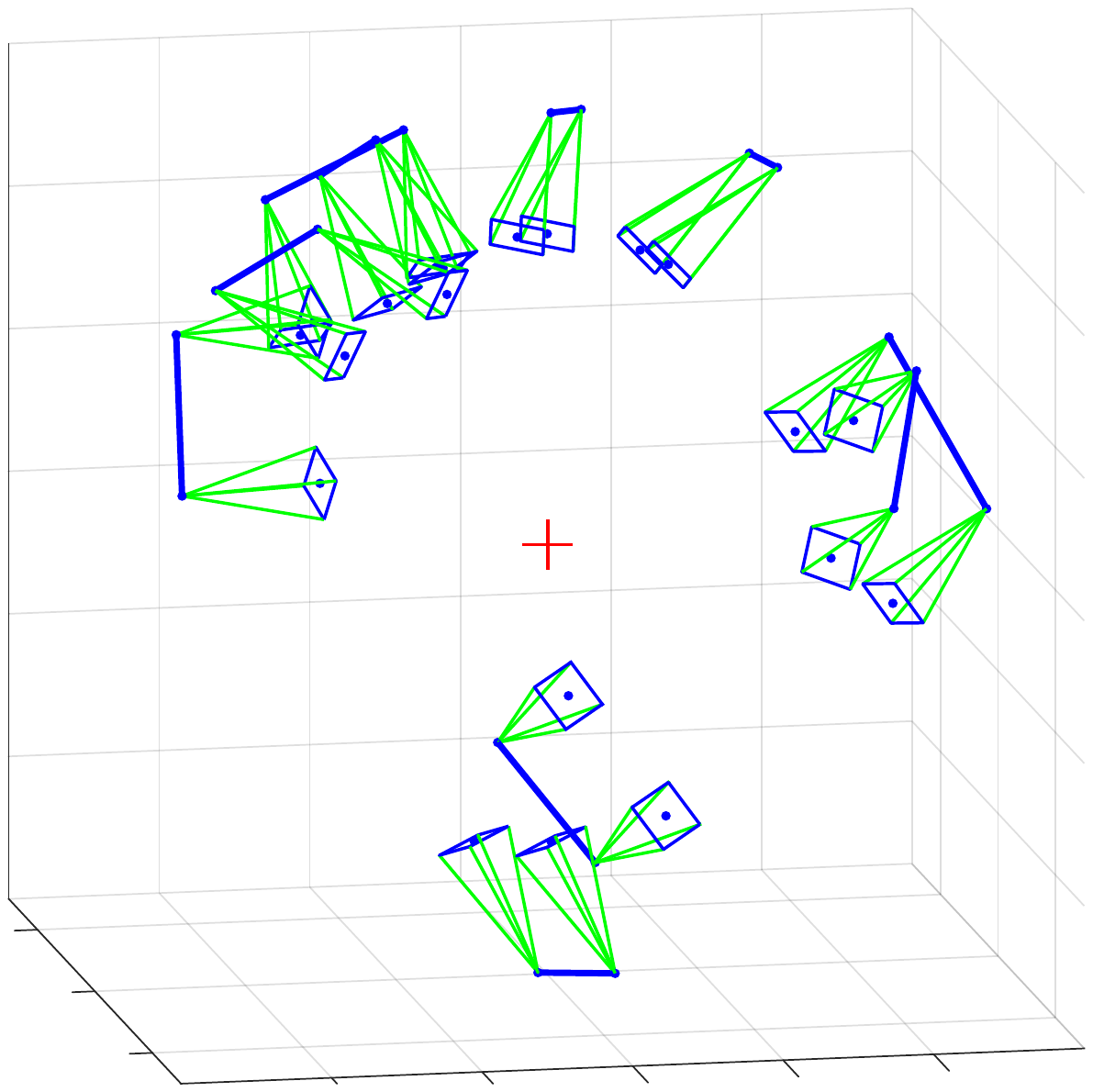}}
	\caption{The four types of synthetically generated triangulation instances. Type A: Camera positions are on a straight line; Type B: Camera positions are randomly distributed; Type C: Camera positions are on a circle; Type D: Stereo cameras with fixed baselines and positions randomly distributed. In all cases, the cameras are roughly oriented towards the 3D point to respect cheirality.}
	\label{fig:camera_distrib}
\end{figure*}

For each camera distribution, $N$ image measurement/camera matrix pairs $\{ \bP_i,\bu_i \}^{N}_{i=1}$ are generated by projecting the 3D point onto each camera and corrupting the projected point with Gaussian noise $\sigma$ = 10 pixels.

\subsection{Validation of Approximation Accuracy}

To experimentally validate Theorem~\ref{thm:coreset}, one instance of each of the camera distribution type in Fig.~\ref{fig:camera_distrib} with $N = 100$ views were generated. For each camera distribution, $200$ 3D scene points were created (in a way that the 3D points are observable in all $N$ cameras) and projected onto the $N$ cameras. This created a total of $200$ triangulation instances~\eqref{equ:triang}.

On each triangulation instance, we executed Algorithm~\ref{alg:coreset}. The approximation error ratio
\begin{align}\label{equ:err_ratio}
\frac{\max_{i\in \cX}~r(\bx_{s} \mid \bP_i, \bu_i)}{\delta^*}
\end{align}
achieved by the current estimate $\bx_s$ at each $t$ is plotted; these are shown as red curves in Fig.~\ref{fig:bnd_wrt_t} (one curve for each triangulation instance). Note that the horizontal axis begins at $t = 2$, since the bound~\eqref{equ:boundloose} is not guaranteed to hold for the initial coreset $\cC_1$. Note also that due to the allowance of ``skipping" by inserting Algorithm~\ref{alg:coreset_excerpt} into Algorithm~\ref{alg:coreset}, each $t$ can involve several updates to $\bx_s$; in Fig.~\ref{fig:bnd_wrt_t} we plotted the error ratio pertaining to final $\bx_s$ in each $t$.

\begin{figure}[h!]\centering     
	\includegraphics[width=0.7\columnwidth]{./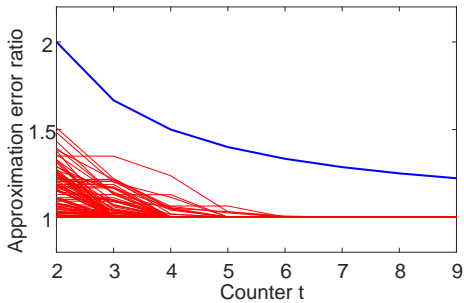}
	\caption{Approximation error ratios~\eqref{equ:err_ratio} across counter $t$ (red curves) for $200$ synthetically generated triangulation instances. The blue curve is the upper bound on the error ratio ($1 + 2/t$) as predicted by Theorem~\ref{thm:coreset}.}
\label{fig:bnd_wrt_t}	
\end{figure}

By the backtracking formula~\eqref{equ:boundloose}, the approximation error ratios should lie below the curve
\begin{align}\label{equ:bnd_ratio}
1 + 2/t;
\end{align}
this curve is plotted in blue in Fig.~\ref{fig:bnd_wrt_t}. As predicted, the bound is respected across all $t$. In Sec.~\ref{sec:results}, we will further validate Theorem~\ref{thm:coreset} using real image data.


\subsubsection{Illustrating Effect of Condition~\ref{cond2}}

The effect of Condition~\ref{cond2} on Algorithm~\ref{alg:coreset} is demonstrated in Fig.~\ref{fig:bnd_all}. On one of the synthetic triangulation instances, Fig.~\ref{fig:bndI} plots the approximation error ratio~\eqref{equ:err_ratio} against the coreset size, as the coreset is being accumulated in Algorithm~\ref{alg:coreset} (the horizontal axis thus begins at $5$ since the initial coreset $\cC_1$ has size $4$). The effects of skipping on the ratio bound~\eqref{equ:bnd_ratio} to account for Condition~\ref{cond2} is also shown; specifically, as the coreset is increased from size $8$ to $9$, Condition~\ref{cond2} occured and $t$ was not incremented. Hence the bound~\eqref{equ:bnd_ratio} does not decrease between these two steps.

Fig.~\ref{fig:bndI2} plots the approximation error ratios for all the synthetic triangulation instances from Fig.~\ref{fig:bnd_wrt_t} against coreset size. In this figure, since Condition~\ref{cond2} occurred at different iterations for the respective problem instances, the bounding curve is not plotted. Observe that all the problem instances converged to a coreset size that is not very much larger than the value of $t$ at the time of convergence, cf.~Fig.~\ref{fig:bnd_wrt_t}. 

\begin{figure}[ht!]\centering     
\subfigure[]{\includegraphics[width=0.7\columnwidth]{./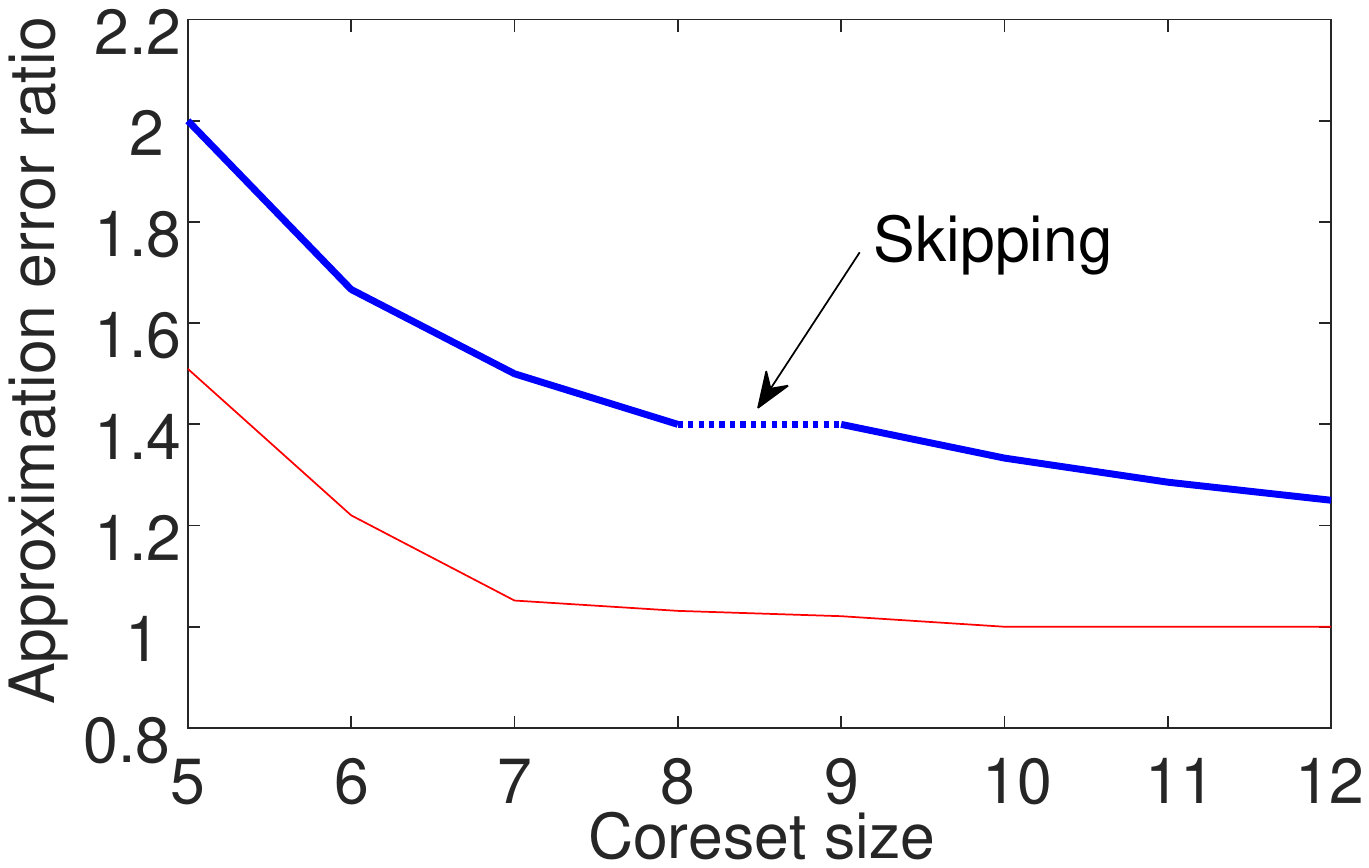}\label{fig:bndI}}
\subfigure[]{\includegraphics[width=0.7\columnwidth]{./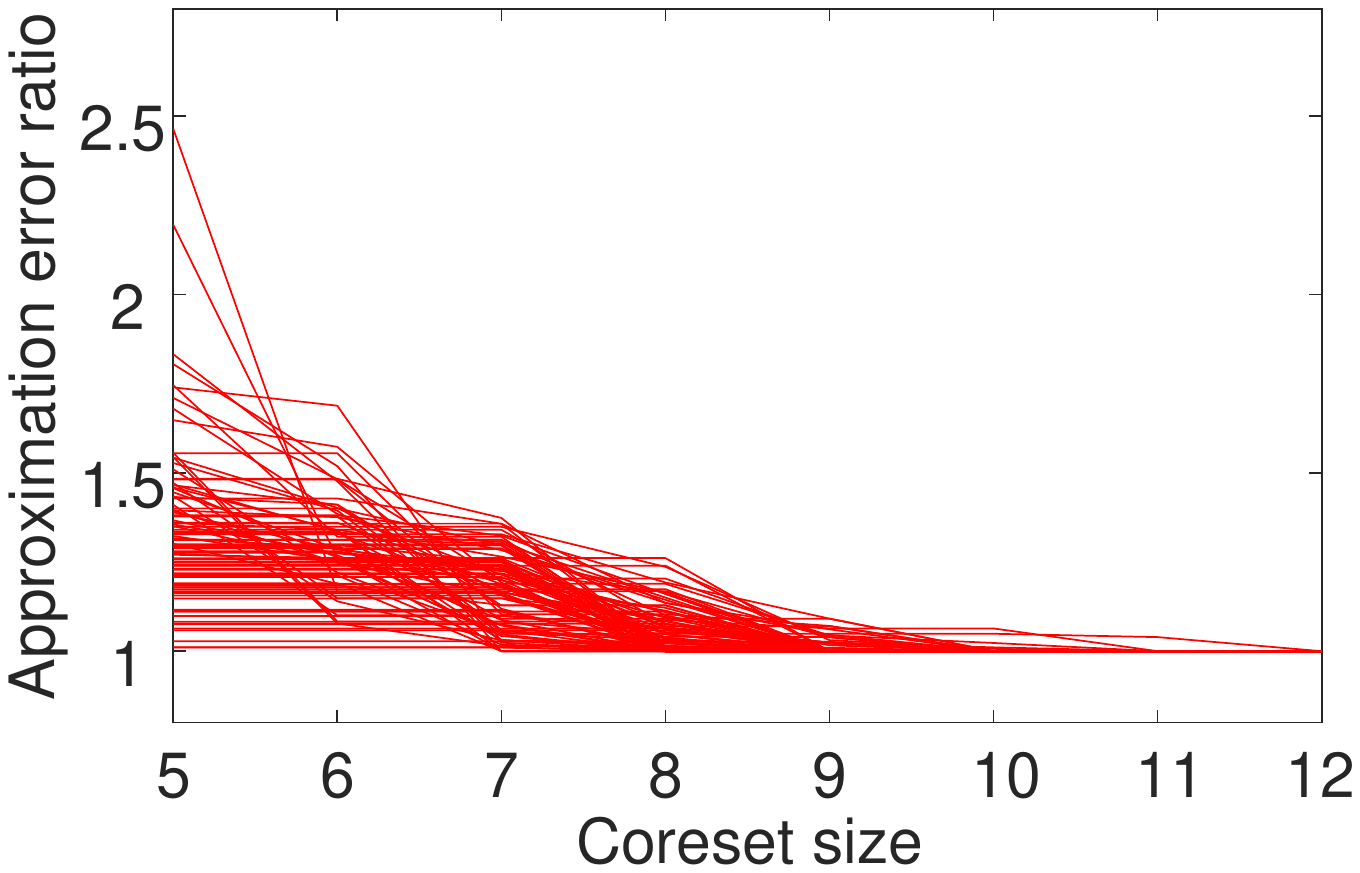}\label{fig:bndI2}}
\caption{(a) Approximation error ratio~\eqref{equ:err_ratio} plotted against coreset size for one of the synthetic data instances. In this instance, Condition~\ref{cond2} occurred as the coreset was increased from size $8$ to $9$, thus the bound~\eqref{equ:bnd_ratio} does not decrease between these two iterations. (b) Approximation error ratio plotted against coreset size for all the synthetic data instances. Since Condition~\ref{cond2} occurred at different iterations for the respective problem instances, the bounding curve is not plotted.}
\label{fig:bnd_all}
\end{figure}


\subsubsection{Probability of Condition~\ref{cond2}}

We demonstrate that the probability $\alpha$ of the occurrence of Condition~\ref{cond2} in any problem instance is mainly affected by the way the camera poses are distributed, and is not a factor of problem size $N$. For a complete execution of Algorithm~\ref{alg:coreset} on the following particular problem instances, we obtain $\alpha$ empirically as the ratio of the number of occurrence of Condition~\ref{cond2} over the effective number of iterations in the algorithm.

The experimental settings were as follows: for each type of camera distribution, $200$ 3D points were randomly generated and projected onto $N$ views, where $N$ was varied from $100$ to $10,000$. On each problem instance, Algorithm~\ref{alg:coreset} was executed $20$ times (with random initialisations) and the $\alpha$ values were recorded. Fig.~\ref{fig:alpha} shows the average $\alpha$ over all instances as a function of $N$. Evidently $\alpha$ is almost constant across $N$, and the biggest factor in the difference in $\alpha$ is the type of camera pose distribution (the curves of Types B and D are similar since they are essentially randomly distributed camera poses). This supports the analysis in Sec.~\ref{sec:sizeofcoreset} that the total runtime of Algorithm~\ref{alg:coreset}, and hence the size of the output coreset, is mainly dependent on the desired approximation factor $\epsilon$.

The next section will further investigate the size of the coreset output by Algorithm~\ref{alg:coreset}.


\begin{figure}[ht]\centering     
	\includegraphics[width=0.7\columnwidth]{./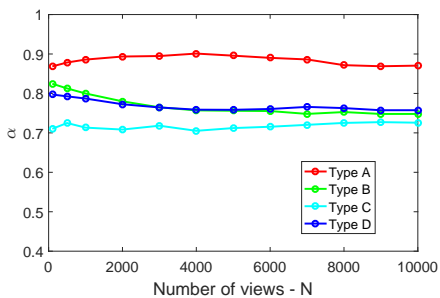}
	\caption{Average $\alpha$ (probability of occurrence of Condition~\ref{cond2}) as the problem size $N$ increases, separated according to the type of camera pose distribution (see Fig.~\ref{fig:camera_distrib}).}
	\label{fig:alpha}
\end{figure}

\subsection{Size of Output Coreset}\label{sec:exp_coreset_size}

To investigate the size of the output coreset produced by Algorithm~\ref{alg:coreset} as a function of the approximation error $\epsilon$, we generated triangulation instances for the four types of camera distribution, with $200$ 3D points in each instance but with varying problem size (number of views) $N \in \{100, 500, 1000, 5000\}$. On each instance, the setting of $\epsilon$ for Algorithm~\ref{alg:coreset} was varied decreasingly from $1$ and the size of the output coreset was recorded.

\begin{figure*}[h]\centering     
	\subfigure[$N$=100]{\includegraphics[width=0.7\columnwidth]{./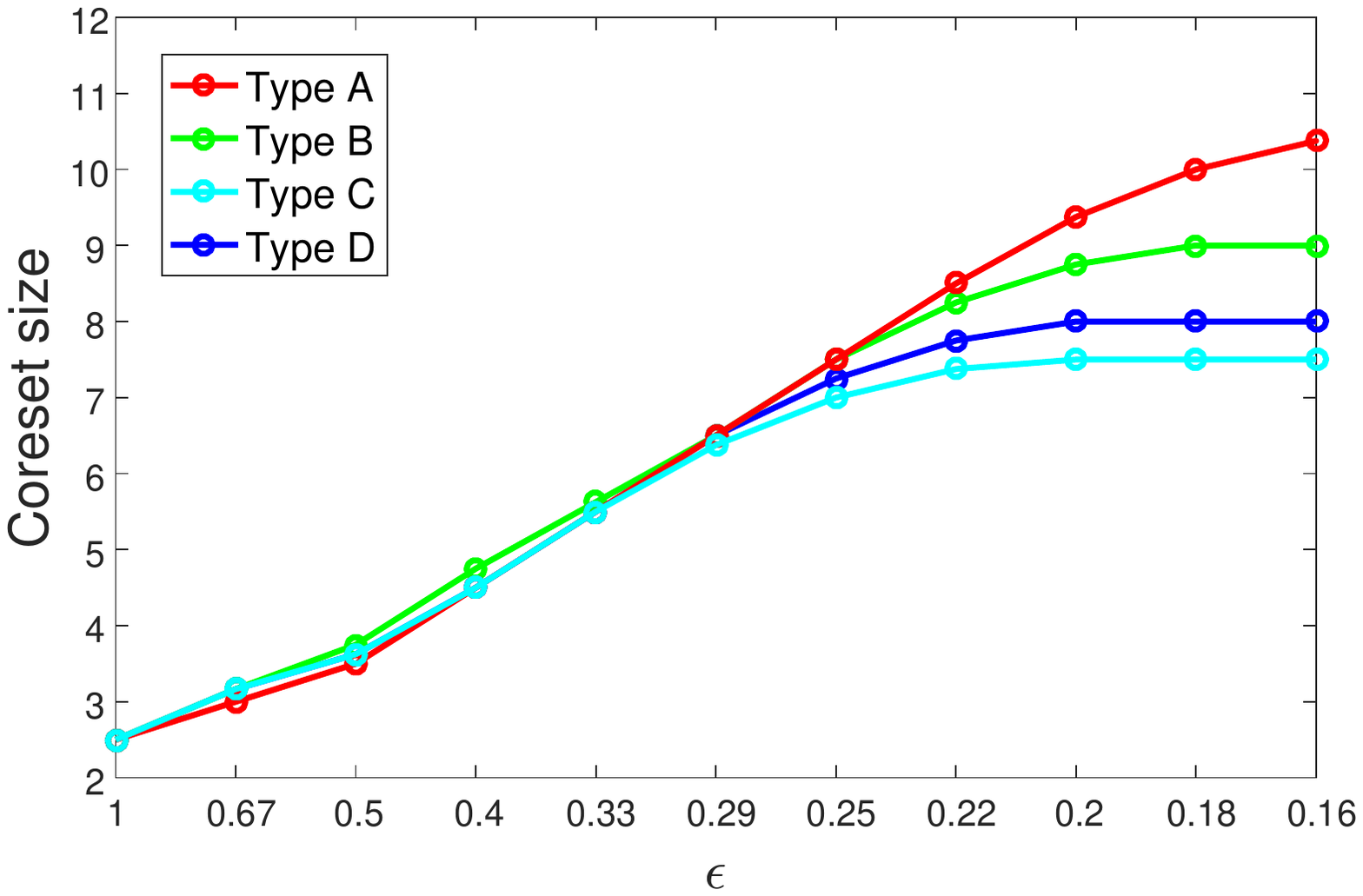}}\hspace{2em}
	\subfigure[$N$=500]{\includegraphics[width=0.7\columnwidth]{./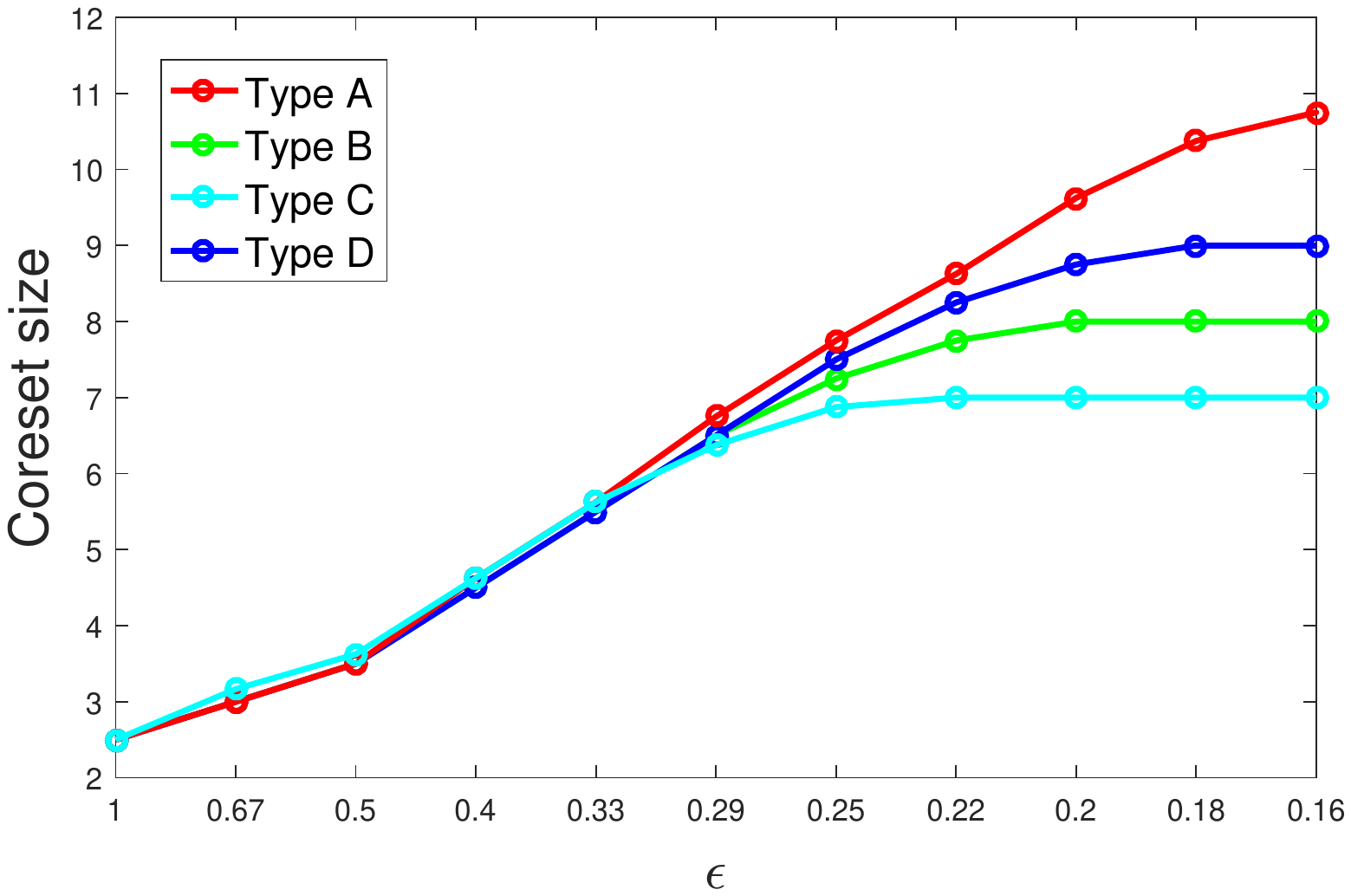}}
    \subfigure[$N$=1000]{\includegraphics[width=0.7\columnwidth]{./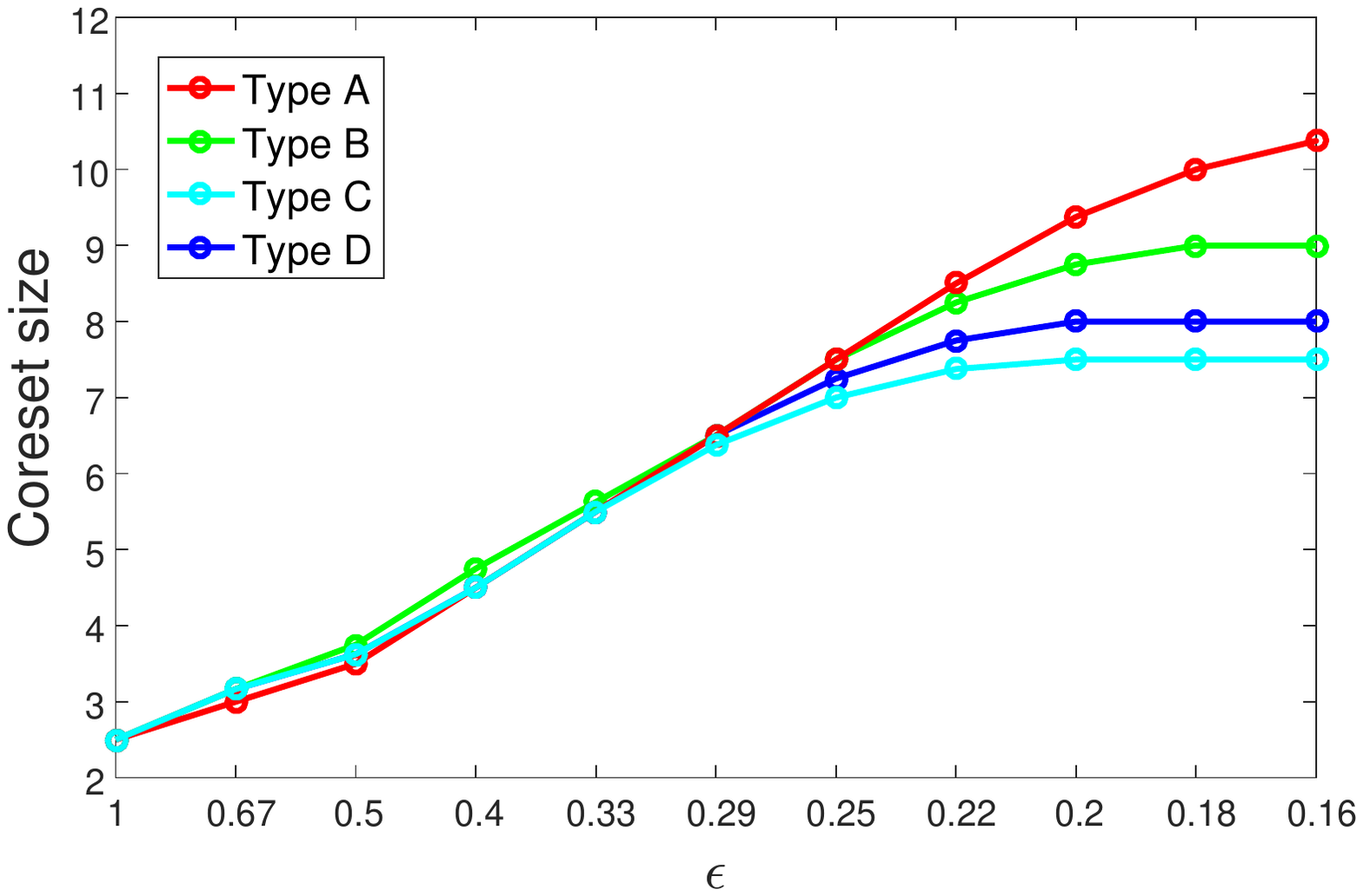}}\hspace{2em}
	\subfigure[$N$=5000]{\includegraphics[width=0.7\columnwidth]{./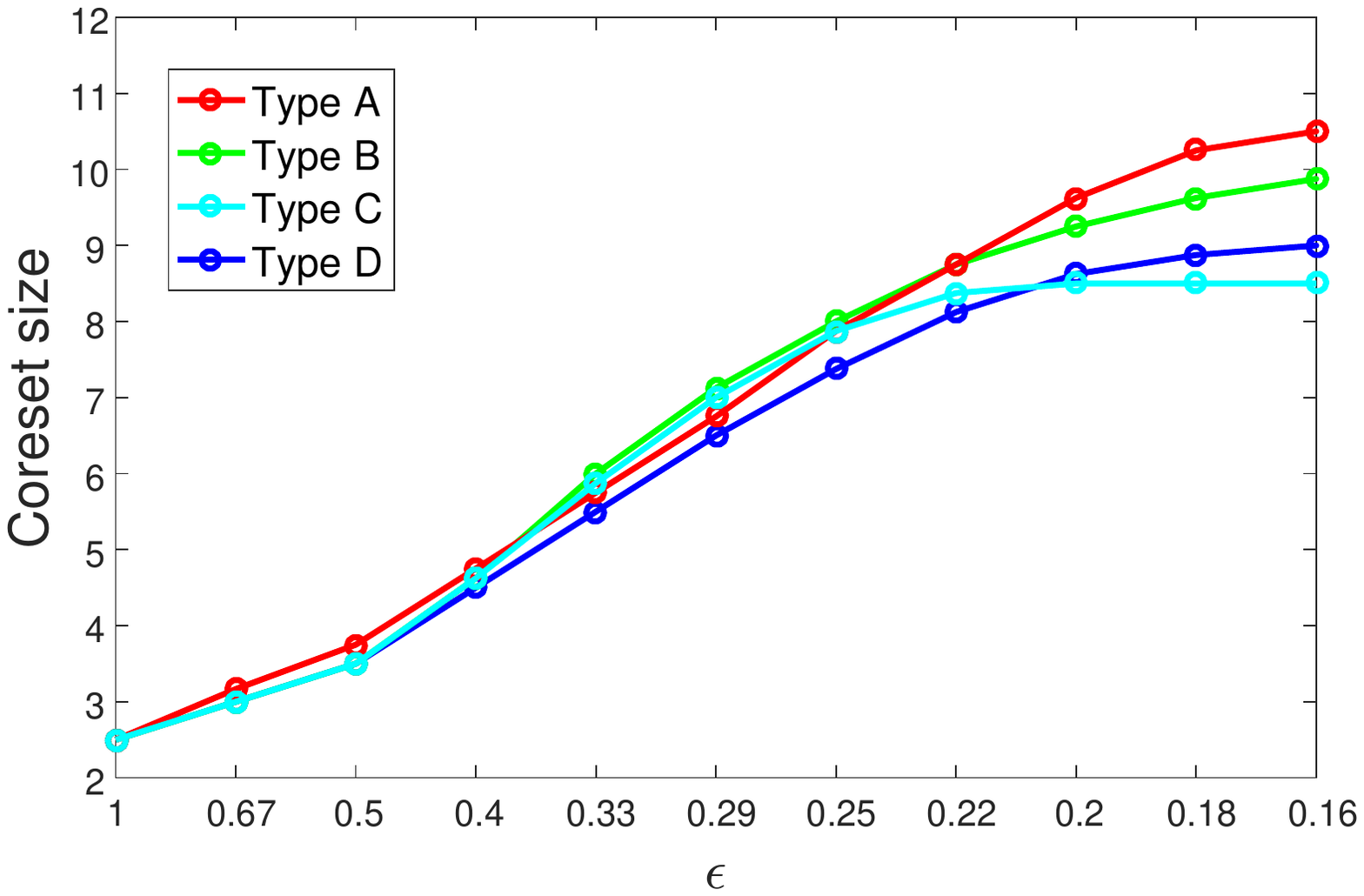}}
	\caption{Average size of coreset plotted against decreasing $\epsilon$, separated according to the four types of camera distribution as in Fig.~\ref{fig:camera_distrib}. Panel (a),(b),(c), and (d) are the results respectively for $N$ = $ 100 $, $ 500 $, $ 1000 $, and $5000$. }
	\label{fig:epsilon}
\end{figure*}

Fig.~\ref{fig:epsilon} plots the maximum coreset size as a function of $\epsilon$, averaged across all instances, but separated according to type of camera pose distributions and problem size $N$. The results show that the coreset size depends mainly on $\epsilon$ and is independent of the problem size $N$. Further, for all the data settings/parameters, the coreset size did not exceed $12$. For some of the types of distribution, the coreset size also converged earlier due to earlier global convergence.

\section{Experiments on Real Data}\label{sec:results}

We conducted experiments on real data to validate Theorem~\ref{thm:coreset} and investigate the performance of Algorithm~\ref{alg:coreset} for $\ell_\infty$ triangulation. We used a standard machine with 3.2 GHz processor and 16 GB main memory.

\subsection{Datasets and Initialisation}

We tested on publicly available datasets for large scale 3D reconstruction, namely, Vercingetorix Statue, Stockholm City Hall, Arc of Triumph, Alcatraz, \"{O}rebro Castle~\cite{enqvist2011stable,olsson2011stable}, and Notre Dame~\cite{snavely2008modeling}. The \emph{a priori} estimated camera poses and intrinsics supplied with these datasets were used to derive camera matrices. For triangulation, the size of an instance is the number of observations of the target 3D point. To avoid excessive runtimes, we randomly sampled 10\% of the scene points in each dataset - this reduces the number of problem instances, but not the size of each of the selected instances. A histogram of the problem sizes for each of the above datasets are shown in the top left panel of Figs.~\ref{fig:Vercingetorix} to~\ref{fig:NotreDame}.

Our coreset method was initialised as shown in the first few steps in Algorithm~\ref{alg:coreset}, which amounts to randomly choosing four data to instantiate $\bx_1$ by solving~\eqref{equ:triang}. For any other algorithm that requires initialisation, the same $\bx_1$ or its current maximum reprojection error were provided as the initial estimates.

\subsection{Validation of Approximation Accuracy}

The top right panel in Figs.~\ref{fig:Vercingetorix} to~\ref{fig:NotreDame} show the actual ratio of errors versus the predicted ratio (using the backtracking formula in Sec.~\ref{sec:backtrack}) across the iterations of Algorithm~\ref{alg:coreset} for all problem instances in the datasets. Again, the results confirm the validity of Theorem~\ref{thm:coreset}.

\subsection{Relative Speed-up of Coreset over Batch}\label{sec:realresults}

Here we investigate the practicality of Algorithm~\ref{alg:coreset} as a \emph{global optimiser} for $\ell_\infty$ triangulation. As described in Section~\ref{sec:coreset}, Algorithm~\ref{alg:coreset} is a meta-algorithm which requires a sub-routine to solve~\eqref{equ:triang} on the subset indexed by $\cC_t$. We thus compared running Algorithm~\ref{alg:coreset} with a specific solver as a sub-routine, and the direct execution of the same solver in ``batch mode" on the whole data. Since the runtime of Algorithm~\ref{alg:coreset} depends on the efficiency of the embedded solver, the key performance indicator here is the \emph{relative speed-up} achieved by coreset over batch.

Based on the investigations in~\cite{agarwal2008fast}, we have chosen to use bisection~\cite{kahl2005multiple} and Dinkelbach's method~\cite{dinkelbach1967nonlinear} (equivalent to~\cite{olsson2007efficient}) to embed into Algorithm~\ref{alg:coreset}. Although the best performing technique in~\cite{agarwal2008fast} was Gugat's algorithm~\cite{gugat1996fast}, our experiments suggested that it did not outperform Dinkelbach's method on the triangulation problem. SeDuMi~\cite{sturm1999using} was used to solve the SOCP sub-problems in~\cite{kahl2005multiple,dinkelbach1967nonlinear}.

The bottom diagrams of Figs.~\ref{fig:Vercingetorix} to~\ref{fig:NotreDame} show the average runtime of coreset and batch as a function of problem size (number of views), for each respective $\ell_\infty$ solver. Observe that the runtime of batch increased linearly and then exponentially, whilst coreset exhibited almost constant runtime---the latter observation is not surprising, since Algorithm~\ref{alg:coreset} usually terminated at $\le 10$ iterations regardless of the problem size at shown in the top right panel.

Of course, on all of the datasets, most of the triangulation instances are small, as shown in the histogram at the upper left panel of Figs.~\ref{fig:Vercingetorix} to~\ref{fig:NotreDame}. However, in these datasets, there are sufficient numbers of moderate to large problem instances, such that the total runtime of coreset is still much smaller than then total runtime of batch. Table~\ref{table:l2} shows the total and average runtime of the variants considered. Evidently, coreset outperformed batch in all the datasets.

\begin{table*}[ht]\centering
	\renewcommand{\arraystretch}{1.3}
	\begin{tabular}{ |c|p{10em}|p{8em}|c|c|c|c| }
		\cline{4-7}
		\multicolumn{3}{c}{} & \multicolumn{4}{|c|}{Total runtime (seconds)}  \\
		\cline{4-7}		
		\multicolumn{3}{c}{} & \multicolumn{2}{|c|}{Bisection solver} & \multicolumn{2}{|c|}{Dinkelbach solver} \\ 
		\hline
		Dataset    & Scene points (number of triang. instances) & Number of views (max. triang. size) & Batch & Coreset & Batch & Coreset     \\ \hline
		Vercingetorix     & \multicolumn{1}{|c|}{594}  & \multicolumn{1}{|c|}{68}  & 57    & \textbf{44} (23\%)  & 23    & \textbf{17} (26\%)  \\ \hline
Stockholm         & \multicolumn{1}{|c|}{2176} & \multicolumn{1}{|c|}{43}  & 258   & \textbf{177} (31\%) & 109   & \textbf{74} (32\%)   \\ \hline
Arc of Triumph    & \multicolumn{1}{|c|}{2744} & \multicolumn{1}{|c|}{173} & 607   & \textbf{243} (59\%) & 204   & \textbf{89} (56\%)   \\ \hline
Alcatraz          & \multicolumn{1}{|c|}{4431} & \multicolumn{1}{|c|}{419} & 1231  & \textbf{520} (57\%) & 452   & \textbf{239} (47\%)  \\ \hline
\"{O}rebro Castle & \multicolumn{1}{|c|}{5943} & \multicolumn{1}{|c|}{761} & 4052  & \textbf{800} (80\%) & 1440  & \textbf{351} (75\%)   \\ \hline
Notre Dame        & \multicolumn{1}{|c|}{7149} & \multicolumn{1}{|c|}{715} & 4148  & \textbf{696} (83\%) & 2399  & \textbf{582} (75\%)\\ \hline
	\end{tabular}
	\vspace{1em}
	\caption{Comparisons between coreset and batch in terms of total runtime in seconds. For coreset, the number in parentheses indicates the percentage of reduction in runtime by using the coreset method over the batch counterpart. }
	\label{table:l2}
\end{table*}

\begin{table*}[ht]\centering
	\renewcommand{\arraystretch}{1.3}
	\begin{tabular}{ |c|p{10em}|p{8em}|c|c|c|c| }
		\cline{4-7}
		\multicolumn{3}{c}{} & \multicolumn{4}{|c|}{Total runtime (seconds)} \\
		\cline{4-7}		
		\multicolumn{3}{c}{} & \multicolumn{2}{|c|}{Dinkelbach ($p = 1$)} & \multicolumn{2}{|c|}{Polyhedron ($p = \infty$)}\\
		\hline
		Dataset    & Scene points (number of triang. instances) & Number of views (max. triang. size) & Batch & Coreset & Batch & Coreset             \\ \hline		
		Vercingetorix     & \multicolumn{1}{|c|}{594}  & \multicolumn{1}{|c|}{68}  & 1.61 & \textbf{1.58 } (13\%)  & 1.82 & \textbf{1.83} (-1\%) \\ \hline
		Stockholm         & \multicolumn{1}{|c|}{2176} & \multicolumn{1}{|c|}{43}  & 10.4 & \textbf{9.87 } (19\%)  & 12.2 & \textbf{11.7} (4\% ) \\ \hline
		Arc of Triumph    & \multicolumn{1}{|c|}{2744} & \multicolumn{1}{|c|}{173} & 20.5 & \textbf{12 } (49\%)    & 23.5 & \textbf{14.1} (40\%) \\ \hline
		Alcatraz          & \multicolumn{1}{|c|}{4431} & \multicolumn{1}{|c|}{419} & 166  & \textbf{40 } (73\%)    & 146  & \textbf{28.9} (80\%) \\ \hline
		\"{O}rebro Castle & \multicolumn{1}{|c|}{5943} & \multicolumn{1}{|c|}{761} & 1011 & \textbf{56 } (94\%)    & 887  & \textbf{52.2} (94\%) \\ \hline
		Notre Dame        & \multicolumn{1}{|c|}{7149} & \multicolumn{1}{|c|}{715} & 1535 & \textbf{116} (87\%)    & 865  & \textbf{31.7} (96\%) \\ \hline
	\end{tabular}
	\vspace{1em}	
	\caption{Comparisons between coreset and batch in terms of total runtime in seconds, under the $\ell_1$ and $\ell_\infty$ reprojection error~\eqref{equ:reprojpnorm}. For $\ell_1$ reprojection error, the Dinkelbach method was used as the embedded solver in Algorithm~\ref{alg:coreset}. For $\ell_\infty$ reprojection error, the state-of-art polyhedron collapse method~\cite{donne2015point} (``Polyhedron" above) was used as the embedded solver in Algorithm~\ref{alg:coreset}. For coreset, the number in parentheses indicates the percentage of reduction in runtime by using the coreset method over the batch counterpart.}
	\label{table:lp}
\end{table*}

Although in principle any $\ell_\infty$ solver for $\ell_2$ norm reprojection error~\cite{kahl2005multiple,ke2007quasiconvex,olsson2007efficient,agarwal2008fast,dai2012novel} can be used in Algorithm~\ref{alg:coreset}, we emphasise again that the primary performance indicator here is the relative speed-up of coreset over batch. If a faster solver is used, it would likely improve both coreset and batch by the same factor.

\subsection{Extensions to $\ell_1$ and $\ell_\infty$ Reprojection Error}

In the literature, apart from the more ``traditional" $\ell_2$ norm used in the reprojection error~\eqref{equ:reproj2norm}, different $p$-norms have been considered~\cite{agarwal2008fast}, i.e.,
\begin{align}\label{equ:reprojpnorm}
r(\bx \mid \bP_i,\bu_i) =  \left\| \bu_i - \frac{\bP^{1:2}_i \tilde{\bx}}{\bP^{3}_i \tilde{\bx}} \right\|_p, \;\;\;\; p \in \{1,2,\infty \}.
\end{align}
Our coreset theory was developed based on $p = 2$, thus the approximation bound (Theorem~\ref{thm:coreset}) will not hold for other $p$.

It is nonetheless feasible to apply Algorithm~\ref{alg:coreset} as a meta-algorithm for solving $\ell_\infty$ triangulation~\eqref{equ:triang} under different reprojection errors. Since problem~\eqref{alg:coreset} remains a GLP (Sec.~\ref{sec:background}) for $p = 1$ and $p = \infty$ in~\eqref{equ:reprojpnorm}, global convergence is guaranteed. It is thus of interest to compare the relative speed-up given by the coreset method over a batch method in finding the globally optimal solution.

Table~\ref{table:lp} shows the results of repeating the experiment in Sec.~\ref{sec:realresults} with $p = 1$ and $p = \infty$. For $p = 1$, the Dinkelbach method was used as the embedded solver for Algorithm~\ref{alg:coreset}. For $p = \infty$, the state-of-the-art polyhedron collapsed method~\cite{donne2015point} was used as the embedded solver. Evidently the results show that the coreset method is able to significantly speed up global convergence.

\section{Dealing with Outliers}

While the existence of outliers in the measurements (e.g., from incorrect feature associations) is transparent to Algorithm~\ref{alg:coreset} (i.e., the error bound in Theorem~\ref{thm:coreset} will still hold), the result will of course be biased by the outliers---after all, the $\ell_\infty$ framework is not inherently robust~\cite{Kahl2008multiview}.

Nonetheless, Sim and Hartley~\cite{sim2006removing} showed how an effective outlier removal scheme can be constructed based on $\ell_\infty$ estimation. Basically, their scheme recursively conducts $\ell_\infty$ estimation and removes the support set (see Property~\ref{prop:basis}) from the input data until the maximum residual is below a pre-determined inlier threshold. The remaining data then forms an inlier set.

Here, in the context of triangulation with outliers, we showed how the efficiency of Sim and Hartley's scheme can be improved by using Algorithm~\ref{alg:coreset} (with a high $\epsilon$) as a fast $\ell_\infty$ solver. Due to the approximation by Algorithm~\ref{alg:coreset}, instead of removing the support set, we remove the $4$ measurements with the largest residuals.

Fig.~\ref{fig:outlier} compares the runtime and number of remaining inliers produced by Sim and Hartley's original scheme (with Dinkelbach's method as the $\ell_\infty$ solver) and our coreset-enabled scheme (with $\epsilon = 0.4$). The results are based on $100$ 3D scene points projected onto $N$ views ($10 \le N \le 500$), and where $90\%$ of the 2D measurements in each problem instance were corrupted with Gaussian noise of $\sigma = 5$ pixels (the inliers), while the remaining $10\%$ were corrupted with larger noise ($\sigma = 30$ pixels) to create outliers. The inlier threshold was set to $10$ pixels. Evidently our coreset modification significantly improved the efficiency of the original scheme, without significantly affecting the quality of the result (number of remaining inliers).

\begin{figure}[h]\centering
	\subfigure[]{\includegraphics[width=0.71\columnwidth]{./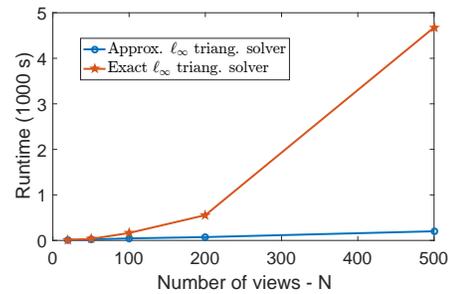}}\label{fig:outlier_time}
	\subfigure[]{\includegraphics[width=0.71\columnwidth]{./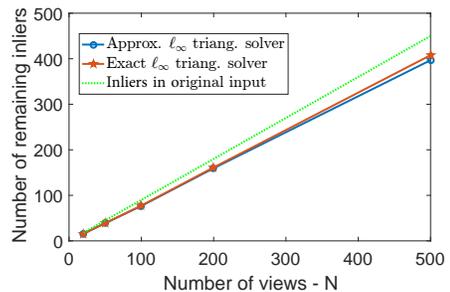}}\label{fig:outlier_in}
	\caption{Comparing Sim and Hartley's outlier removal scheme~\cite{sim2006removing} with an exact solver (Dinkelbach's method) and with Algorithm~\ref{alg:coreset}. (a) Average runtime; (b) Number of inliers remaining in the final inlier set (the dashed green line to indicate the number of inliers in original input).\label{fig:outlier}}
\end{figure}



\section{Conclusions and Open Questions}

In this paper, we show that $\ell_\infty$ triangulation admits coreset approximation. We also provided comprehensive experimental results that establish the practical value of the coreset algorithm on large scale 3D reconstruction datasets. 

There are several open questions:
\begin{itemize}
\item A deeper analysis of Condition~\ref{cond2} to hopefully remove it from consideration in the coreset bound, or at least to better characterise and  predict its occurrence.
\item The proof in Sec.~\ref{sec:approx} was inspired by the work of~\cite{buadoiu2008optimal} on minimum enclosing ball (MEB) problems. There, the coreset size bound $\lceil 2/\epsilon \rceil$ was proven to be tight if the dimensionality $d$ is comparable to $1/\epsilon$. For lower dimensional MEBs,  tighter bounds have been proposed. It would be of interest to construct such tighter bounds for $\ell_\infty$ triangulation $(d = 3)$.
\end{itemize}

Last but not least, we hope that our work encourages more effort to seek theoretically justifiable approximate algorithms for large scale geometric computer vision problems, especially the class of problems surveyed in~\cite{kahl2005multiple}, which can be seen as GLPs~\cite{amenta1994helly}.

\clearpage

\begin{figure*}[h]\centering
	\subfigure{\includegraphics[width=0.38\textwidth]{./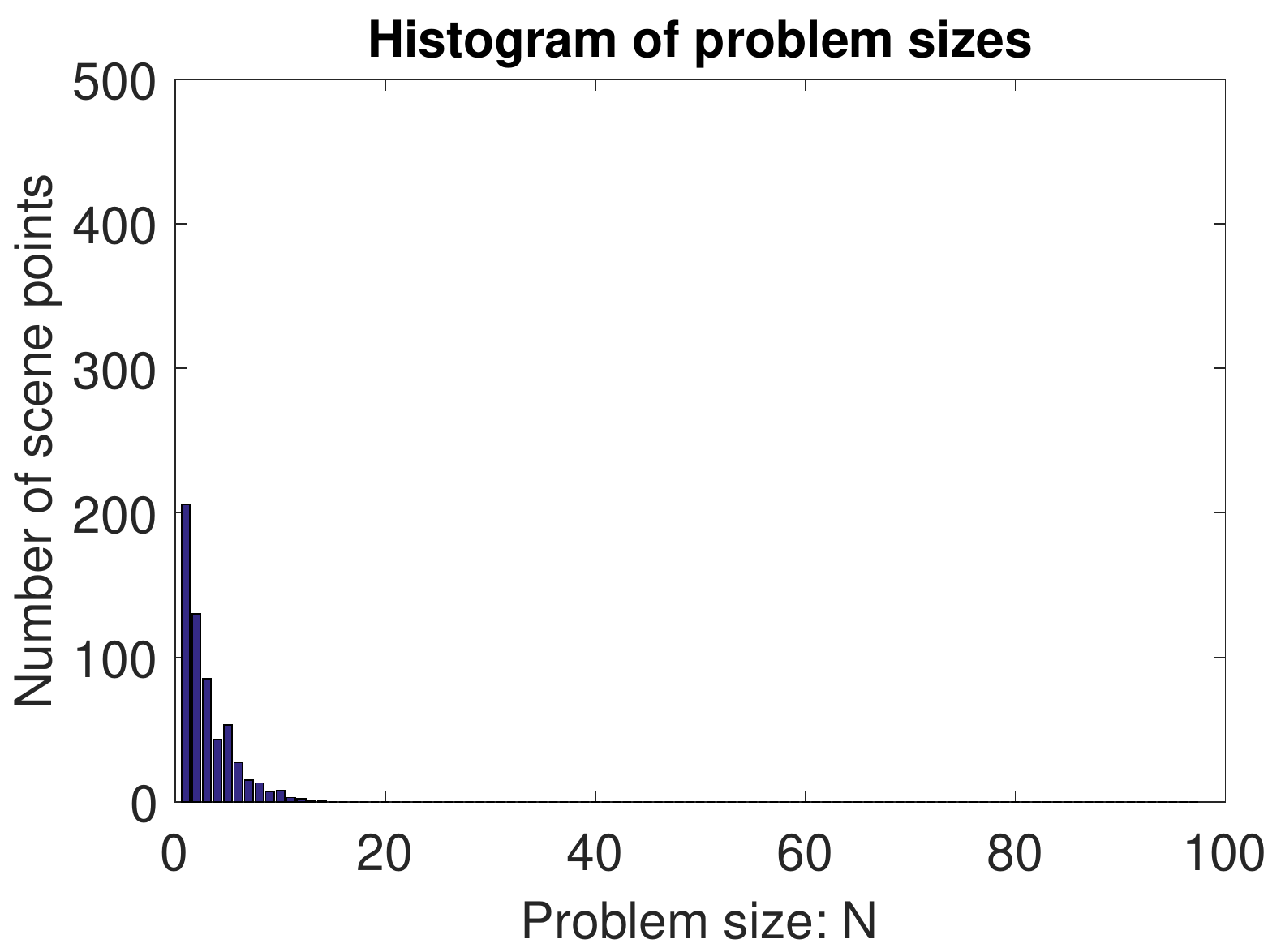}} \hspace{2.3em}
	\subfigure{\includegraphics[width=0.38\textwidth]{./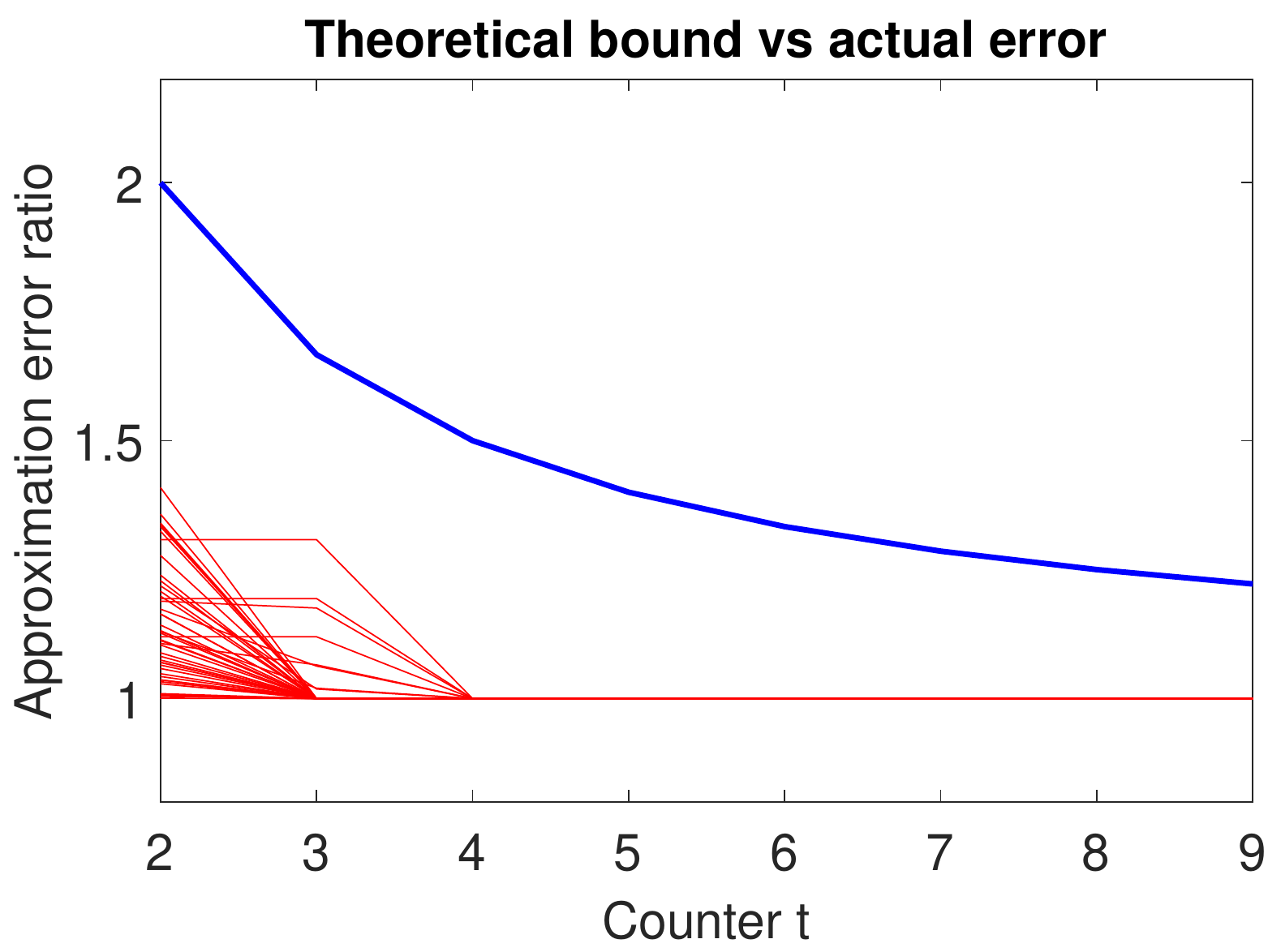}}
	\subfigure{\includegraphics[width=0.38\textwidth]{./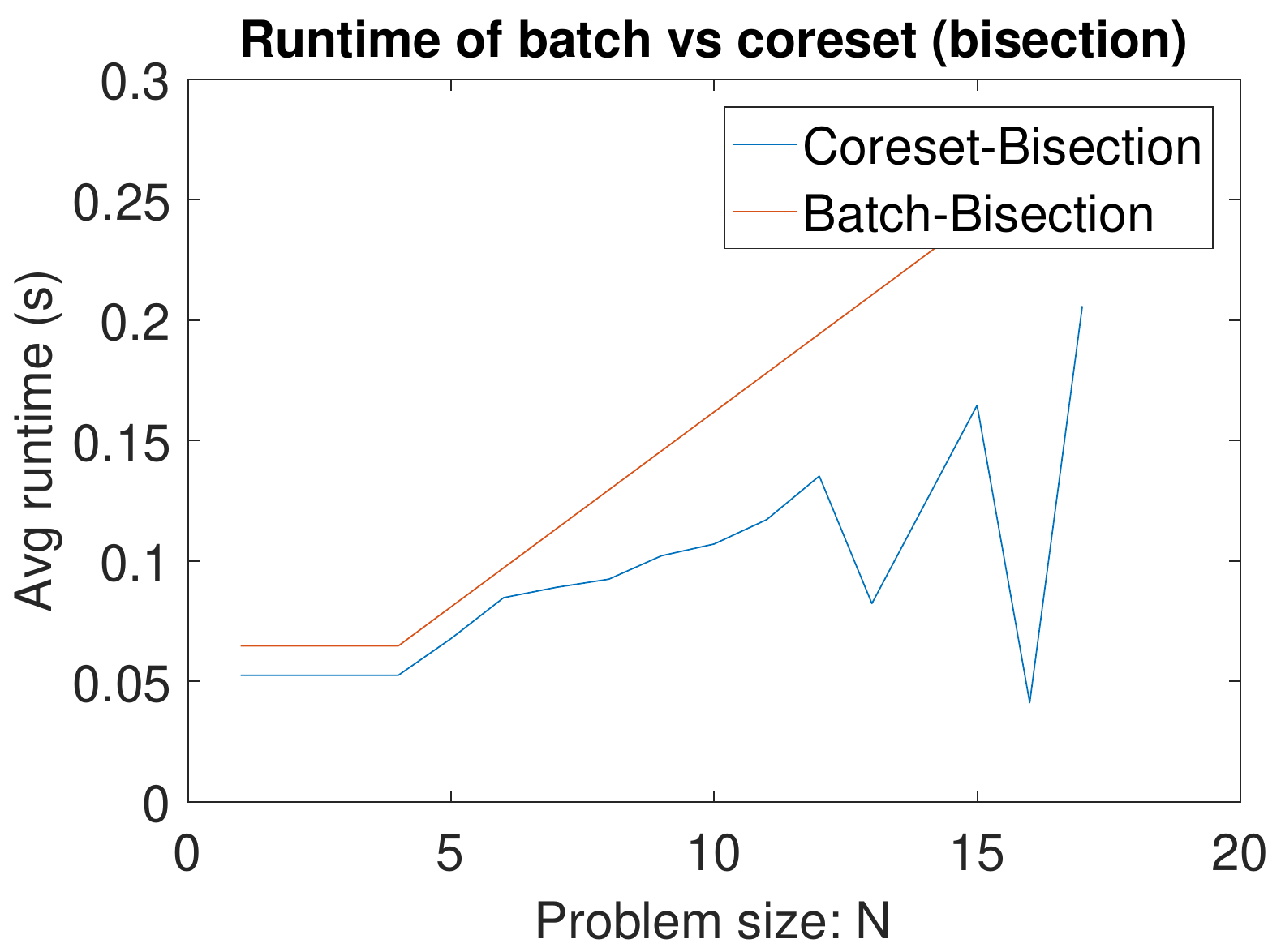}}\hspace{2.3em}
	\subfigure{\includegraphics[width=0.38\textwidth]{./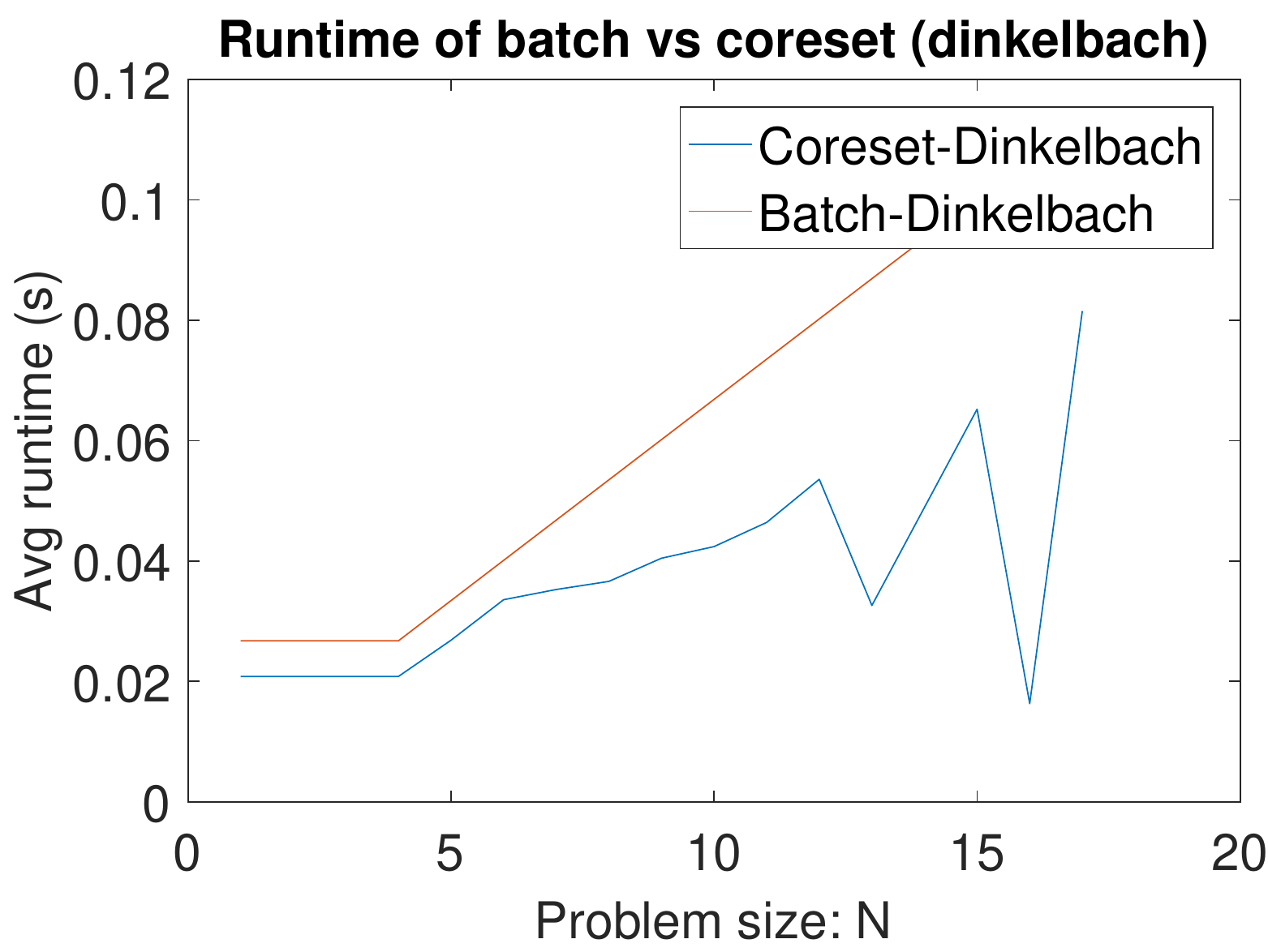}}
	\caption{Results for \textbf{Vercingetorix Statue}. (top left) Histogram of problem sizes. (top right) Approximation error ratio versus error ratio bound. (bottom left) Runtime of coreset vs batch, for bisection solver. (bottom right) Runtime of coreset vs batch, for Dinkelbach solver.}
	\label{fig:Vercingetorix}
\end{figure*}

\begin{figure*}[]\centering
	\subfigure{\includegraphics[width=0.38\textwidth]{./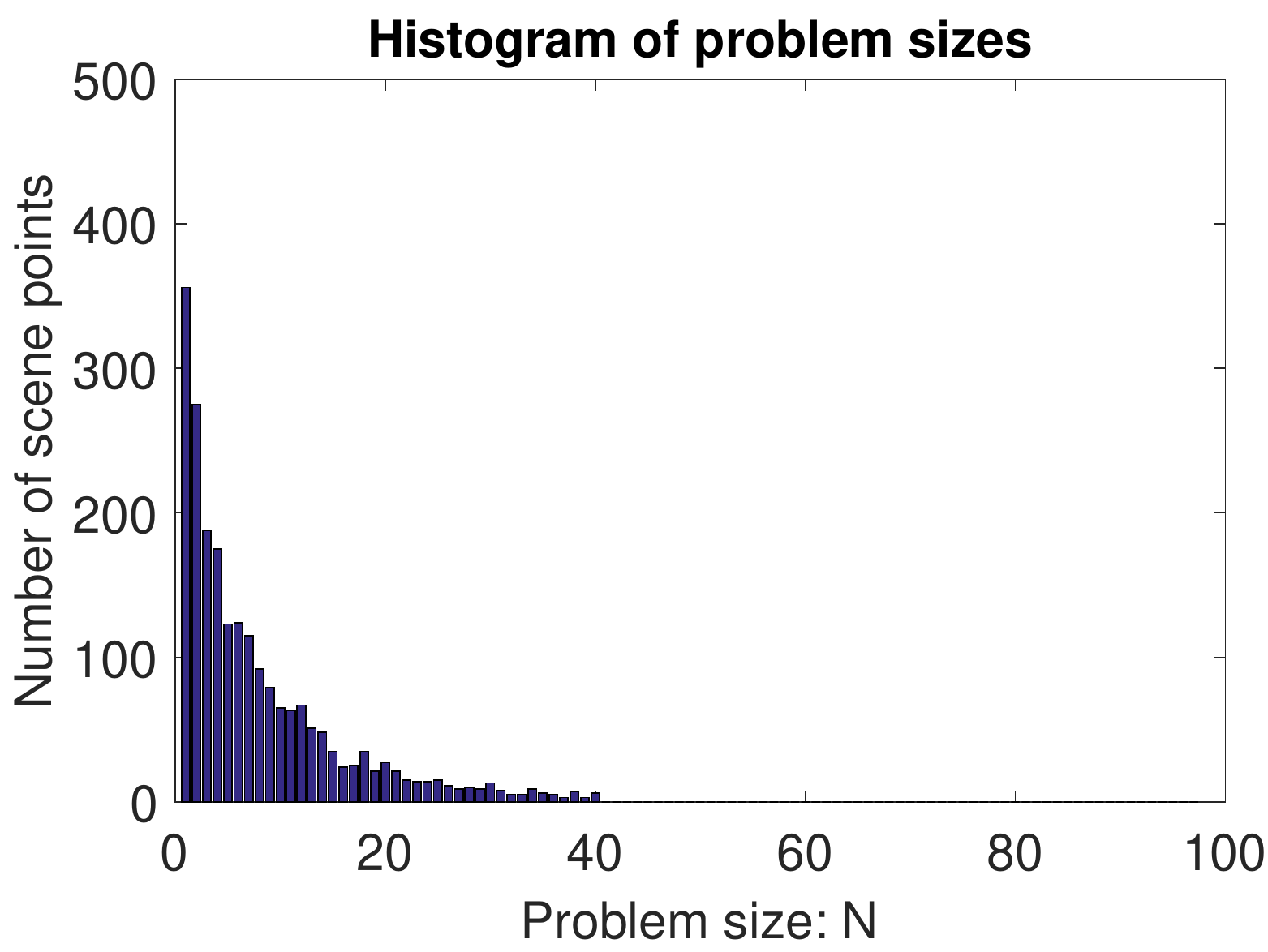}}\hspace{2.3em}
	\subfigure{\includegraphics[width=0.38\textwidth]{./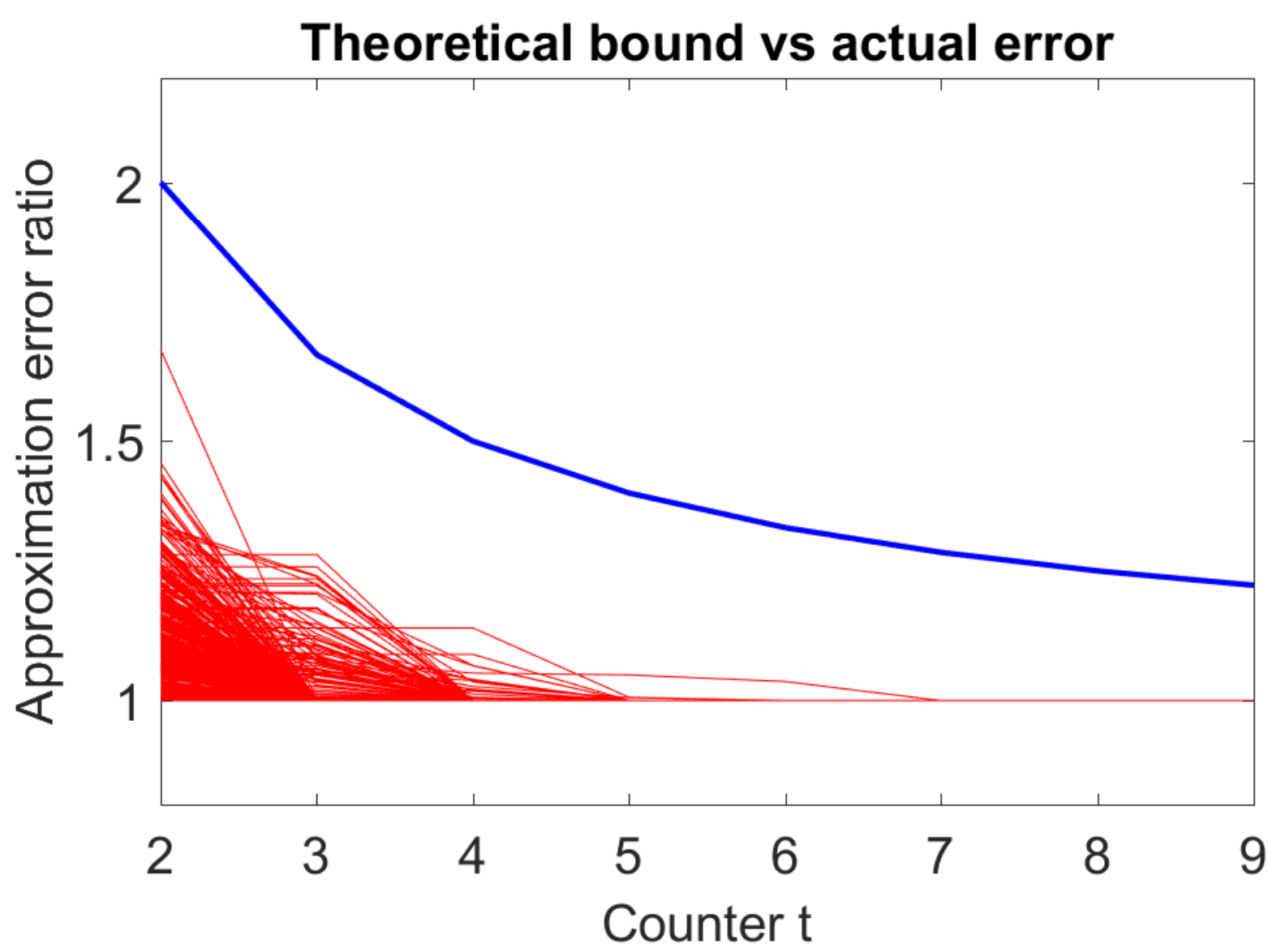}}
	\subfigure{\includegraphics[width=0.38\textwidth]{./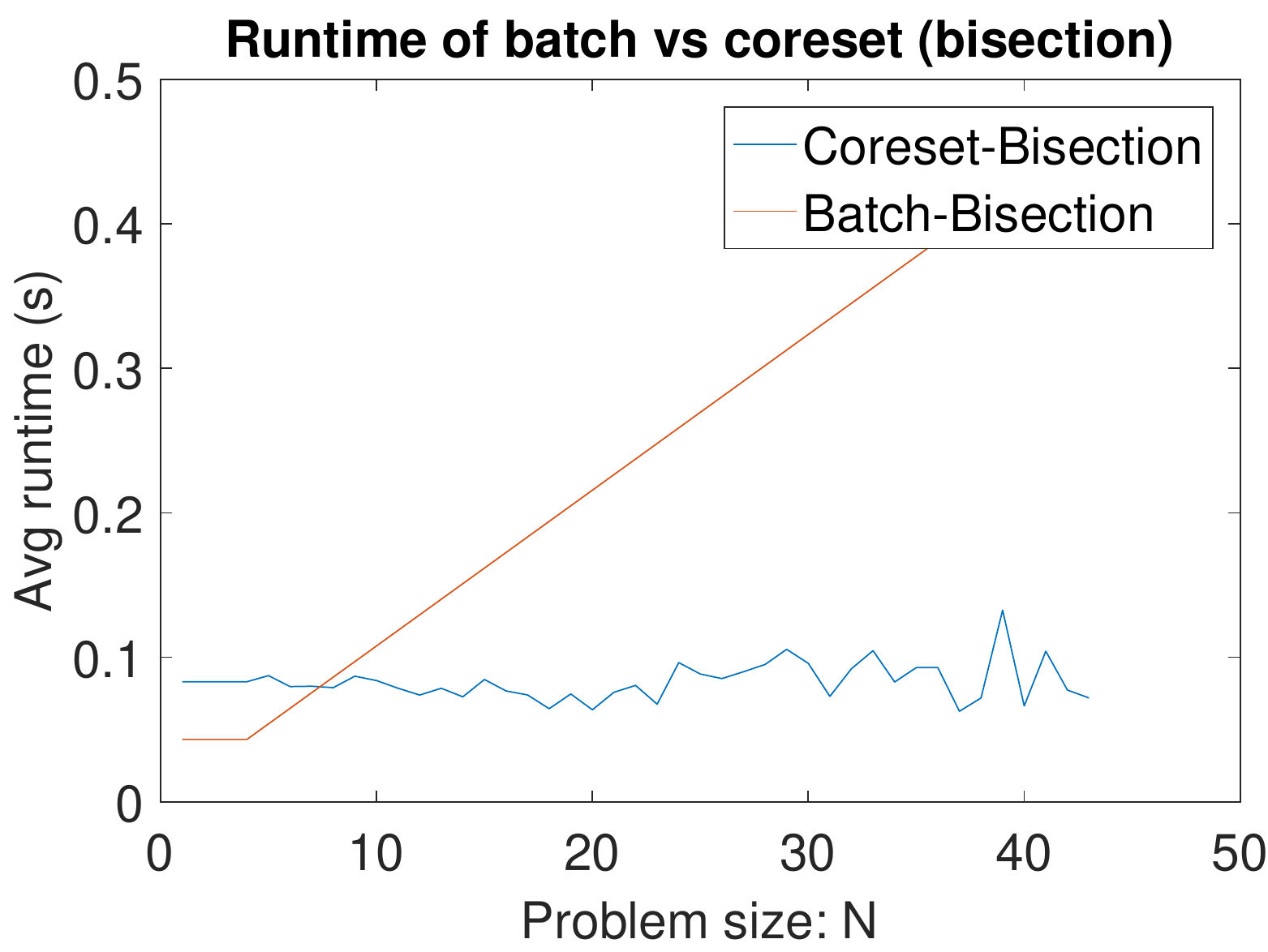}}\hspace{2.3em}
	\subfigure{\includegraphics[width=0.38\textwidth]{./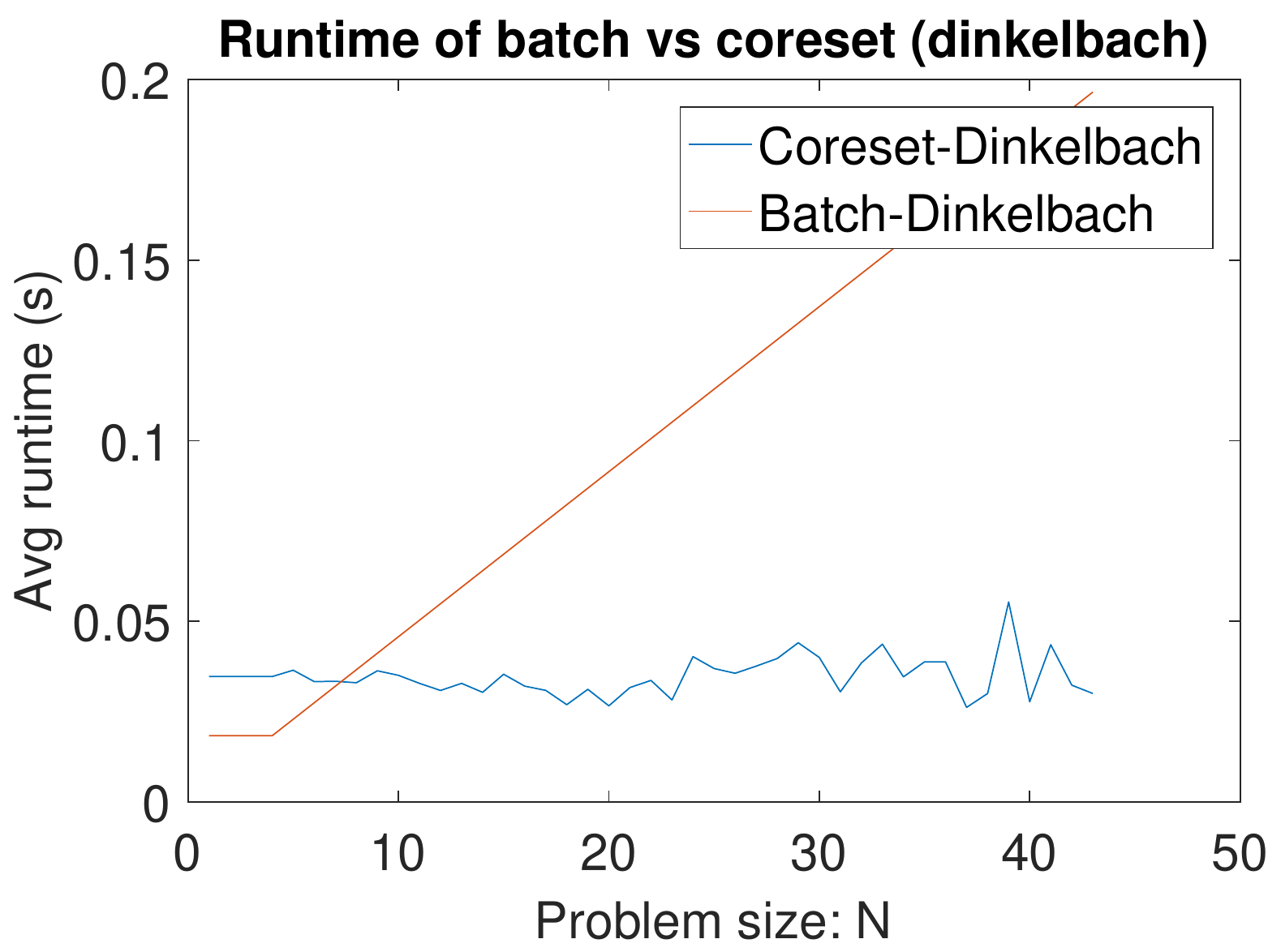}}
	\caption{Results for \textbf{Stockholm City Hall}. (top left) Histogram of problem sizes. (top right) Approximation error ratio versus error ratio bound. (bottom left) Runtime of coreset vs batch, for bisection solver. (bottom right) Runtime of coreset vs batch, for Dinkelbach solver.}
	\label{fig:stockholm}
\end{figure*}

\begin{figure*}[]\centering
	\subfigure{\includegraphics[width=0.38\textwidth]{./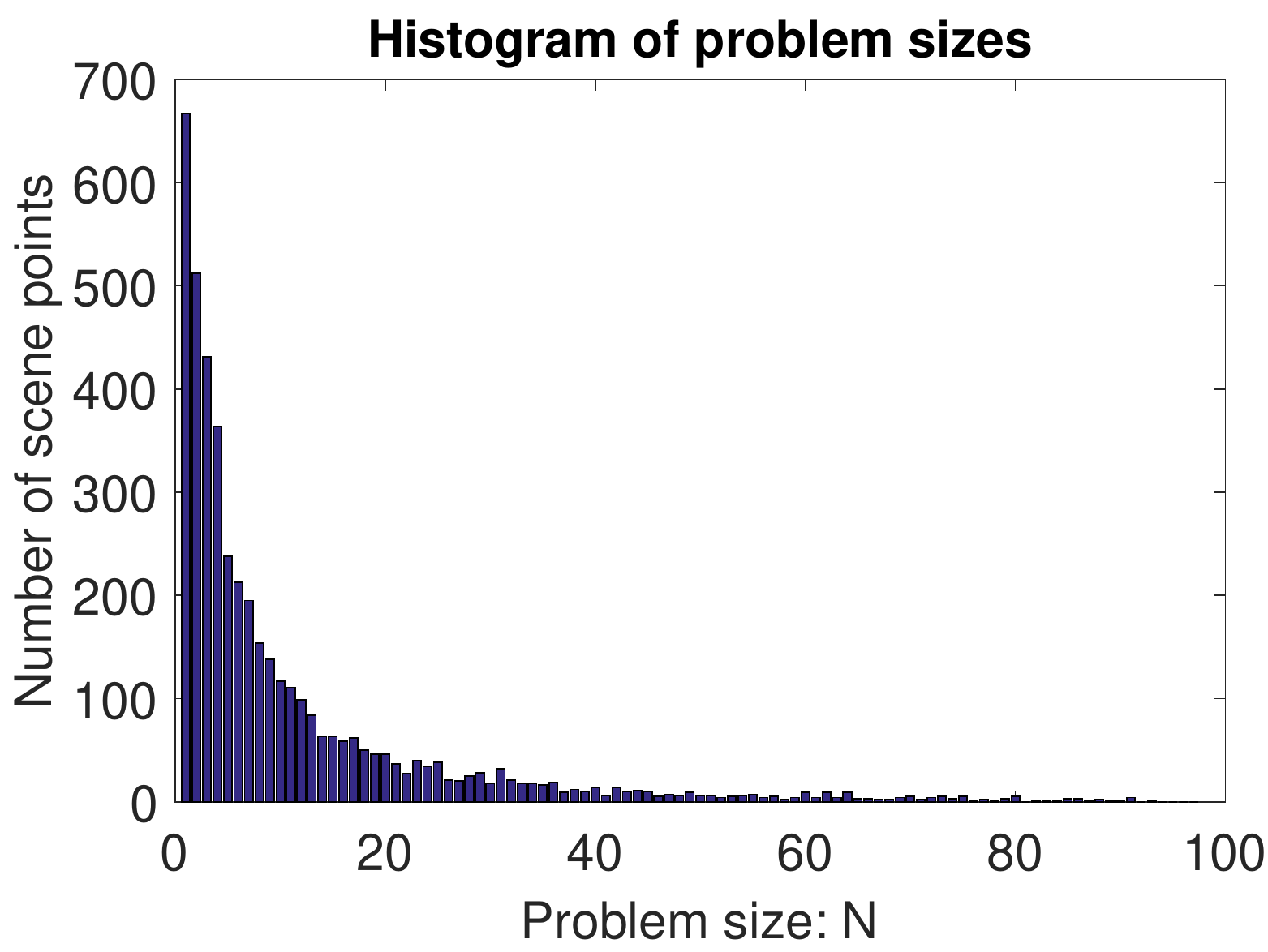}}\hspace{2.3em}
	\subfigure{\includegraphics[width=0.38\textwidth]{./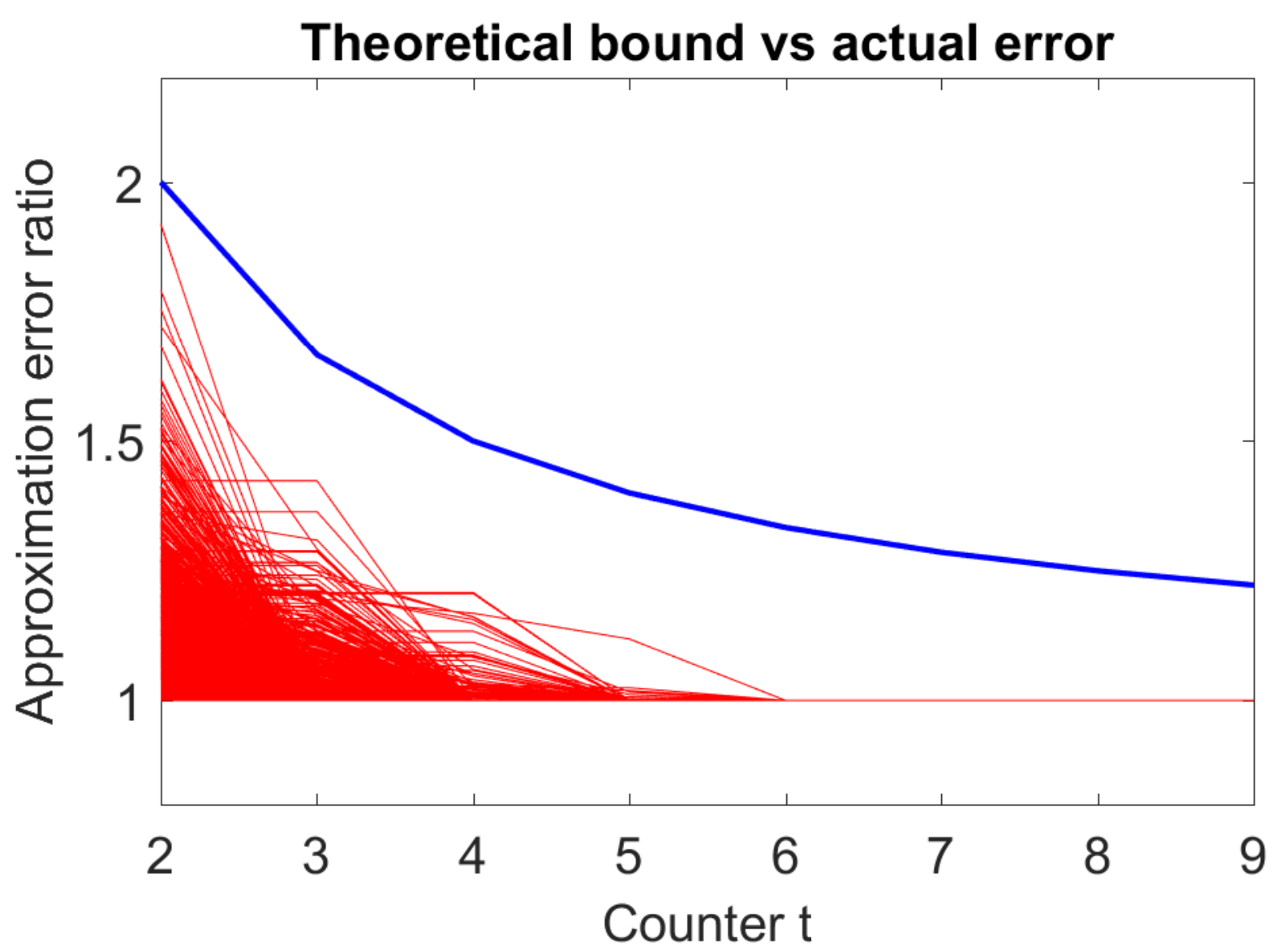}}
	\subfigure{\includegraphics[width=0.38\textwidth]{./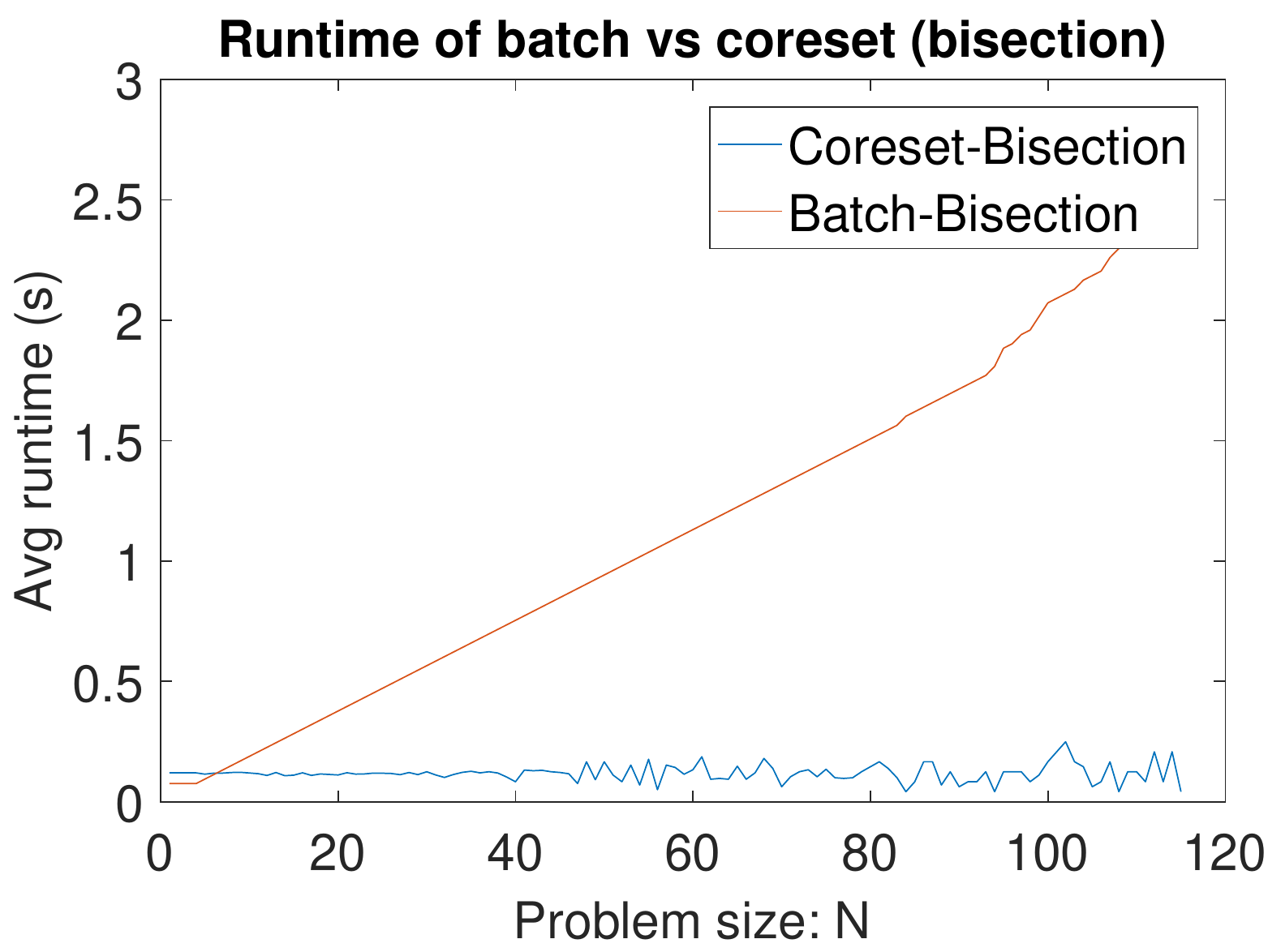}}\hspace{2.3em}
	\subfigure{\includegraphics[width=0.38\textwidth]{./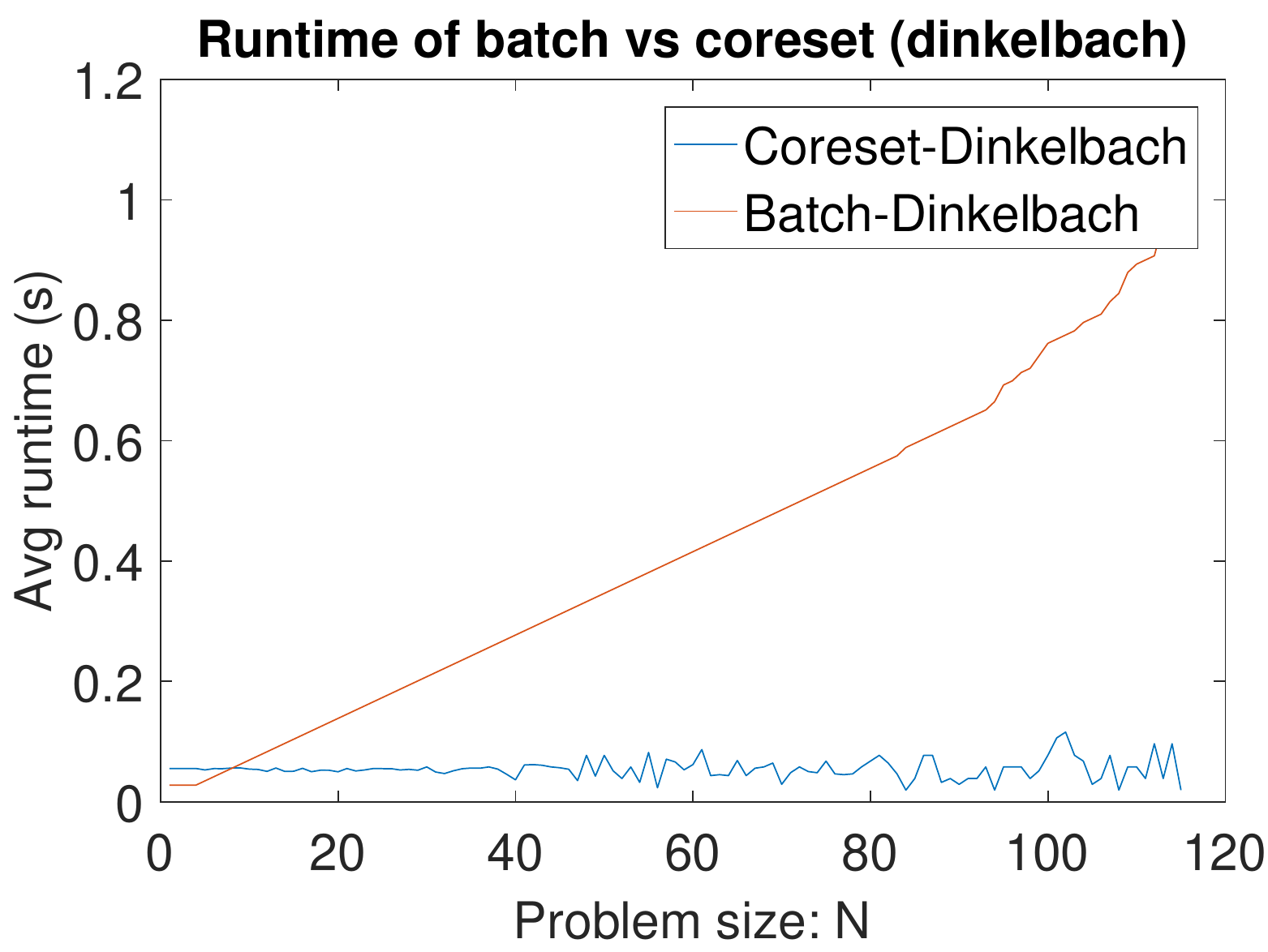}}
	\caption{Results for \textbf{Alcatraz}. (top left) Histogram of problem sizes. (top right) Approximation error ratio versus error ratio bound. (bottom left) Runtime of coreset vs batch, for bisection solver. (bottom right) Runtime of coreset vs batch, for Dinkelbach solver.}
	\label{fig:alcatraz}
\end{figure*}

\begin{figure*}[]\centering
	\subfigure{\includegraphics[width=0.38\textwidth]{./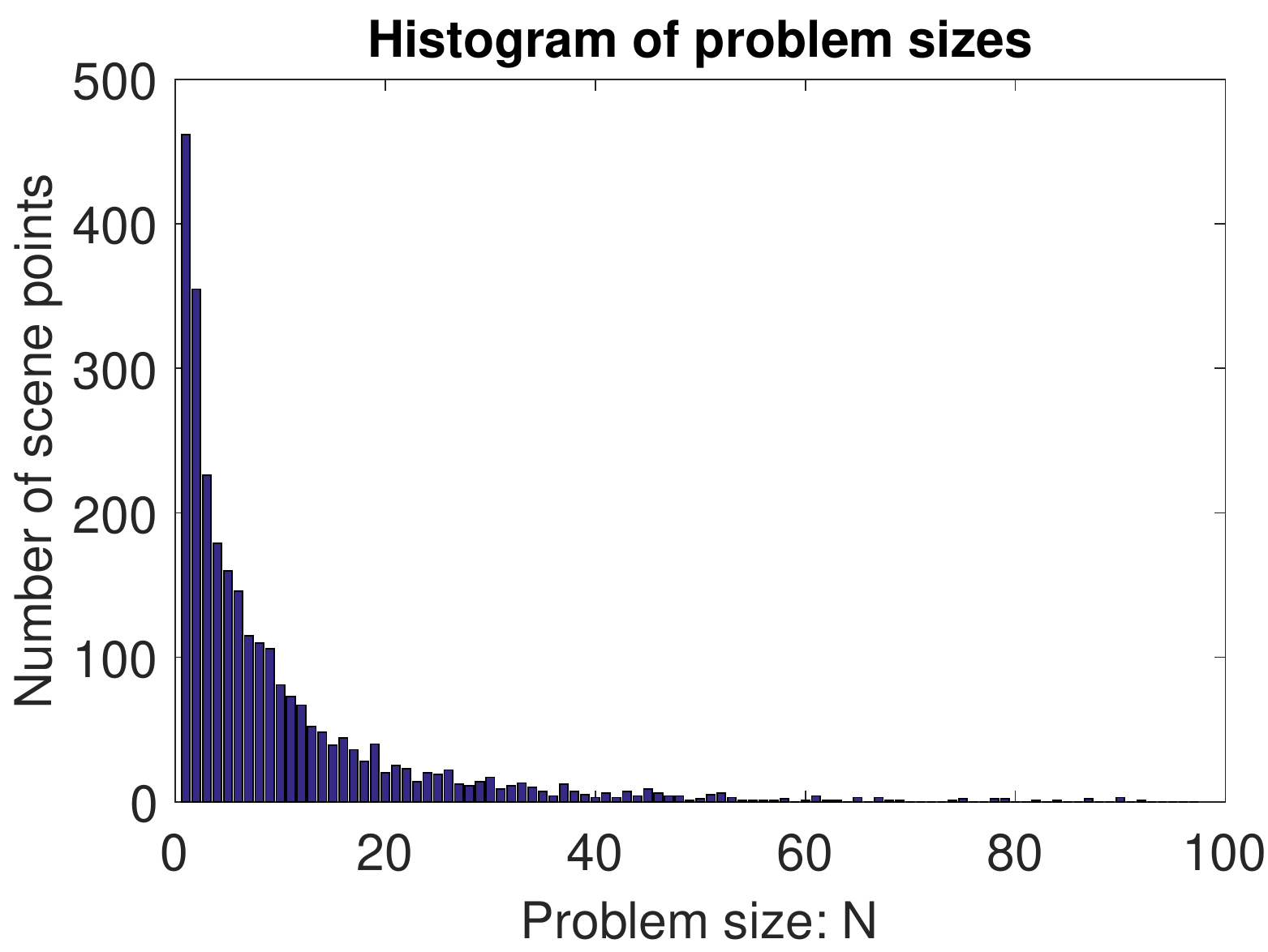}}\hspace{2.3em}
	\subfigure{\includegraphics[width=0.38\textwidth]{./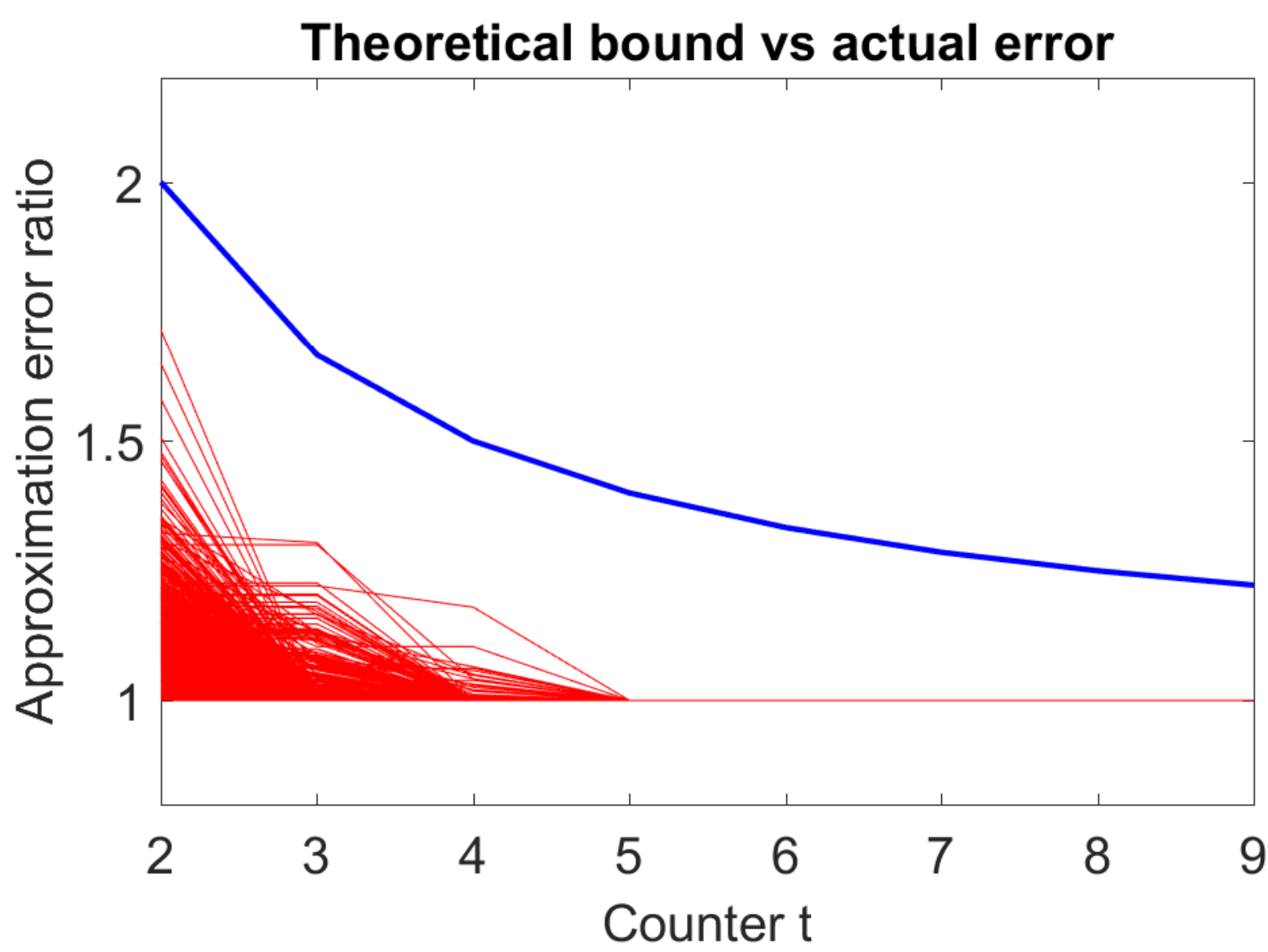}}
	\subfigure{\includegraphics[width=0.38\textwidth]{./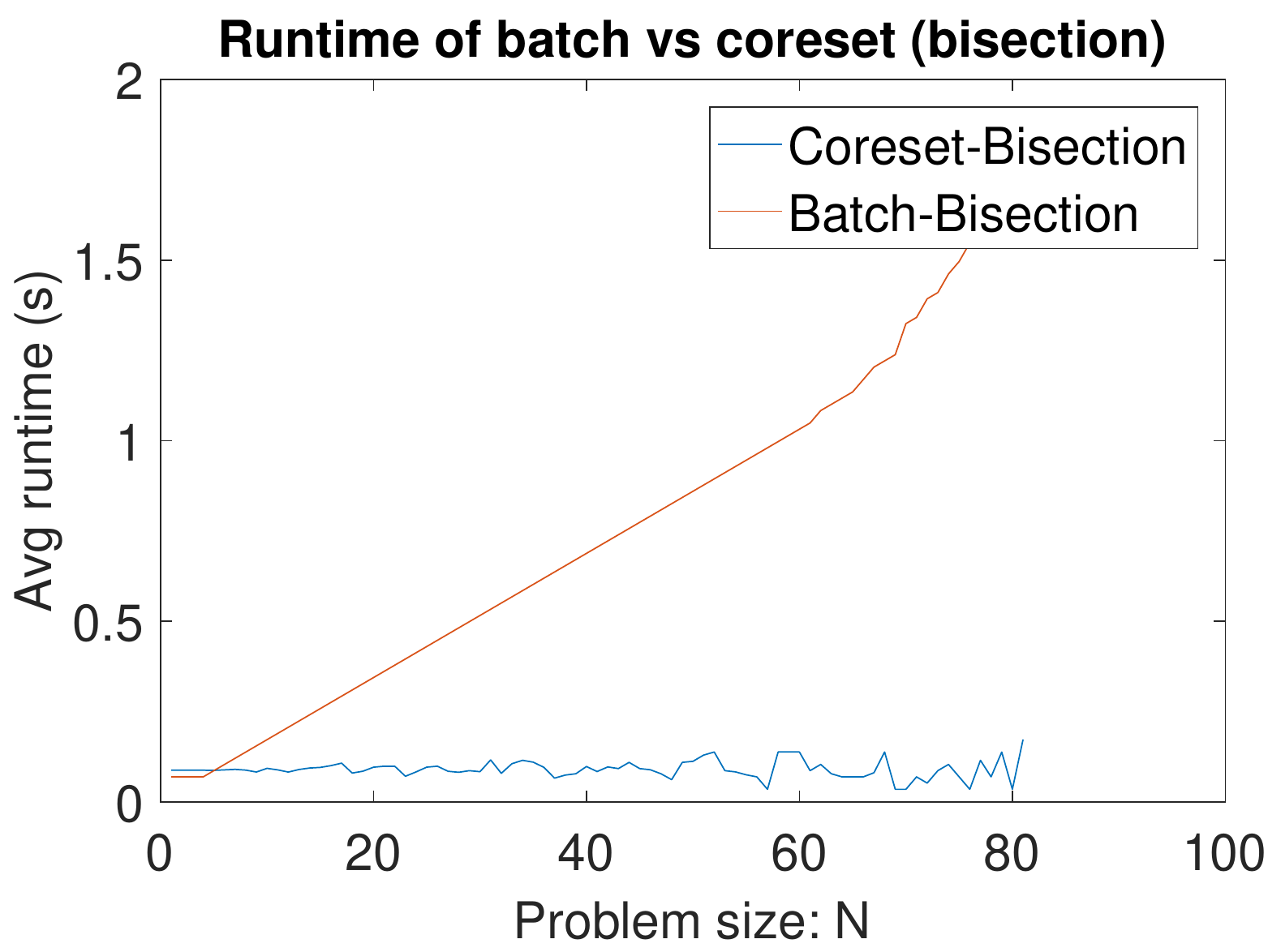}}	\hspace{2.3em}
	\subfigure{\includegraphics[width=0.38\textwidth]{./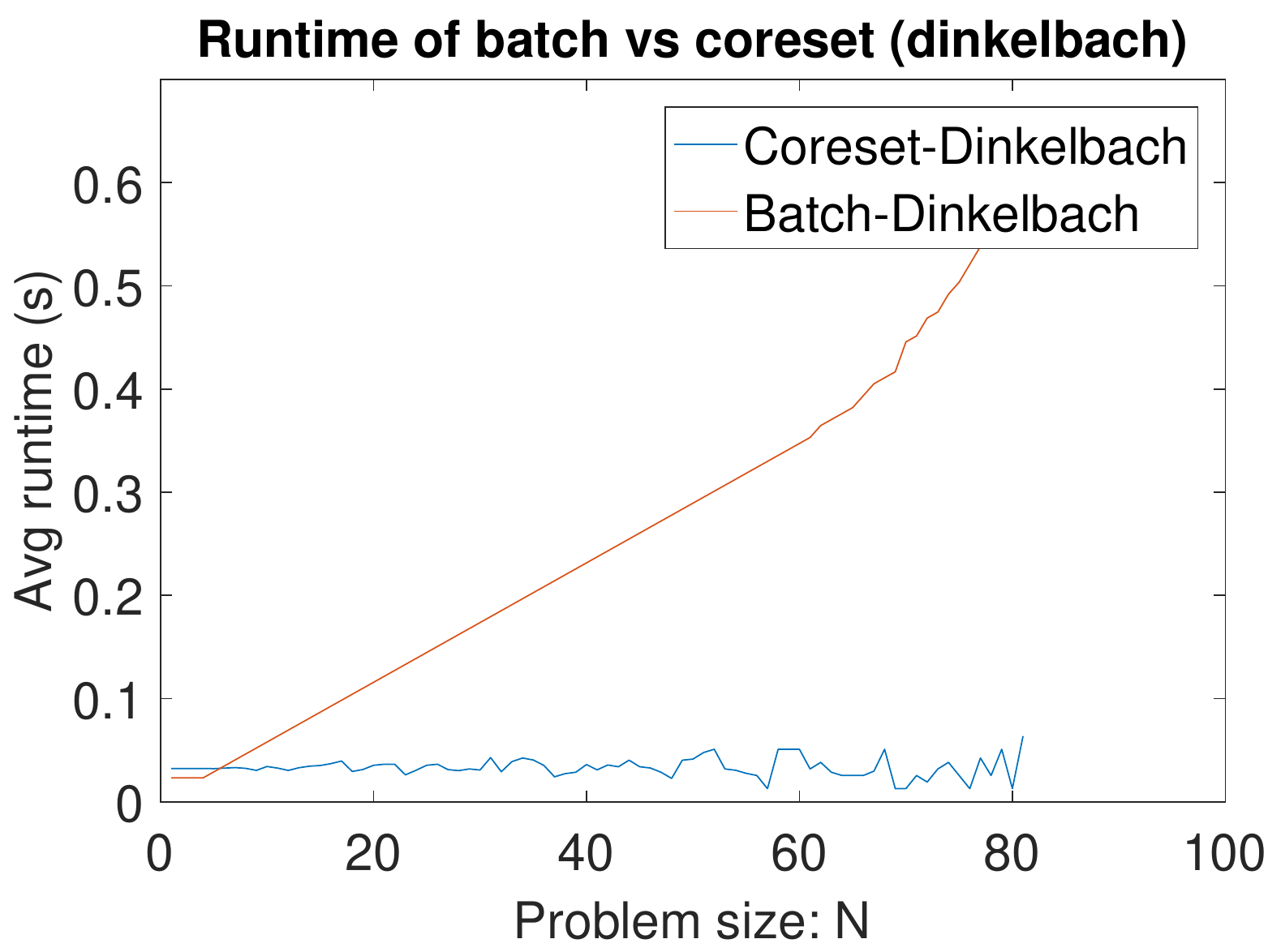}}
	\caption{Results for \textbf{Arc of Triumph}. (top left) Histogram of problem sizes. (top right) Approximation error ratio versus error ratio bound. (bottom left) Runtime of coreset vs batch, for bisection solver. (bottom right) Runtime of coreset vs batch, for Dinkelbach solver.}
	\label{fig:triumph}
\end{figure*}

\begin{figure*}[]\centering
	\subfigure{\includegraphics[width=0.38\textwidth]{./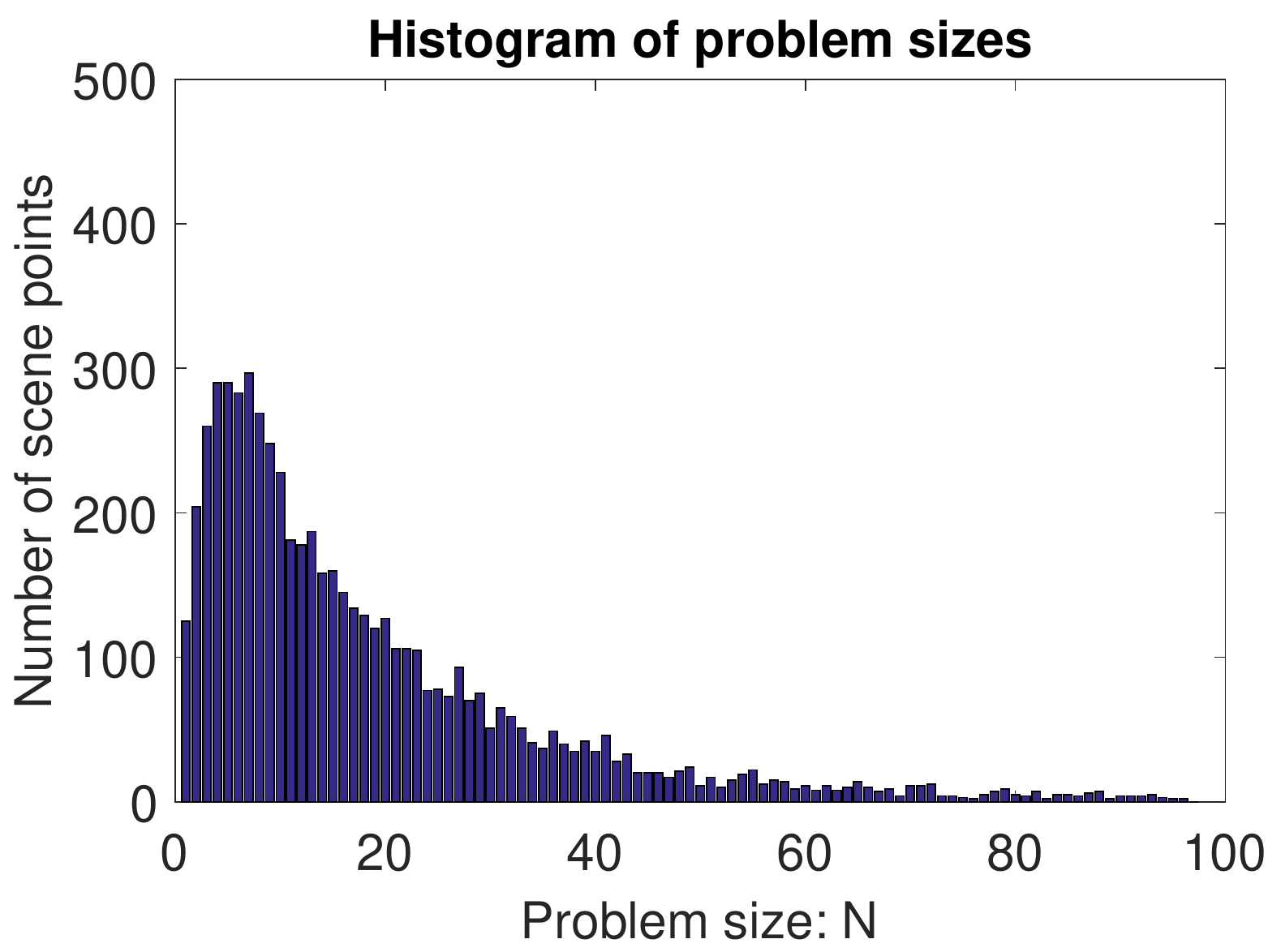}}\hspace{2.3em}
	\subfigure{\includegraphics[width=0.38\textwidth]{./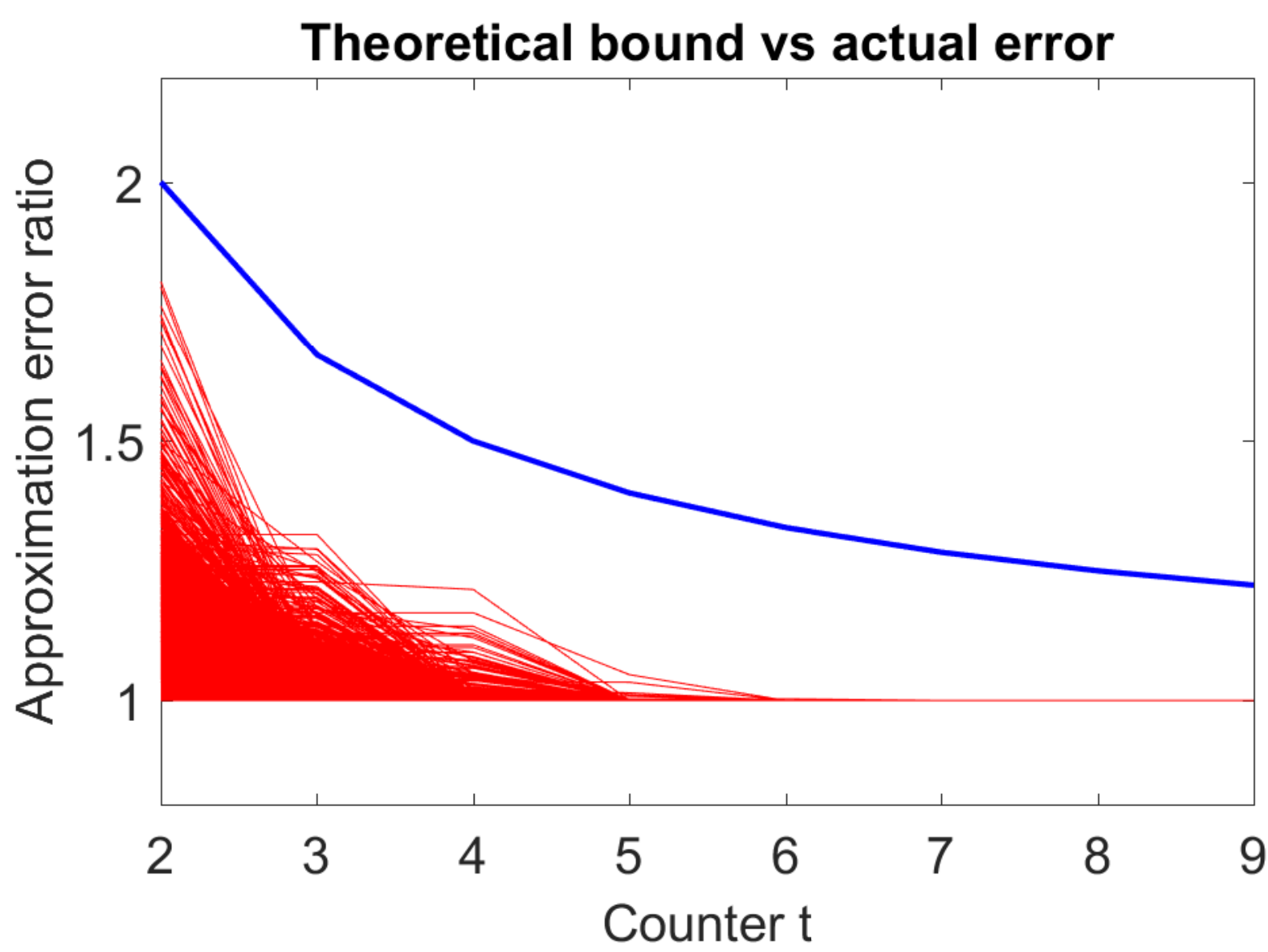}}
	\subfigure{\includegraphics[width=0.38\textwidth]{./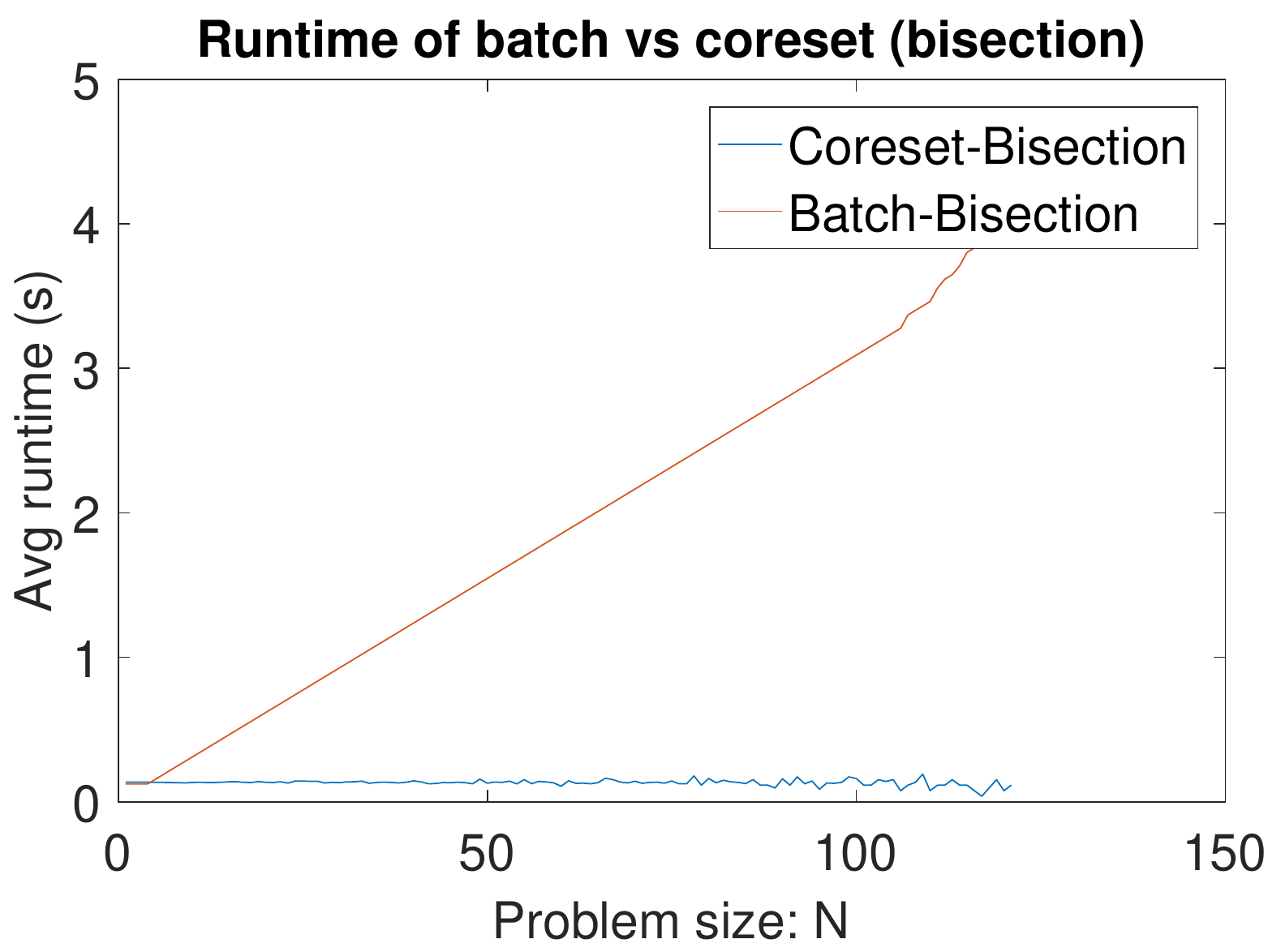}}\hspace{2.3em}	
	\subfigure{\includegraphics[width=0.38\textwidth]{./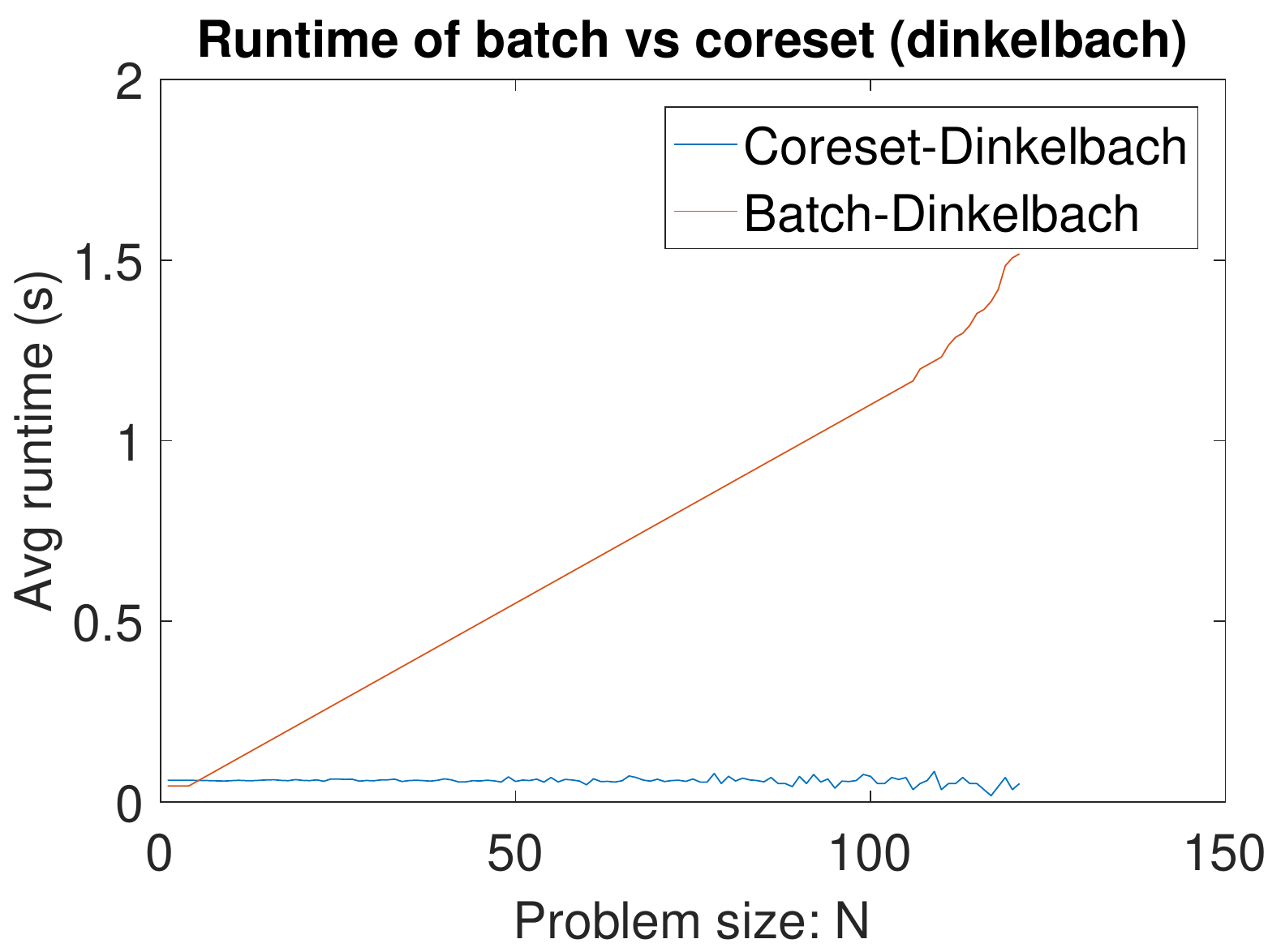}}
	\caption{Results for \textbf{\"{O}rebro Castle}. (top left) Histogram of problem sizes. (top right) Approximation error ratio versus error ratio bound. (bottom left) Runtime of coreset vs batch, for bisection solver. (bottom right) Runtime of coreset vs batch, for Dinkelbach solver.}
	\label{fig:orebrocastle}
\end{figure*}

\begin{figure*}[]\centering
	\subfigure{\includegraphics[width=0.38\textwidth]{./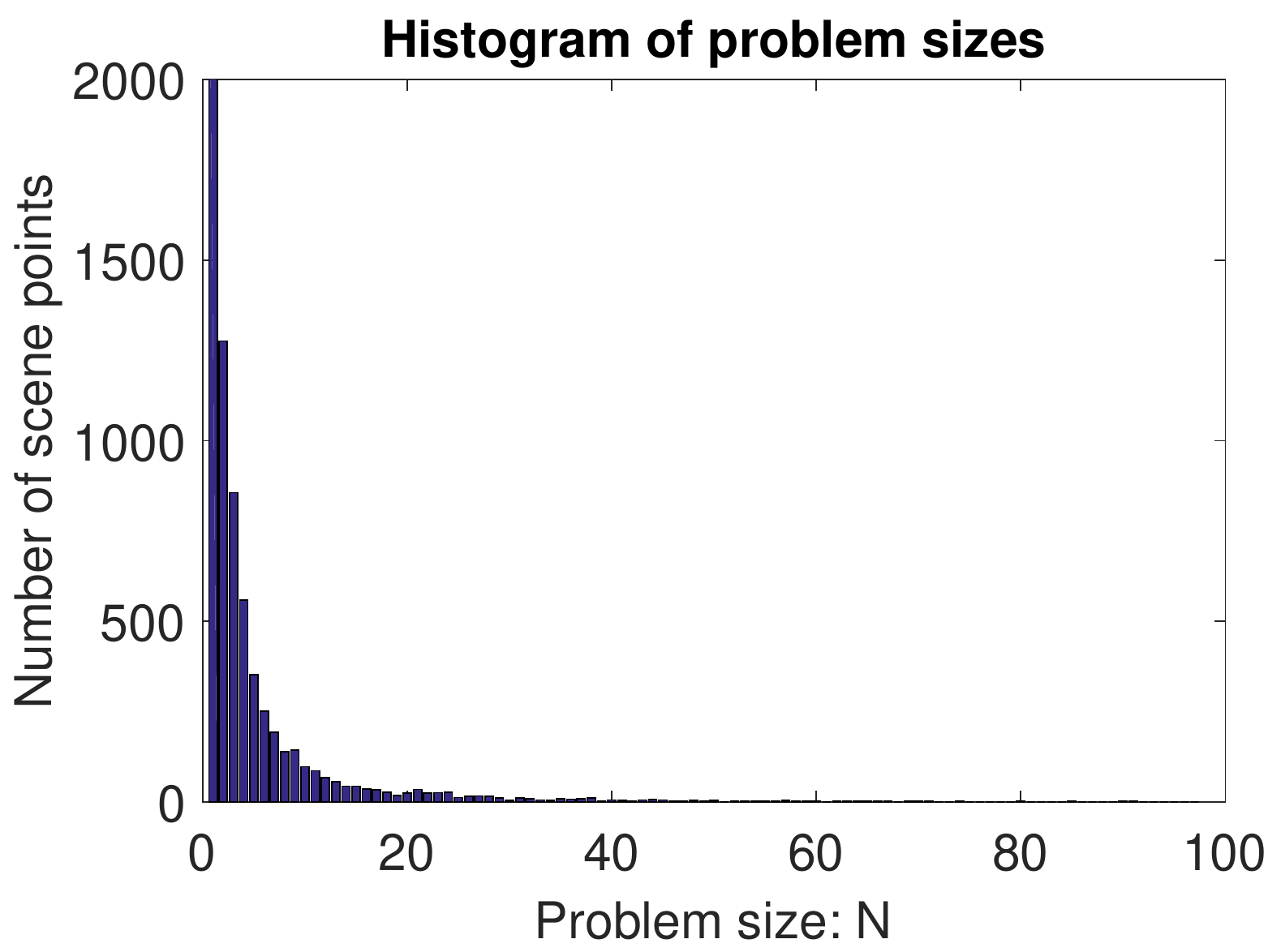}}\hspace{2.3em}
	\subfigure{\includegraphics[width=0.38\textwidth]{./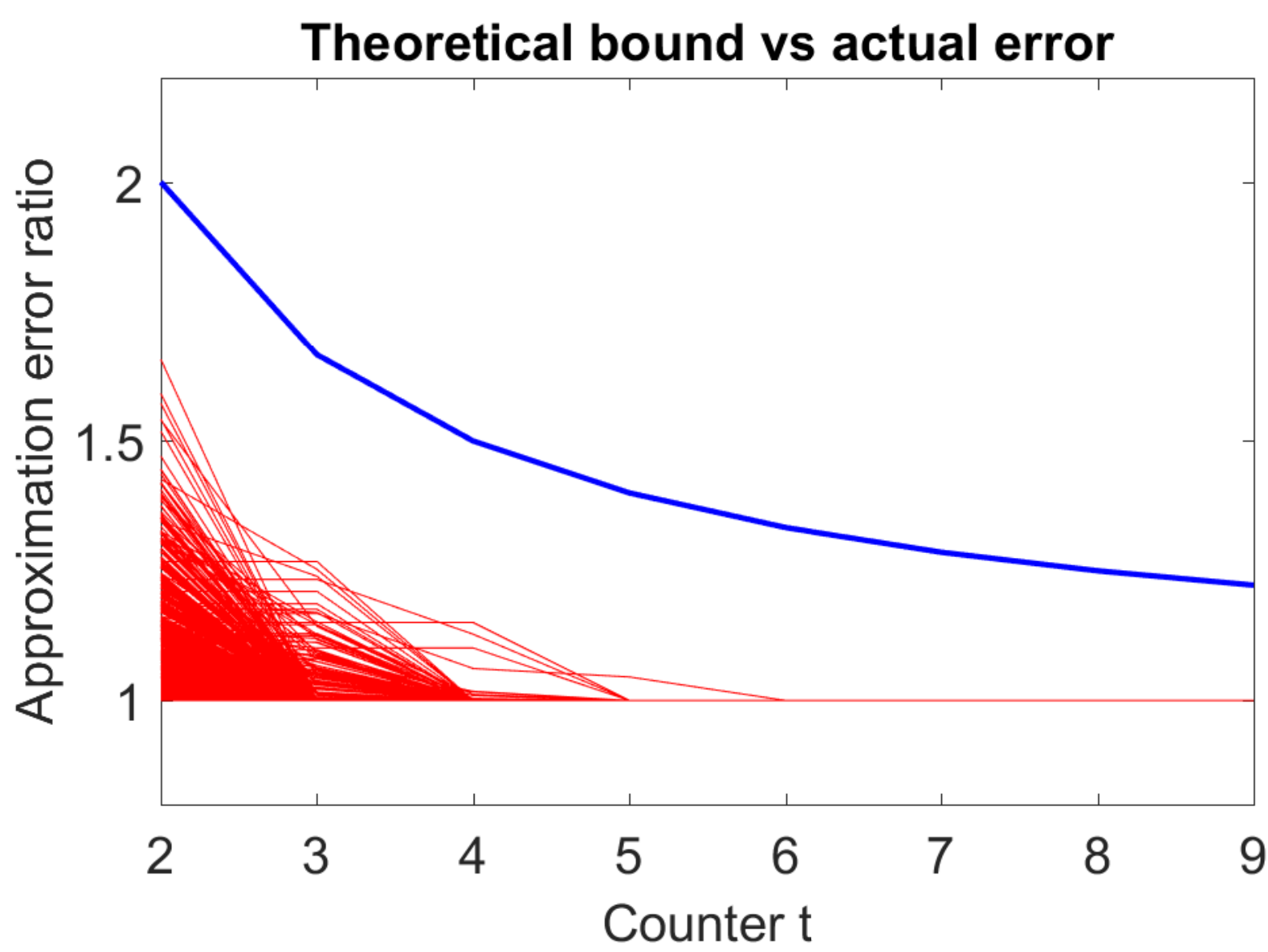}}
	\subfigure{\includegraphics[width=0.38\textwidth]{./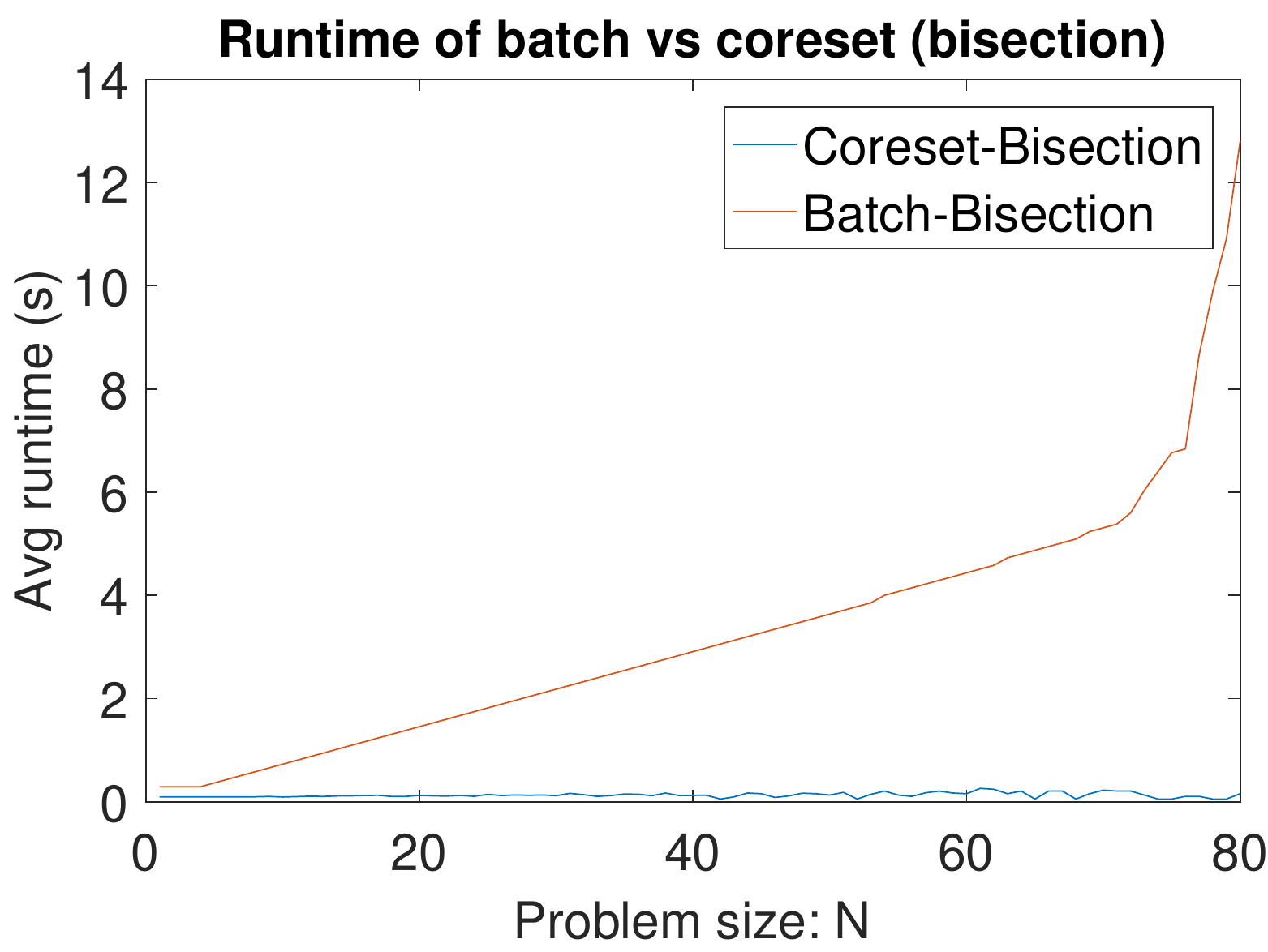}}\hspace{2.3em}	
	\subfigure{\includegraphics[width=0.38\textwidth]{./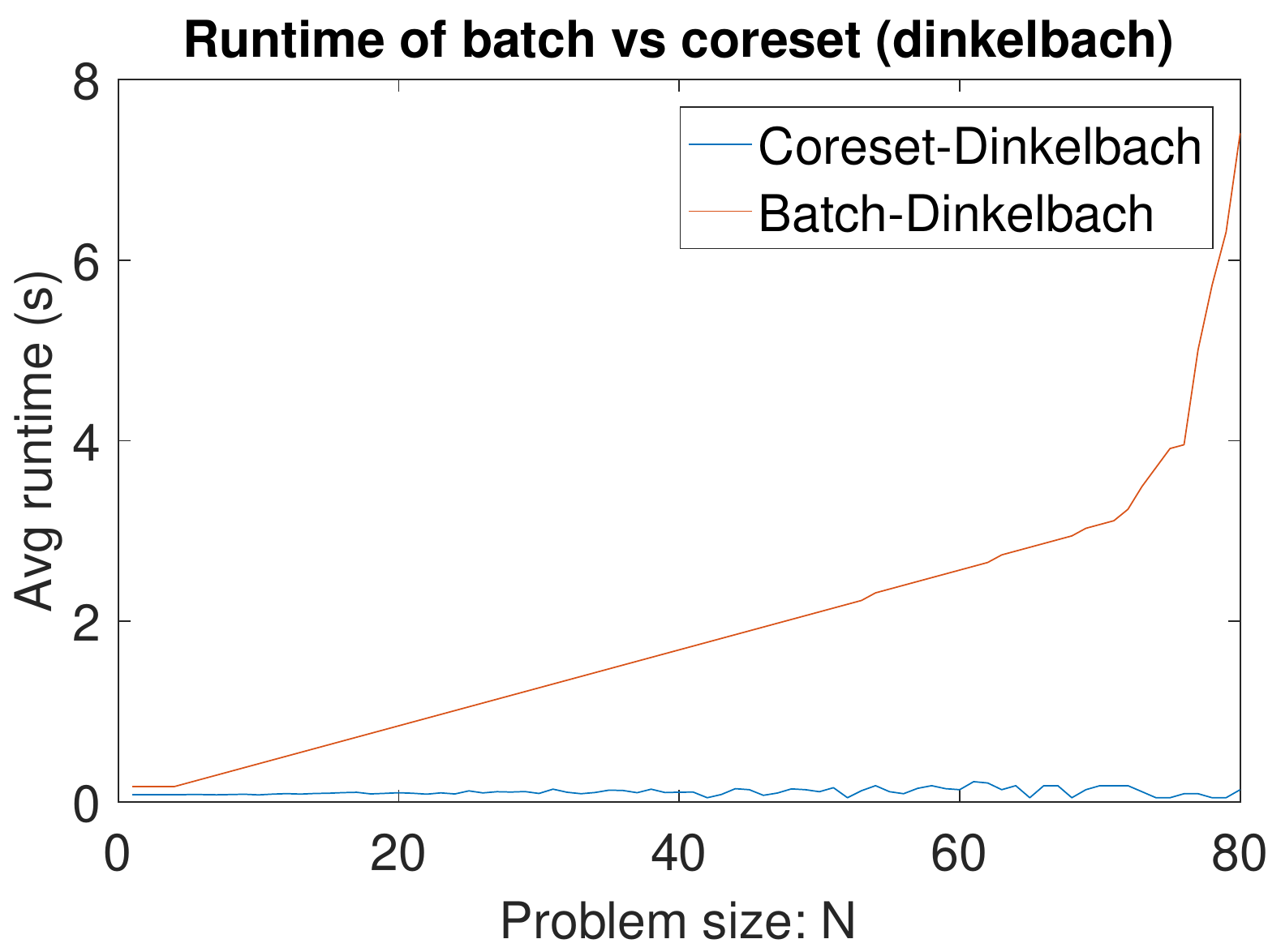}}
	\caption{Results for \textbf{Notre Dame}. (top left) Histogram of problem sizes. (top right) Approximation error ratio versus error ratio bound. (bottom left) Runtime of coreset vs batch, for bisection solver. (bottom right) Runtime of coreset vs batch, for Dinkelbach solver.}
	\label{fig:NotreDame}
\end{figure*}

\clearpage

\bibliographystyle{IEEEtran}
\bibliography{coresets}
%



%
\begin{IEEEbiography}[{\includegraphics[width=1in,height=1.15in,clip, keepaspectratio]{./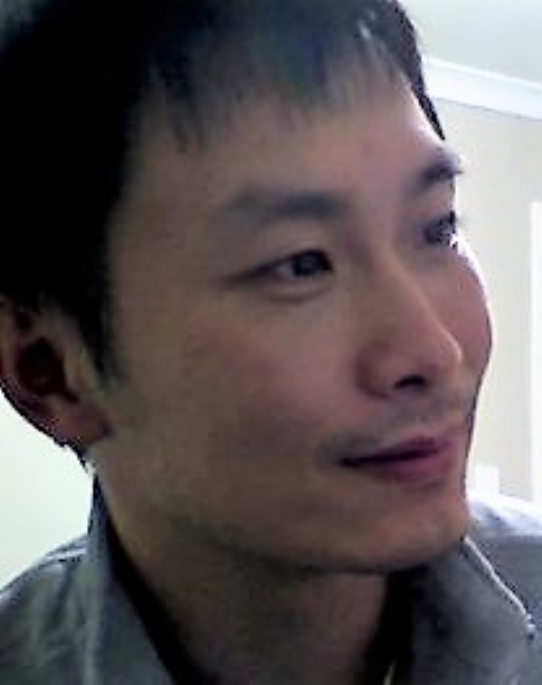}}]{Qianggong Zhang}
	received the BEng degree in computer science and techonology in 2004 and the MEng degree in computer science and techonology in 2007. Since 2015, he has been a PhD candidate at The University of Adelaide, South Australia. His primary research areas include approximation algorithms for geometric computer vision problems.
\end{IEEEbiography}

\begin{IEEEbiography}[{\includegraphics[width=1in,height=1.25in,clip,keepaspectratio]{./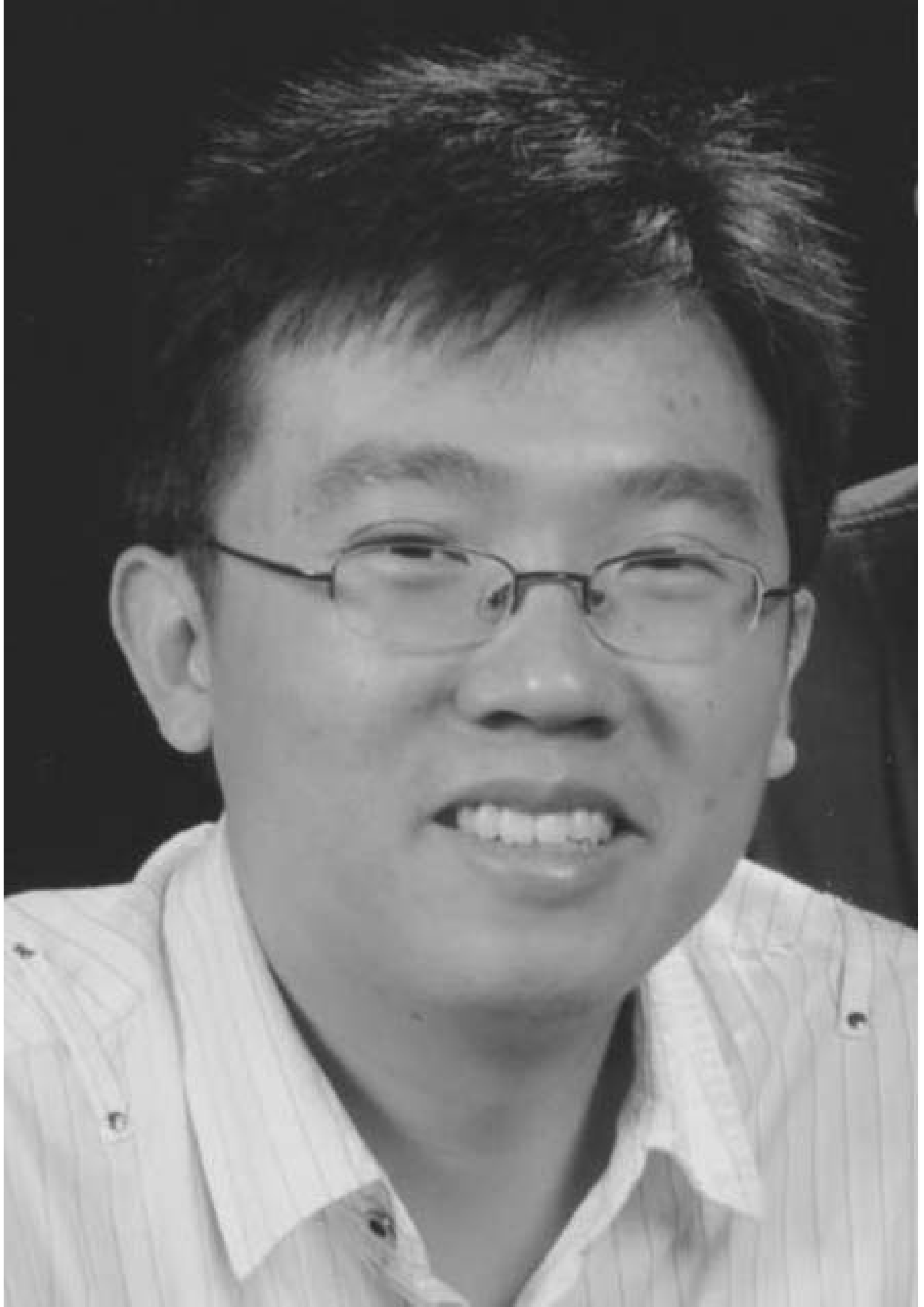}}]{Tat-Jun Chin}
	received the BEng degree in	mechatronics engineering from Universiti Teknologi Malaysia (UTM) in 2003 and the PhD degree in computer systems engineering from Monash University, Victoria, Australia, in 2007. He was a research fellow at the Institute for Infocomm Research (I2R) in Singapore from 2007 to 2008. Since 2008, he has been at The University of Adelaide, South Australia, and is now an Associate Professor. He is an Associate Editor of IPSJ Transactions on Computer Vision and Applications (CVA). His research interests include robust estimation and geometric optimisation.
\end{IEEEbiography}






\end{document}